\def\doi{9(1:04)2013}
\tikzstyle{node}        =[draw, rounded corners=8, shade,top color=green!10,bottom color=blue!10, minimum size=.5cm, ultra thin]
\tikzstyle{refinement}   =[dotted, above]
\tikzstyle{reduction} =[->]
\newcommand{\subse}{\; \subseteq \;}
\newcommand{\To}[0]{\Rightarrow}
\newcommand{\interp}[1]{[\negthinspace[#1]\negthinspace]}
\newcommand{\ot}[0]{\leftarrow}
\begin{document}

\title[A Rewriting View of Simple Typing]{A Rewriting View of Simple Typing}

\author[A.~Stump]{Aaron~Stump\rsuper a}
\address{{\lsuper{a,c,d}}Computer Science, The University of Iowa}
\email{astump@acm.org, \{garrin-kimmell,roba-elhajomar\}@uiowa.edu}

\author[H.~Zantema]{Hans~Zantema\rsuper b}
\address{{\lsuper b}Department of Computer Science, TU Eindhoven, The Netherlands; and Institute for Computing and Information Sciences, Radboud University, The Netherlands}
\email{h.zantema@tue.nl}

\author[G.~Kimmell]{Garrin~Kimmell\rsuper c}
\address{\vskip-6 pt}%Computer Science, The University of Iowa}
%\email{garrin-kimmell@uiowa.edu}

\author[R.~ El Haj Omar]{Ruba~El Haj Omar}
\address{\vskip-6 pt}%Computer Science, The University of Iowa}
%\email{roba-elhajomar@uiowa.edu}

\thanks{This work was
    partially supported by the U.S. National Science Foundation, contract
    CCF-0910510, as part of the Trellys project.}

\keywords{Term rewriting, Type safety, Confluence}
\ACMCCS{[{\bf Software and its Engineering}]: Software notations and tools---
  Formal language definitions---Syntax}
\subjclass{D.3.1}
%%%%%%%%%%%%%%%%%%%%%%%%%%%%%%%%%%%%%%%%%%%%%%%%%%%%%%%%%

%\theoremstyle{plain}
\newtheorem{theorem}[thm]{Theorem}
\newtheorem{lemma}[thm]{Lemma}
\newtheorem{corollary}[thm]{Corollary}
\newtheorem{definition}[thm]{Definition}
\newtheorem{proposition}[thm]{Proposition}
\newtheorem*{theorema}{Theorem}

\begin{abstract}
This paper shows how a recently developed view of typing as small-step
abstract reduction, due to Kuan, MacQueen, and Findler, can be used to
recast the development of simple type theory from a rewriting
perspective.  We show how standard meta-theoretic results can be
proved in a completely new way, using the rewriting view of simple
typing.  These meta-theoretic results include standard type
preservation and progress properties for simply typed lambda calculus,
as well as generalized versions where typing is taken to include both
abstract and concrete reduction.  We show how automated analysis tools
developed in the term-rewriting community can be used to help automate
the proofs for this meta-theory.  Finally, we show how to adapt a
standard proof of normalization of simply typed lambda calculus, for
the rewriting approach to typing.
\end{abstract}

\maketitle

\section{Introduction}
\label{sec:intro}

This paper develops a significant part of the theory of simple types
based on a recently introduced rewriting approach to typing.  The idea
of viewing typing as a small-step abstract reduction relation was
proposed by Kuan, MacQueen, and Findler in 2007, and explored also by
Ellison, \c{S}erb\u{a}nu\c{t}\u{a}, and
Ro\c{s}u~\cite{rosu+10,ellison+08,kuan+07}.  These works sought to use
rewrite systems to specify typing in a finer-grained way than usual
type systems.  Our motivation is more foundational: we seek to prove
standard meta-theoretic properties of type systems directly, based on
the rewriting formulation.  The goal is to develop new methods which
could provide a different perspective on familiar type systems, and
perhaps yield new results for more advanced type systems.

Our focus in this paper is simple type systems, where the central
typing construct is the function type $T \To T'$.  We will view such
types as abstractions of functions, and incrementally rewrite
(typable) functions to such function types, using an abstract
small-step reduction relation.  It will be straightforward to prove
the standard property of type safety, based on type preservation and
progress, using this rewriting formulation.  This viewpoint also
allows us to combine the usual concrete reduction relation and our new
abstract reduction relation together, simply by taking their
set-theoretic union.  We will prove that this combined reduction
relation is confluent for typable terms, defined as terms which
reduce, using abstract steps, to a type.  To prove both type preservation 
and confluence we use observations developed in the context of abstract
reduction systems.  We then develop our final main result, which is a
proof of normalization for the simply typed lambda calculus, based on
the rewriting approach.  This proof has several novel features, which
shed new light on the reducibility semantics of types used in standard
proofs of normalization.

This paper expands in several important ways on a previous paper of
Stump, Kimmell, and El Haj Omar, which was presented at RTA 2011~\cite{stump+11}:
\begin{iteMize}{$\bullet$}
\item 
We use the rewriting method to prove type preservation for full
$\beta$-reduction; the RTA '11 paper showed it only for call-by-value
computation.
\item We prove preservation for a new notion we call generalized
  typing, where concrete and abstract reduction steps can be
  intermixed.  This generalizes the so-called \emph{direct computation
  rules} of the well-known NuPRL system~\cite{allen+06}.
\item We correct an error in the RTA '11 paper, where we claimed that
  type preservation is a corollary of confluence for typable terms.
  In fact, confluence is a straightforward corollary of type preservation.
\item We have shown how a standard proof of normalization for simply
  typable terms is adapted to the rewriting approach to typing.  This
  adaptation reveals an interesting perspective on types as
  abstractions of terms.
\item Due to the amount of new material, we have dropped the treatment
  of several variants of STLC, which are studied in the RTA paper.
\end{iteMize}
As Zantema had a substantial contribution to these extensions, he was
added as an author.  

The remainder of the article is organized as
follows. Section~\ref{sec:rewriting-prelim} provides a brief
introduction to abstract reduction systems as used later in the
paper. Section~\ref{sec:std-presentation} gives a standard
presentation of the simply typed lambda calculus along with the
fundamental meta-theoretic properties. Section~\ref{sec:restlc}
recasts the simply typed lambda calculus static and operational
semantics within the framework of abstract reduction systems.
Section~\ref{sec:ars} gives some abstract reduction theory to be used
in Section~\ref{sec:presstlc} where type preservation and confluence
is proved.  Section~\ref{sec:progtpsafe} then proves progress and type
safety.  Section~\ref{sec:unistc} proves type preservation and
confluence for a system with uniform syntax for types and term.  For
this result, we use automated tools developed in the term-rewriting
community, to verify some of the properties necessary for applying
theorems proved in Section~\ref{sec:ars}.
Section~\ref{sec:genpresstlc} extends these to a generalized notion of
typing, based on the union of the concrete and abstract reduction
relations.  Section~\ref{sec:normstlc} applies a rewriting approach to
prove the normalization of well-typed simply typed lambda calculus
terms. We conclude and identify future directions in
Section~\ref{sec:conclusion}.

\section{Rewriting Preliminaries}
\label{sec:rewriting-prelim}

In this section we collect some basic properties in the setting of
abstract reduction systems.  That is, we consider relations $\to$
being a subset of $X \times X$ for some arbitrary set $X$.

We write $\cdot$ for relation composition, and inductively define
$\to^0 = id$ (the identity) and $\to^n = \to^{n-1} \cdot \to$ for $n > 0$.
As usual, for a relation $\to$ we write $\leftarrow$ for its reverse,
$\to^=$ for its reflexive closure (zero or one times),
$\to^+ = \bigcup_{i=1}^\infty \to^i$ for its transitive closure
(one or more times), and $\to^* = \bigcup_{i=0}^\infty \to^i$ for its
transitive reflexive closure (zero or more times).  We will also
use standard notation $R(A)$ for the image of set $A$ under relation $R$:
\[
R(A) = \{ a'\ |\ \exists a\in A. (a,a')\in R\}
\]
\noindent We can use this notation to denote the set of predecessors
of a set $A$ with respect to $\to$ as $\leftarrow^*(A)$.  We will also
write $\textit{Id}_A$ for $\{(a,a)\ |\ a\in A\}$.

\vspace{3mm}
\noindent A relation $\to$ is said to
\begin{iteMize}{$\bullet$}
\item be {\em confluent}
(Church Rosser, $CR(\to)$) if $\leftarrow^* \cdot \to^* \subse
\to^* \cdot \leftarrow^*$,
\item be {\em locally confluent}
(Weak Church Rosser, $WCR(\to)$) if $\leftarrow \cdot \to \subse
\to^* \cdot \leftarrow^*$,
\item have the {\em diamond property}
($\diamond(\to)$) if $\leftarrow \cdot \to \subse
\to^= \cdot \leftarrow^=$,
\item be {\em deterministic} (det$(\to)$) if $\leftarrow \cdot \to \subse id$.
\item be {\em terminating} if there is no infinite descending chain $a_1\to a_2 \to \cdots$.
\item be {\em convergent} if it is confluent and terminating.
\end{iteMize}

\noindent We will sometimes also call an element $x_1\in X$ confluent
iff for all $x_2,x_3\in X$ with $x_1\to^* x_2$ and $x_1\to^* x_3$,
there exists $x_4\in X$ with $x_2\to^* x_4$ and $x_3\to^* x_4$.  It is
well-known and easy to see that det$(\to) \Rightarrow \diamond(\to)
\Rightarrow CR(\to) \Rightarrow WCR(\to)$.

\ 

\noindent Finally, if $\to_a$ and $\to_b$ are binary relations, below
we will often write $\to_{ba}$ for $\to_a \cup \to_b$.

\section{A Standard Presentation of Simple Typing}
\label{sec:std-presentation}

In this section, we summarize a standard presentation of the simply
typed lambda calculus (STLC), including syntax and semantics, and statements
of the basic meta-theoretic properties of type preservation and
progress.  Sections~\ref{sec:restlc} and following will recapitulate
this development in detail, from the rewriting perspective.  Including
some type and term constants, together with reduction rules for them,
is very standard in the study of programming languages and typed
lambda calculus.  One example is Mitchell's treatment of STLC with
additional rules~\cite[Section 4.4.3]{M96}).  For progress, it is
indeed instructive to include reduction rules for some selected
constants.  Otherwise, there are no stuck terms that should be ruled
out by the type system, since in pure STLC, every closed normal form
is a value, namely a $\lambda$-abstraction.  We treat additional rules
representatively (as opposed to parametrically), using constants $a$
and $f$ below.

\subsection{Syntax and Semantics}
\label{sec:stlcstand}

The syntax for terms, types, and typing contexts is the following,
where $A$, $f$, and $a$ are specific constants, and $x$ ranges over a
countably infinite set of variables:
\[
\begin{array}{lll}
\textit{types}\ T & ::= & A\ |\ T_1 \To T_2 \\
\textit{standard terms}\ t & ::= & f\ |\ a\ |\ x\ |\ t_1\ t_2\ |\ \lambda x:T.t \\
\textit{typing contexts}\ \Gamma & ::= & \cdot\ |\ \Gamma, x:T
\end{array}
\]
\noindent We will write \textit{Types} for the set of all types.  We
assume standard additional conventions and notations, such as
$[t/x]t'$ for the capture-avoiding substitution of $t$ for $x$ in
$t'$, and $E[t]$ for grafting a term into an evaluation context.
Figure~\ref{fig:stlc} defines a standard type system for STLC.  The
judgments derived by the rules in the figure are of the form
$\Gamma\vdash t:T$, which can be viewed as deterministically computing
a type $T$ as output, given a term $t$ and a typing context $\Gamma$
as inputs.  In the topmost leftmost rule of the Figure, we use the
notation $\Gamma(x) = T$ to mean that there is a binding $x:T$ in
$\Gamma$.  We assume there is at most one such binding in $\Gamma$,
renaming bound variables as necessary to ensure this.  A standard
small-step reduction semantics, for unrestricted $\beta$-reduction, is
defined using the rules of Figure~\ref{fig:stlcopsem}.  Following
standard usage, terms of the form $(\lambda x:T. t)\ t'$ or $f\ a$ are
called redexes.  An example of a concrete reduction is (with redexes
underlined):
\[
\underline{(\lambda x:(A\to A).x\ (x\ a))\ f}\ \to\ f\ \underline{(f\ a)}\ \to\ \underline{f\ a}\ \to\ a
\]

\begin{figure}
\[
\begin{array}{lllll}
\infer{\Gamma\vdash x : T}{\Gamma(x) = T}

&\ \ \ &

\infer{\Gamma\vdash f : A\To A}{\ }

&\ \ \ &

\infer{\Gamma\vdash a : A}{\ }

\\ \\

\infer{\Gamma\vdash t_1\ t_2 : T_1}
      {\Gamma\vdash t_1 : T_2 \To T_1 &
       \Gamma\vdash t_2 : T_2}

&\ \ \ &

\infer{\Gamma\vdash \lambda x : T_1.\,t : T_1 \To T_2}
      {\Gamma,x:T_1\vdash t:T_2}

&\ \ \ & \

\end{array}
\]
\caption{Type-computation rules for STLC with selected constants}
\label{fig:stlc}
\end{figure}

\begin{figure}
\[
\begin{array}{ll}
\begin{array}{l}
\infer{E[(\lambda x:T.\,t)\ t']\ \to\ E[[t'/x]t]}{\ } \\ \\
\infer{E[f\ a]\ \to\ E[a]}{\ }
\end{array}

&

\begin{array}{rll}
\textit{values}\ v & ::= & \lambda x:T. t \ |\ a\ |\ f \\
\textit{evaluation contexts}\ E & ::= & *\ |\ (E\ t)\ |\ (t\ E)\ |\ \lambda x:T.\, E\ \\
\ \\
\end{array}
\end{array}
\]
\caption{Small-step reduction semantics for STLC}
\label{fig:stlcopsem}
\end{figure}

\subsection{Basic Meta-theory}
\label{sec:basicmeta}

The main theorem relating the reduction relation $\to$ and typing is
\textbf{type preservation}, which states the following, either for
unrestricted $\beta$-reduction $\to$ or for some restriction of $\to$
(as we will consider below):
\[
(\Gamma\vdash t:T\ \ \wedge\ \ t\, \to\, t')\ \ \To\ \ \Gamma \vdash t':T
\]

\noindent The standard proof method is to proceed by induction on the
structure of the typing derivation, with case analysis on the
reduction derivation (cf. Chapters 8 and 9 of~\cite{pierce02}).  A separate
induction is required to prove a substitution lemma, needed critically
for type preservation for $\beta$-reduction steps:
\[
\Gamma\vdash t : T\ \ \wedge\ \ \Gamma, x:T \vdash t':T'\ \ \To\ \ \Gamma \vdash[t/x]t':T'
\]

\noindent For call-by-value programming languages, one also typically
proves \textbf{progress}, formulated in terms of values:
\[
( \cdot \vdash t:T\ \ \wedge\ \ t\,\not\to)\ \ \To\ \ t \in\textit{values}
\]
\noindent Here, the notation $t\not\to$ means $\forall
t'.\ \neg(t\,\to\, t')$; i.e., $t$ is a normal form.  Normal forms
which are not values are called \emph{stuck} terms.  An example is
$f\ f$.  Combining type preservation and progress allows us to
prove \textbf{type safety}~\cite{wright+94}.  This property states
that the normal forms of closed well-typed terms are values, not stuck
terms, and in our setting can be stated:
\[
(\cdot \vdash t:T\ \ \wedge\ \ t\, \to^*\, t'\,\not\to)\ \ \To\ \ \exists v.\ t' = v
\]
\noindent This is proved by induction on the length of the reduction
sequence from $t$ to $t'$.  As already noted, without constants ($f$
and $a$ here), this result is not so interesting for STLC, since it
follows already by simpler reasoning: reduction cannot introduce new
free variables, so $t'$ must be closed; and it is then easy to prove
that closed normal forms are $\lambda$-abstractions, and hence values
by definition.

\section{Simple Typing as Abstract Reduction}
\label{sec:restlc}

In this section, we see how to view a type-computation (also called
type-synthesis) system for STLC as an abstract operational semantics.
We view function types $T_1\To T_2$ as abstract functions from $T_1$
to $T_2$, and allow these to be applied to arguments.  When $T_1\To
T_2$ is applied to the abstract term $T_1$, an abstract
$\beta$-reduction step is possible, simulating concrete
$\beta$-reduction for any function of type $T_1\To T_2$ applied to an
argument of type $T_1$.  Thus, we will see abstract reduction as truly
an abstraction of the usual reduction, which we thus view, in
contrast, as concrete.

\begin{figure}
\[
\begin{array}{lll}
\textit{types}\ T & ::= & A\ |\ T_1\To T_2\\
\textit{standard terms}\ t & ::= & x\ |\ \lambda x:T.\,t\ |\ t\ t'\ |\ a\ |\ f \\
\textit{mixed terms}\ m & ::= & x\ |\ \lambda x:T.\,m\ |\ m\ m'\ |\ a\ |\ f\ | \\
\ &\ & A\ |\ T\To m \\
\textit{standard values}\ v & ::= & \lambda x:T.t\ |\ a\ |\ f \\
\textit{mixed values}\ u & ::= & \lambda x:T.m\ |\ T\To m\ |\ A\ |\ a\ |\ f
\end{array}
\]
\caption{Syntax for STLC using mixed terms}
\label{fig:synrestlc}
\end{figure}

\begin{figure}
\[
\begin{array}{l}
\begin{array}{lll}
\infer[\textit{c}(\textit{f-}\beta)]{E_c[f\ a]\ \to_c\ E_c[a]}{\ }
&\ &
\infer[\textit{c}(\beta)]{E_c[(\lambda x:T.\,m)\ u]\ \to_c\ E_c[[u/x]m]}{\ }
\\ \\
\infer[\textit{b}(\textit{f-}\beta)]{E_a[f\ a]\ \to_b\ E_a[a]}{\ }
&\ &
\infer[\textit{b}(\beta)]{E_a[(\lambda x:T.\,m)\ m']\ \to_b\
E_a[[m'/x]m]}{\ }
\\ \\
\infer[\textit{a}(\beta)]{E_a[(T \To m)\ T]\ \to_a\ E_a[m]}{\ }
&\ &\
\infer[\textit{a}(\lambda)]{E_a[\lambda x:T.\, m]\ \to_a\ E_a[T\To [T/x]m]}{\ }
\\ \\
\infer[\textit{a}(f)]{E_a[f]\ \to_a\ E_a[A\To A]}{\ }
&\ &\
\infer[\textit{a}(a)]{E_a[a]\ \to_a\ E_a[A]}{\ }

\end{array}
\\ \\ \\

\begin{array}{lll}
\textit{call-by-value evaluation contexts}\ E_c & ::= & *\ |\ (E_c\ m)\ |\ (u\ E_c) \\
\textit{unrestricted evaluation contexts}\ E_a & ::= & *\ |\ (E_a\ m)\ |\ (m\ E_a)\ |\ \lambda x:T.\, E_a\ |\ T\To E_a \\
\end{array}
\end{array}
\]
\caption{Concrete call-by-value reduction ($\to_c$), concrete full $\beta$-reduction ($\to_b$), and abstract reduction ($\to_a$) for STLC}
\label{fig:restlc}
\end{figure}

To view typing as an abstract form of reduction, we use mixed terms,
defined in Figure~\ref{fig:synrestlc}.  Types like $T_1\To T_2$ will
serve as abstractions of $\lambda$-abstractions.  For our development
below, we are going to consider both unrestricted $\beta$-reduction,
and also call-by-value $\beta$-reduction, a common restriction
implemented in practical functional programming languages like
\textsc{OCaml}.  Figure~\ref{fig:restlc} gives rules for concrete
call-by-value reduction ($\to_c$), concrete full $\beta$-reduction
($\to_b$), and abstract reduction ($\to_a$).  As above, we will refer
to any term of the form displayed in context on the left hand side of
the conclusion of a rule as a redex. We denote the union of these
reduction relations as $\to_{ca}$.  The definition of call-by-value
evaluation contexts $E_c$ enforces left-to-right evaluation order in a
standard way, while unrestricted evaluation contexts $E_a$ make
abstract reduction and full $\beta$-reduction non-deterministic:
reduction is allowed anywhere inside a term.  This is different from
the approach followed by Kuan et al., where abstract and concrete
reduction are both deterministic.  Here is an example of reduction using
the abstract operational semantics:
\[
\begin{array}{l}
\lambda x:(A\To A).\,\lambda y:A.\, (x\ (x\ y)) \ \to_a \\
\lambda x:(A\To A).\, A\ \To\ (x\ (x\ A)) \ \to_a \\
(A\To A)\ \To\ A\ \To ((A\To A)\ ((A\To A)\ A)) \ \to_a \\
(A\To A)\ \To\ A\ \To ((A\To A)\ A) \ \to_a \\
(A\To A)\ \To\ A\ \To A\ \not\to_a
\end{array}
\]
\noindent The final result is a type $T$, which does not reduce (as
noted below).  Indeed, using the standard typing rules of
Section~\ref{sec:stlcstand}, we can prove that the starting term of
this reduction has that type $T$, in the empty typing context.
Abstract reduction to a type plays the role of typing above.

\begin{lem}
\label{lem:tpnorm}
For all types $T$, we have $T\not\to_a$.
\end{lem}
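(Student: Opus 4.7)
The plan is a straightforward structural induction on the type $T$, exploiting the syntactic disjointness between the shape of types and the shape of abstract redexes. The key observation is that types are built exclusively from the base type $A$ and the arrow constructor $\To$; they contain no $\lambda$-abstractions, no applications, and no occurrences of the constants $a$ or $f$. Meanwhile, each of the four abstract redex forms appearing on the left-hand sides of $a(\beta)$, $a(\lambda)$, $a(f)$, and $a(a)$ is headed by an application, a $\lambda$, or one of the constants $f,a$.

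To reduce, $T$ must decompose as $E_a[r]$ for some redex $r$ and some abstract evaluation context $E_a$. I would proceed by case analysis on $E_a$ based on its grammar $E_a ::= * \mid (E_a\ m) \mid (m\ E_a) \mid \lambda x{:}T'.\, E_a \mid T' \To E_a$. In the case $E_a = *$, the term $T$ would itself have to be one of the redex forms, which is impossible by inspection of the type grammar. The contexts $(E_a\ m)$ and $(m\ E_a)$ force the root constructor of $T$ to be an application, and $\lambda x{:}T'.\,E_a$ forces it to be a $\lambda$; none of these match a type whose root is either $A$ or $\To$.

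The only remaining case is $E_a = T' \To E_a'$, which forces $T = T' \To E_a'[r]$. Since $T$ is a type, it must have the form $T_1 \To T_2$ with $T_1 = T'$ and $T_2 = E_a'[r]$, so $T_2$ would be a type that reduces via $\to_a$. This contradicts the induction hypothesis applied to the strictly smaller type $T_2$.

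I do not anticipate any real obstacle here: the result is essentially a grammatical observation, and the induction is needed only to handle the one recursive evaluation context $T' \To E_a'$. No substitution or capture-avoidance subtleties arise, and no nontrivial properties of $\to_a$ are needed beyond the shapes of the four redex patterns and the five forms of abstract evaluation context.
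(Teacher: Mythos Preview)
Your proposal is correct and follows exactly the approach the paper sketches: structural induction on $T$ together with inspection of the abstract reduction rules (and, implicitly, the evaluation-context grammar). You have simply spelled out in detail the case analysis that the paper's one-line proof leaves to the reader.
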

\begin{proof} This follows by induction on $T$ and inspection of the rules for $\to_a$.
\end{proof}

If we look back at our standard typing rules (Figure~\ref{fig:stlc}),
we can now see them as essentially big-step abstract operational
rules.  Recall that big-step call-by-value operational semantics for
STLC includes this rule (as well as several others which we elide):
\[
\infer{t_1\ t_2\ \Downarrow\ t'}{t_1\ \Downarrow\ \lambda x:T.t_1' & t_2\ \Downarrow\ t_2' & [t_2'/x]t_1'\ \Downarrow\ t'}
\]
\noindent In our setting, big-step call-by-value semantics would be
seen as a concrete big-step reduction, which we might denote
$\Downarrow_c$.  The abstract version of this rule, where we abstract
$\lambda$-abstractions by arrow-types, is
\[
\infer{t_1\ t_2\ \Downarrow_a\ T'}{t_1\ \Downarrow_a\ T\To T' & t_2\ \Downarrow_a\ T}
\]
\noindent If we drop the typing context from the standard typing rule
for applications (in Figure~\ref{fig:stlc}), we obtain essentially
the same rule.

The standard approach to proving type preservation relates a
small-step concrete operational semantics with a big-step abstract
operational semantics (i.e., the standard typing relation).  We find
it both more elegant, and arguably more informative to relate abstract
and concrete small-step relations, as we will do in Section~\ref{sec:presstlc} below.

\subsection{Rewriting Properties of Abstract Reduction}

In this subsection, we study the properties of abstract reduction from
the perspective of the theory of abstract reduction systems (ARSs).
From this point of view, abstract reduction is very well behaved: it
is a convergent ARS, as the following two theorems show.

\begin{thm}[Termination of Abstract Reduction]
\label{thm:termabstr}
The relation $\to_a$ is terminating.
\end{thm}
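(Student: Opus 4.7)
The plan is to exhibit a well-founded measure on mixed terms that strictly decreases under every $\to_a$ step. A convenient choice is the lexicographic triple
\[
\mu(m) \;=\; \bigl(\,\ell(m),\ c(m),\ |m|\,\bigr),
\]
where $\ell(m)$ counts the occurrences of $\lambda$-abstractions in $m$, $c(m)$ counts the occurrences of the constants $f$ and $a$ in $m$, and $|m|$ is the syntactic size of $m$. The lexicographic order on $\mathbb{N}^3$ is well-founded, so it suffices to verify that each of the four rules $\textit{a}(\beta)$, $\textit{a}(\lambda)$, $\textit{a}(f)$, $\textit{a}(a)$ strictly decreases $\mu$ when applied at the root; since all three components are additive over subterms, the same strict drop persists when the rule is fired inside an arbitrary context $E_a$.

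The driving structural observation is that types are built only from $A$ and $\To$, so $\ell(T) = 0 = c(T)$ for every type $T$. This makes the substitution appearing in the contractum of $\textit{a}(\lambda)$ harmless at the top two components: $\ell([T/x]m) = \ell(m)$ and $c([T/x]m) = c(m)$, even if $x$ occurs many times in $m$. The four cases then fall out routinely. Rule $\textit{a}(\beta)$ rewrites $(T \To m)\ T$ to $m$, preserving both $\ell$ and $c$ while strictly shrinking the size. Rule $\textit{a}(\lambda)$ rewrites $\lambda x{:}T.\,m$ to $T \To [T/x]m$, dropping $\ell$ by exactly one and leaving $c$ unchanged, which suffices regardless of any growth in $|m|$. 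Rules $\textit{a}(f)$ and $\textit{a}(a)$ replace a constant by a type ($A \To A$ or $A$), preserving $\ell$ and strictly decreasing $c$ by one.

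The only case calling for genuine thought is $\textit{a}(\lambda)$, where a naive size measure would fail because $x$ may be duplicated in $[T/x]m$. Placing $\ell$ strictly above $|m|$ in the lexicographic order absorbs this potential blow-up, and the absence of $\lambda$, $f$, and $a$ in types ensures that none of the higher-priority components can be damaged by the substitution. Termination of $\to_a$ is then immediate from the well-foundedness of the lexicographic order on $\mathbb{N}^3$.
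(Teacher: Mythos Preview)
Your argument is correct. You exhibit a well-founded measure that strictly decreases under every $\to_a$ step, and the verification of each rule is sound; in particular your observation that types contain no $\lambda$'s, $f$'s, or $a$'s is exactly what is needed to handle the substitution $[T/x]m$ in $\textit{a}(\lambda)$.

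The paper follows the same overall strategy (a decreasing measure into a well-founded order) but with a simpler measure: a single natural number $\mu$ defined by $\mu(x)=\mu(a)=\mu(f)=1$, $\mu(A)=0$, $\mu(T\To m)=\mu(m)$, $\mu(\lambda x{:}T.m)=1+\mu(m)$, and $\mu(m\,m')=1+\mu(m)+\mu(m')$. The trick that makes this work is assigning every type measure~$0$ while every variable has measure~$1$, so $\mu([T/x]m)\le\mu(m)$ outright and the potential size blow-up you worried about simply cannot happen in this measure. Your lexicographic triple sidesteps the same issue by placing $\ell$ above $|m|$, which is perfectly valid but a bit heavier: you need three components and a lexicographic argument where one scalar suffices. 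On the other hand, your decomposition makes the role of each rule very transparent ($\textit{a}(\lambda)$ kills a $\lambda$, the constant rules kill a constant, $\textit{a}(\beta)$ shrinks size), which is pedagogically pleasant.
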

\begin{proof} We recursively define a natural-number measure $\mu(m)$ which can be confirmed to reduce from $m$ to $m'$ 
whenever $m\to_a m'$:
\begin{eqnarray*}
\mu(x) & = & 1 \\
\mu(\lambda x:T.m) & = & 1+\mu(m) \\
\mu(m\ m') & = & 1+\mu(m)+\mu(m') \\
\mu(a) & = & 1 \\
\mu(f) & = & 1 \\
\mu(A) & = & 0 \\
\mu(T \To m) & = & \mu(m)
\end{eqnarray*}
\end{proof}

\begin{thm}
\label{lem:confl}
The relation $\to_a$ is confluent.
\end{thm}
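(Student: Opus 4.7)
The plan is to apply Newman's Lemma: since $\to_a$ is terminating by Theorem~\ref{thm:termabstr}, confluence follows once local confluence is established. So I would focus on showing that whenever $m \to_a m_1$ and $m \to_a m_2$, the terms $m_1$ and $m_2$ have a common $\to_a$-reduct.

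I would prove local confluence by case analysis on the relative positions of the two contracted redexes. If the positions are disjoint, each redex survives contraction of the other and the pair is joined in one step on each side. The only way a redex can lie strictly inside another is when the outer redex is an $a(\beta)$ redex $(T \To m)\ T$ or an $a(\lambda)$ redex $\lambda x:T.\,m$, with the inner redex lying within the body $m$: the redexes $a$ and $f$ have no proper subterms, and types are $\to_a$-normal by Lemma~\ref{lem:tpnorm}, ruling out inner redexes in the type components. The $a(\beta)$ case is immediate, because the body $m$ is transported unchanged through the outer step, and the inner redex can simply be contracted in the resulting $m$.

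The $a(\lambda)$ case rests on a substitution lemma: if $m \to_a m'$, then $[T/x]m \to_a [T/x]m'$. This lemma holds because $T$ is closed and $\to_a$-normal, so substituting $T$ for $x$ preserves the shape of each of the four redex forms (variables never head a redex) and commutes with their reducts (using that substitutions with closed types commute). With the lemma in hand, the critical diagram for $a(\lambda)$ against an inner step closes with both sides reaching $T \To [T/x]m'$: on one branch via the outer $a(\lambda)$ followed by one $\to_a$-step underneath $T \To \cdot$, on the other via the inner step followed by the outer $a(\lambda)$.

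I expect the main obstacle to be precisely this $a(\lambda)$ case, since it is where the proof does more than parallel-position bookkeeping: the substitution built into the $a(\lambda)$ rule creates an asymmetry between reducing first and substituting later versus substituting first and reducing later, and it is the substitution lemma, backed by the closed-and-normal status of types, that restores the symmetry.
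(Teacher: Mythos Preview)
Your argument is correct, but it is organized differently from the paper's. The paper proves the \emph{diamond property} of $\to_a$ directly (hence confluence without appealing to termination), arguing that no critical overlap is possible and that the $a$-rules neither duplicate nor delete redexes, since the only subterm duplicated or erased is a type $T$, which is $\to_a$-normal. Your proof instead invokes Newman's Lemma and reduces to local confluence. In fact, your own case analysis already yields the diamond property: in the disjoint case and in both nested cases you join $m_1$ and $m_2$ with exactly one $\to_a$-step on each side, so the detour through termination is unnecessary. Where your write-up improves on the paper is in isolating the substitution lemma ($m \to_a m'$ implies $[T/x]m \to_a [T/x]m'$) for the $a(\lambda)$ case; the paper handles this step more informally by speaking of ``exactly one descendant'' of the inner redex, which tacitly relies on the same fact. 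One minor omission: you do not explicitly treat the case of identical redex positions, but this is trivial here since the four $a$-rules have non-overlapping left-hand sides, so $m_1 = m_2$.
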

\begin{proof}
In fact, we will prove $\to_a$ has the diamond property (and hence is
confluent).  Suppose $m \to_a m_1$ and $m\to_a m_2$.  No critical
overlap is possible between these steps, because none of the redexes
in the $a$-rules of Figure~\ref{fig:restlc} (such as $(T\To m)\ T$ in
the $a(\beta)$ rule) can critically overlap another such redex.  If
the positions of the redexes in the terms are parallel, then (as
usual) we can join $m_1$ and $m_2$ by applying to each the reduction
required to obtain the other.  Finally, we must consider the case of
non-critical overlap (where the position of one redex in $m$ is a
prefix of the other position).  We can also join $m_1$ and $m_2$ in
this case by applying the reduction to $m_i$ which was used in $m
\to_a m_{3-i}$, because abstract reduction cannot duplicate or delete
an $a$-redex.  The only duplication of any subterm in the abstract
reduction rules of Figure~\ref{fig:restlc} is of the type $T$ in
$\textit{a}(\lambda)$.  The only deletion possible is of the type $T$
in $\textit{a}(\beta)$.  Since types cannot contain redexes, there is
no duplication or deletion of redexes.  This means that if the
position of the first redex is a prefix of the second (say), then
there is exactly one descendant (see Section 4.2 of~\cite{terese}) of
the second redex in $m_1$, and this can be reduced in one step to join
$m_1$ with the reduct of $m_2$ obtained by reducing the first redex.
So every \textit{aa}-peak can be completed with one joining step on
each side of the diagram.  This gives the diamond property (and thus
confluence for $\to_a$).
\end{proof}

\subsection{Relation with Standard Typing}

In this subsection, we prove the following theorem, which relates our
notion of typing with the standard one.  The proof begins after the
statement of some simple auxiliary lemmas, whose proofs are routine
and omitted.  The proof of the right-to-left direction of the
implication will take advantage of the fact that abstract reduction is
convergent, as proved in the previous subsection.

\begin{thm}
\label{thm:relatetyp}
For standard terms $t$, a typing judgment $x_1:T_1,\cdots,x_n:T_n \vdash t:T$ holds iff $[T_1/x_1,\cdots,T_n/x_n]t \to_a^* T$.
\end{thm}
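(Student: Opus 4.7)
The plan is to prove both directions by induction on the structure of the standard term $t$, using the convergence of $\to_a$ established in Theorems~\ref{thm:termabstr} and~\ref{lem:confl}, together with Lemma~\ref{lem:tpnorm}, to carry through the right-to-left direction.

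For the forward direction I would proceed by a routine induction on the typing derivation (equivalently, on $t$). The variable case is immediate from the definition of substitution; the constant cases $f$ and $a$ each take one step via $a(f)$ or $a(a)$. For an application $t_1\ t_2$ with $t_1 : T_2\To T_1$ and $t_2 : T_2$, the two inductive hypotheses together with the closure of $\to_a^*$ under the evaluation contexts $E_a$ give
\[
  [T_1/x_1,\ldots,T_n/x_n](t_1\ t_2) \;\to_a^*\; (T_2 \To T_1)\ T_2 \;\to_a\; T_1,
\]
where the last step is $a(\beta)$. For an abstraction $\lambda x:T_1.\,t'$, I rename $x$ to avoid the $x_i$, push the substitution under the binder, apply $a(\lambda)$ to reach $T_1 \To [T_1/x][\ldots]t'$, and then reduce the body to $T_2$ via the inductive hypothesis at the extended context $\Gamma, x:T_1$.

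For the right-to-left direction I again induct on $t$, with confluence and termination providing the structural information needed. Variables and constants are immediate, since Lemma~\ref{lem:tpnorm} pins down the terminal type (for a constant, only $a(f)$ or $a(a)$ can fire at the root). For an application $t_1\ t_2$ with $[\ldots](t_1\ t_2)\to_a^* T$, convergence of $\to_a$ lets me reorganize the reduction so that each $[\ldots]t_i$ is first brought to its unique normal form $m_i'$, giving $m_1'\ m_2' \to_a^* T$. Since $T$ is normal and $m_1'\ m_2'$ is not, the next step must be $a(\beta)$ at the root, which forces $m_1' = T_2' \To T$ and $m_2' = T_2'$ for some type $T_2'$. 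The inductive hypotheses then supply $\Gamma\vdash t_1 : T_2'\To T$ and $\Gamma\vdash t_2 : T_2'$, and the application rule yields the goal. The abstraction case is analogous: the first top-level step affecting the $\lambda$ must be $a(\lambda)$, producing $T_1 \To [T_1/x] m'$ which then reduces to $T$; since reductions under $T_1\To *$ preserve the outer arrow, $T$ must have the form $T_1 \To T_2$, and the inductive hypothesis applies at the extended context after the routine observation that $[T_1/x,T_1/x_1,\ldots,T_n/x_n]t' \to_a^* T_2$.

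The main obstacle is the backward application case, where I must argue not merely that the normal form of $[\ldots]t_1$ is some arrow type, but that it is an arrow type whose domain matches the normal form of $[\ldots]t_2$. Theorem~\ref{lem:confl} is precisely what makes this work: uniqueness of normal forms removes any ambiguity in the decomposition $m_1' = T_2'\To T$, and confluence is also what justifies reorganizing the reduction in the first place. A small technicality is that the substituted types are themselves normal (Lemma~\ref{lem:tpnorm}), so type substitution commutes cleanly with $\to_a$ and does not interfere with the inductive arguments.
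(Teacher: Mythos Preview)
Your proof is correct, and your forward direction matches the paper's. For the backward direction, however, you and the paper use different induction schemes: the paper inducts on the number of steps in a leftmost $\to_a$-reduction sequence (using convergence to justify assuming leftmost order), then case-splits on the form of $t$, whereas you induct structurally on $t$ and use convergence to normalize the subterms first. Both arguments hinge on the same convergence facts (Theorems~\ref{thm:termabstr} and~\ref{lem:confl}) and on Lemma~\ref{lem:tpnorm}, and both reach the same crucial observation in the application case: once the two operands are in normal form, the only remaining step is a root $a(\beta)$, which pins down the domain/codomain structure. Your structural induction is arguably cleaner, since the inductive hypothesis lines up directly with the premises of the typing rules and you avoid tracking reduction lengths; the paper's version makes the shape of the reduction sequence a bit more explicit at the cost of a slightly heavier induction measure.
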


\begin{lem}
\label{lem:lemma1}
If $t_1 \to_a^k T$ , then
$t_1\ t_2 \to_a^k T\ t_2$.
\end{lem}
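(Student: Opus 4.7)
The plan is to prove this by induction on the length $k$ of the reduction sequence, exploiting the crucial fact that the unrestricted evaluation contexts $E_a$ defined in Figure~\ref{fig:restlc} are closed under extension on the left of an application: namely, if $E_a$ is an unrestricted evaluation context, then so is $(E_a\ m)$. This closure property is exactly what is needed to lift a reduction step on $t_1$ to a reduction step on $t_1\ t_2$.

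For the base case $k = 0$, the hypothesis $t_1 \to_a^0 T$ forces $t_1 = T$, and then $t_1\ t_2 = T\ t_2 \to_a^0 T\ t_2$ trivially.

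For the inductive step, assume the result for $k-1$ and suppose $t_1 \to_a t_1' \to_a^{k-1} T$. The single step $t_1 \to_a t_1'$ is obtained by one of the $a$-rules of Figure~\ref{fig:restlc} applied inside some context $E_a$, so $t_1 = E_a[r]$ and $t_1' = E_a[r']$ for a redex/contractum pair $r,r'$. Taking the larger context $E_a' = (E_a\ t_2)$, which by the grammar of Figure~\ref{fig:restlc} is still an unrestricted evaluation context, we get $t_1\ t_2 = E_a'[r] \to_a E_a'[r'] = t_1'\ t_2$ by the same rule. Applying the induction hypothesis to $t_1' \to_a^{k-1} T$ yields $t_1'\ t_2 \to_a^{k-1} T\ t_2$, and concatenating gives $t_1\ t_2 \to_a^k T\ t_2$ as desired.

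There is no real obstacle here: the lemma is a straightforward congruence property for abstract reduction on the function side of an application, made immediate by the permissive shape of $E_a$. The only point worth noting is that the analogous statement would \emph{fail} for $\to_c$ with a non-value $t_1$, because the call-by-value context $E_c$ only admits the application form $(E_c\ m)$ but permits $(u\ E_c)$ just for values $u$; this is why the lemma is stated specifically for $\to_a$.
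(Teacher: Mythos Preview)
Your proof is correct and is exactly the routine argument the paper has in mind; the paper explicitly omits the proofs of Lemmas~\ref{lem:lemma1}--\ref{lem:lemma3} as ``routine,'' and your induction on $k$ together with the closure of $E_a$ under $(E_a\ m)$ is the standard way to carry it out.

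One small inaccuracy in your concluding remark: the analogous statement for $\to_c$ would \emph{not} fail. The call-by-value context grammar includes $(E_c\ m)$ unconditionally, so a $\to_c$ step in the function position of an application can always be lifted, regardless of whether $t_1$ is a value. The restriction to values $u$ in the clause $(u\ E_c)$ affects reduction in the \emph{argument} position, so it is Lemma~\ref{lem:lemma2} whose $\to_c$ analogue fails (when the function part is not yet a value), not Lemma~\ref{lem:lemma1}. This does not affect your proof of the lemma itself, which is sound.
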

\begin{lem}
\label{lem:lemma2}
If $t_2 \to_a^k T$ , then
$t_1\ t_2 \to_a^k t_1\ T$.
\end{lem}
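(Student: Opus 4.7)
The plan is to prove Lemma \ref{lem:lemma2} by a straightforward induction on $k$, exploiting the form of the unrestricted evaluation contexts $E_a$ defined in Figure~\ref{fig:restlc}.

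For the base case $k = 0$, the hypothesis $t_2 \to_a^0 T$ means $t_2 = T$, so $t_1\ t_2$ and $t_1\ T$ are literally the same term, and $t_1\ t_2 \to_a^0 t_1\ T$ holds trivially by reflexivity.

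For the inductive step, I would write $t_2 \to_a t_2' \to_a^{k-1} T$ and first establish the one-step congruence $t_1\ t_2 \to_a t_1\ t_2'$. The key observation is that every abstract reduction rule in Figure~\ref{fig:restlc} is of the form $E_a[\mathit{lhs}] \to_a E_a[\mathit{rhs}]$ for an unrestricted evaluation context $E_a$, and the grammar for $E_a$ includes the production $(m\ E_a)$. Thus if $t_2 = E_a[\mathit{lhs}]$ and $t_2' = E_a[\mathit{rhs}]$ for some rule, then $t_1\ t_2 = (t_1\ E_a)[\mathit{lhs}]$ and $t_1\ t_2' = (t_1\ E_a)[\mathit{rhs}]$, where $t_1\ E_a$ is still an unrestricted evaluation context, so the same rule applies to witness $t_1\ t_2 \to_a t_1\ t_2'$. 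Then by the induction hypothesis applied to $t_2' \to_a^{k-1} T$, we get $t_1\ t_2' \to_a^{k-1} t_1\ T$, and composing yields $t_1\ t_2 \to_a^k t_1\ T$.

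There is no real obstacle here; the only thing to check carefully is that abstract reduction is indeed defined uniformly as closure under arbitrary $E_a$-contexts, so that plugging an additional outer application frame $(t_1\ \cdot)$ preserves the reduction. This is immediate from the grammar of $E_a$. The lemma is essentially the statement that $\to_a$ is a congruence with respect to the right argument of application, which is built into the context-closed formulation of the rules. The symmetric Lemma \ref{lem:lemma1} is proved the same way using the production $(E_a\ m)$ in the grammar of $E_a$.
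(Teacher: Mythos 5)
Your proof is correct; the paper omits the proof of this lemma as ``routine,'' and your induction on $k$ using closure of the $a$-rules under the context production $(m\ E_a)$ is exactly the routine argument intended. One cosmetic caveat: after one step $t_2'$ is in general a mixed term, so the statement and induction hypothesis must be read with $t_1, t_2$ ranging over mixed terms rather than standard terms --- which is anyway how the paper uses the lemma, applying it to terms of the form $\Gamma_{sub}\,t$.
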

\begin{lem}
\label{lem:lemma3}
If $t \to_a^k T'$ , then $T \To t \to_a^k T \To T'$.
\end{lem}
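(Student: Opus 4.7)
The plan is to prove this congruence lemma by induction on $k$, reducing it to a one-step congruence statement: if $t \to_a t'$, then $T \To t \to_a T \To t'$. The base case $k = 0$ is trivial, since $T \To t = T \To T'$ whenever $t = T'$. For the inductive step, suppose $t \to_a^{k-1} t'' \to_a T'$. By the induction hypothesis applied to $t \to_a^{k-1} t''$, we have $T \To t \to_a^{k-1} T \To t''$, and it remains to show $T \To t'' \to_a T \To T'$, which is precisely the one-step congruence.

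The one-step congruence falls directly out of the grammar of unrestricted evaluation contexts in Figure~\ref{fig:restlc}, namely the clause $E_a ::= \cdots \mid T \To E_a$. Concretely, if $t'' \to_a T'$, then by definition of $\to_a$ there is some $E_a$ and some $a$-redex $r$ with contractum $r'$ such that $t'' = E_a[r]$ and $T' = E_a[r']$. Now consider the context $E_a' := T \To E_a$. By the grammar above, $E_a'$ is itself an unrestricted evaluation context, so the same underlying $a$-rule applies and yields $E_a'[r] \to_a E_a'[r']$, i.e., $T \To E_a[r] \to_a T \To E_a[r']$, which is exactly $T \To t'' \to_a T \To T'$.

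The main thing to be careful about is that none of the abstract reduction rules constrain their surrounding context in a way that could be violated once we wrap the term in $T \To (\cdot)$: looking at the $a$-rules in Figure~\ref{fig:restlc}, each is stated uniformly with an arbitrary $E_a$ in the conclusion, so inserting an extra outer $T \To$ layer is legal. There is no real obstacle here; the lemma is essentially a syntactic observation that $T \To (\cdot)$ is a valid one-hole $E_a$-context, and everything else is bookkeeping over $k$.
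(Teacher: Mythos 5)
The paper gives no proof to compare against: Lemmas~\ref{lem:lemma1}--\ref{lem:lemma3} are stated with their ``proofs routine and omitted,'' and your argument---that $T \To E_a$ is itself a legal unrestricted evaluation context, so one-step reduction lifts under $T \To (\cdot)$, then induction on $k$---is exactly the routine argument the paper intends. One small formal slip: you decompose the sequence as $t \to_a^{k-1} t'' \to_a T'$ and apply the induction hypothesis to $t \to_a^{k-1} t''$, but $t''$ need not be a type, so the lemma as literally stated (with target a type $T'$) does not furnish that hypothesis; either induct on the generalized statement with an arbitrary mixed term as target (which your one-step congruence already covers, since it is stated for arbitrary $t'$), or peel the \emph{first} step instead, writing $t \to_a t_1 \to_a^{k-1} T'$, so the induction hypothesis is applied to a reduction that does end in the type $T'$. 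Either repair is immediate, so the proof is sound in substance.
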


\proof[Proof of Theorem~\ref{thm:relatetyp}, left-to-right]
Suppose $x_1:T_1,\cdots,x_n:T_n \vdash t:T$.
We will now prove $[T_1/x_1,\cdots,T_n/x_n]t \to_a^* T$ by
induction on the structure of the typing derivation of $t$.
To simplify the writing of the proof, we will use the
following notation:
\[
\begin{array}{lll}
\Gamma & = & x_1:T_1,\cdots,x_n:T_n \\
\Gamma_{sub} & = & [T_1/x_1,\cdots,T_n/x_n]
\end{array}
\]

\

\noindent \textbf{Base Case:}
\[
\infer{\Gamma\vdash x : T}{\Gamma(x) = T}
\]
\noindent There must be some $i\in\{1,\ldots,n\}$ such that $x = x_i$ and $T = T_i$.
So $\Gamma_{sub}\ x = T_i \to_a^* T_i$ as required.

\

\noindent\textbf{Base Case:}
\[
\infer{\Gamma\vdash f : A\To A}{\ }
\]
\noindent We indeed have $f\to_a (A\To A)$, as required.  The case for
$a:A$ is similar.

\

\noindent \textbf{Case:}
\[
\infer{\Gamma\vdash t_1\ t_2 : T_1}
      {\Gamma\vdash t_1 : T_2 \To T_1 &
       \Gamma\vdash t_2 : T_2}
\]
\noindent By the induction hypotheses for the derivations given for
the two premises of this rule, we have:
\[
\begin{array}{l}
\Gamma_{sub}\ t_1 \to_a^* T_2 \To T_1 \\
\Gamma_{sub}\ t_2 \to_a^* T_2
\end{array}
\]
\noindent Our goal now is to construct the reduction sequence:
\[
\Gamma_{sub}\ (t_1\ t_2) \to_a^* (T_2 \To T_1)\Gamma_{sub}\ t_2 \to_a^*
(T_2 \To T_1)T_2 \to_a T_1
\]
\noindent To construct this sequence, it is sufficient to apply transitivity of
$\to_a^*$ and Lemmas~\ref{lem:lemma1} and~\ref{lem:lemma2}.

\ 

\noindent \textbf{Case:}
\[
\infer{\Gamma\vdash \lambda x : T.\,t : T \To T'}
      {\Gamma,x:T\vdash t:T'}
\]
\noindent By the induction hypothesis on the premise of this rule,
we have:
\[
\Gamma_{sub}\ [T/x]\ t \to_a^* T'
\]
\noindent Now we need to show that
\[
\Gamma_{sub}\ (\lambda x : T.\,t) \to_a^* (T \To T')
\]
\noindent By applying one $a(\lambda)$ step and Lemma~\ref{lem:lemma3}
we get:
\[
\Gamma_{sub}\ (\lambda x : T.\,t) \to_a
(T \To \Gamma_{sub}\ [T/x]\ t) \to_a^* (T \To T')
\]
\noindent This requires the fact that $\Gamma_{sub}\ [T/x] =
          [T/x]\Gamma_{sub}$, which holds because
          $x\not\in\textit{dom}(\Gamma_{sub})$ since we may rename $x$
          to avoid this, and because $T$ contains no term variables
          and hence is unaffected by applying $\Gamma_{sub}$.\qed

\proof[Proof of Theorem~\ref{thm:relatetyp}, right-to-left]
Since abstract reduction is convergent (Theorems~\ref{thm:termabstr}
and~\ref{lem:confl}), we may assume that redexes in the reduction
sequence to $T$ are always reduced in leftmost order.  Note that
convergence is sufficient to justify this assumption, as $T$ is a
normal form, and hence any strategy is guaranteed to reduce the
starting term to $T$ in a finite number of steps.  This assumption
will simplify some reasoning below.  We assume
$[T_1/x_1,\cdots,T_n/x_n]t \to_a^* T$ and prove
$x_1:T_1,\cdots,x_n:T_n \vdash t:T$ by induction on the number $n$ of
leftmost $\to_a$ steps in the reduction to $T$.

\

\noindent \textbf{Base Case:} there are no $\to_a$ steps.
This means that our term $t$ cannot be reduced
\[
\Gamma_{sub}\ t = T
\]
\noindent In this case, $t$ must be a variable (or else substitution could not
result in a type $T$). So, $t=x$ for some variable $x$, where
$\Gamma(x)=T$. Then we get:
\[
\infer{\Gamma\vdash x : T}{\Gamma(x) = T}
\]

\noindent \textbf{Step Case:}
there is at least one $\to_a$ step.  We proceed by case splitting on the form of $t$.

\

\noindent\textbf{Case:}
\[
\Gamma_{sub}\ x
\]
\noindent This case cannot occur, since either
$x\not\in\textit{dom}(\Gamma_{sub})$, in which case we cannot have
$x\to_a^* T$ for any type $T$; or else
$x\in\textit{dom}(\Gamma_{sub})$, and then $\Gamma_{sub}\ x = T$.  We
cannot have a $\to_a$ step in that case, because types are normal
forms for abstract reduction (Lemma~\ref{lem:tpnorm}).

\noindent\textbf{Case:}
\[
\Gamma_{sub}\ f
\]
\noindent The only possible step is $f \to_a A\To A$, and we indeed have
$\Gamma \vdash f : A\To A$.  The case for $\Gamma_{sub}\ a$ is similar.

\noindent\textbf{Case:}
\[
\Gamma_{sub}\ (t_1\ t_2)
\]
\noindent In this case, the reduction sequence must be of the following form,
for some mixed term $t'$ and type $T_2$, and some natural numbers $k_1$ and $k_2$:
\[
\Gamma_{sub}\ (t_1\ t_2) \to_a^{k_1} ((T_2 \To T)\ t_2) \to_a^{k_2} (T_2\To T)\ T_2 \to_a T
\]
\noindent where
\[
\begin{array}{llll}
1. & \Gamma_{sub}\ t_1 &\to_a^{k_1}& T_2 \To T\\
2. & \Gamma_{sub}\ t_2 &\to_a^{k_2}& T_2 
\end{array}
\]
\noindent We are justified in assuming this, because there must be
some first position in the reduction sequence from $t_1\ t_2$ to $T$
where a descendant of $t_1\ t_2$ is reduced.  That descendant here is
$(T_2\To T)\ T_2$.  In the reduction sequence prior to that point, we
are assuming (as noted at the start of the proof) that steps occur in
leftmost order, so the $t_1$ steps come first, and then the $t_2$
ones.  Now we can apply the induction hypothesis to (1) and (2), which
each have shorter length than the original reduction sequence.  This
gives us the premises of the following inference, which suffices to
complete this case:
\[
\infer{\Gamma \vdash t_1\ t_2 : T}{\Gamma \vdash t_1 : T_2 \To T & \Gamma \vdash t_2 : T_2}
\]

\noindent \textbf{Case:}
\[
\Gamma_{sub}\ (\lambda x : T'.\,t')
\]
\noindent In this case, we may assume the reduction sequence is of the
following form, for some $T''$:
\[
\Gamma_{sub}\ (\lambda x : T'.\,t') \to_a
(T' \To [T'/x] \Gamma_{sub}\ t') \to_a^* (T' \To T'')
\]
\noindent where
\[
[T'/x] \Gamma_{sub}\ t' \to_a^* T''
\]
\noindent This is because $\lambda x:T'.\,t'$ is itself an abstract
redex, and since we are assuming our reduction is in leftmost,
it must be reduced immediately.  Now we can apply the induction
hypothesis on $[T'/x] \Gamma_{sub}\ t' \to_a^* T''$ and get the
premise of the following inference, which suffices to complete this
case:

\[
\infer{\Gamma\vdash \lambda x : T'.\,t' : T' \To T''}
      {\Gamma,x:T'\vdash t':T''}\eqno{\qEd}
\]

\section{Generic Theorems for Preservation and Combined Confluence}
\label{sec:ars}

In this section, we collect some abstract properties for $\to_a$ and
$\to_b$, from which type preservation and confluence of $\to_{ab}$ can
be concluded. In subsequent sections we will instantiate these
theorems with abstract and concrete reduction relations.

For the first theorem, recall that in our setting $\to_a$ computes the
type of a term, or else could reach a stuck term like $(A\To A)\ (A\To
A)$ which does not correspond to a type.  We want to speak about
reductions that lead to types, so we need to phrase the following
theorem in terms of some set $S$, which we will instantiate later with
a set of types.  In condition (3) of the theorem, we interpose
$\textit{Id}_{\leftarrow_a^*(S)}$ to restrict peaks to those objects
which $a$-reduce to an object in $S$.

\begin{thm}
\label{thmtp}
Assume
\begin{enumerate}[\em(1)]
\item $\to_a(S) = \emptyset$ (that is, $S$ is a set of objects in normal form with respect to $\to_a$).
\item $\to_a$ is confluent.
\item $\leftarrow_a \cdot \textit{Id}_{\leftarrow_a^*(S)} \cdot \to_b
  \subse (\to_b \cup \to_a^*) \cdot \leftarrow_a^*$; that is, for
  every $m$ such that there exists $T\in S$ with
  $m\rightarrow_a^* T$, and every $m'$ and $m''$ with $m\rightarrow_a
  m'$ and $m\rightarrow_b m''$, there exists a $m'''$ such that
  $m''\rightarrow_a^* m'''$ and either $m'\rightarrow_b m'''$ or
  $m'\rightarrow_a^* m'''$.
\item every normal form with respect to $\to_a$
is also a normal form with respect to $\to_b$.
\end{enumerate}
Then if $T\in S$ and $T \leftarrow_a^* m \to_b m'$, we have $m' \to_a^* T$.
\end{thm}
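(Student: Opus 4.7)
The plan is to induct on the length $n$ of the abstract reduction $m \to_a^* T$. For $n = 0$ we have $m = T \in S$, so by (1) $m$ is a $\to_a$-normal form, and by (4) also a $\to_b$-normal form; this contradicts the assumed step $m \to_b m'$, so the base case is vacuously true.

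For $n \geq 1$, factor the reduction as $m \to_a m_1 \to_a^{n-1} T$. Since $m$ itself abstractly reduces to $T \in S$, we have $m \in \leftarrow_a^*(S)$, so hypothesis (3) can be applied to the peak $m_1 \leftarrow_a m \to_b m'$ and produces some $m'''$ with $m' \to_a^* m'''$ together with either (a) $m_1 \to_a^* m'''$, or (b) $m_1 \to_b m'''$.

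In case (a), both $T$ and $m'''$ are $\to_a^*$-reducts of $m_1$, so confluence (2) gives a common $\to_a^*$-reduct, which by (1) must coincide with $T$ itself; hence $m''' \to_a^* T$ and composing with $m' \to_a^* m'''$ yields $m' \to_a^* T$. In case (b), the induction hypothesis applied to the strictly shorter reduction $m_1 \to_a^{n-1} T$ together with the step $m_1 \to_b m'''$ gives $m''' \to_a^* T$, and chaining with $m' \to_a^* m'''$ again concludes. The only point to monitor is the side condition $\textit{Id}_{\leftarrow_a^*(S)}$ in (3), but every vertex on the path from $m$ to $T$ trivially lies in $\leftarrow_a^*(S)$, so this is automatic; I therefore do not expect any genuine obstacle beyond choosing the right quantity to induct on and arranging the diagram so that (3), (2)+(1), and the inductive hypothesis each close exactly one branch.
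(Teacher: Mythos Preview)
Your proof is correct and follows essentially the same approach as the paper's: induction on the length of $m \to_a^* T$, with the base case vacuous by (1) and (4), and the inductive step handled by applying (3) at the peak and then closing the two resulting branches via confluence plus (1) in one case and the induction hypothesis in the other. The only cosmetic difference is naming ($m'''$ versus the paper's $m_3$) and your explicit remark about the side condition in (3), which the paper leaves implicit.
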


\begin{proof}
Let $m \to_a^* T$ and $m\to_b m'$, we have to prove that $m'\to_a^*
T$.  We do this by induction on the number $n$ of steps in $m \to_a^n
T$. In case $n=0$ we have $m = T$.  By (1), $T$ is a normal form with
respect to $\to_a$, which is a normal form with respect to $\to_b$ due
to (4). So $m\to_b m'$ is not possible, and the claim holds
trivially.

For the induction step assume $m \to_a m_1$ for which $m_1 \to_a^{n-1}
T$. Applying (3) now yields $m_3$ such that $m' \to_a^* m_3$ and
either $m_1 \to_b m_3$ or $m_1 \to_a^* m_3$. In case $m_1 \to_b m_3$
we apply the induction hypothesis on $m_1 \to_a^{n-1} T$ and conclude
$m' \to_a^* m_3 \to_a^* T$.  In case $m_1 \to_a^* m_3$ we apply
confluence of $\to_a$ (2) by which $T$ and $m_3$ have a common
$\to_a$-reduct. As $T$ is a normal form with respect to $\to_a$ by
(1), we conclude $m' \to_a^* m_3 \to_a^* T$, concluding the proof.
\end{proof}

\begin{lem}
\label{lem:extendac}
Suppose $\to_a$ and $\to_b$ are binary relations such that
\begin{enumerate}[\em(1)]
\item $\to_a$ is confluent, and
\item $\leftarrow_a \cdot \to_b \subse (\to_b \cup \to_a^*) \cdot \leftarrow_a^*$.
\end{enumerate}
Then we also have
\[ \leftarrow_a^* \cdot \to_b \subse (\to_b \cup \to_a^*) \cdot \leftarrow_a^*\]
\end{lem}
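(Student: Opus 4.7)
The plan is to prove this by induction on the length of the $\leftarrow_a^*$-sequence, peeling off one $\to_a$-step at a time from the left and applying the single-step hypothesis (2) to the resulting local peak. The case split produced by hypothesis (2) will then be closed in two different ways: using the induction hypothesis in one branch, and using confluence of $\to_a$ (hypothesis (1)) in the other.

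More precisely, suppose $m \to_a^n m''$ and $m \to_b m'$; I want to produce $m^\star$ with $m' \to_a^* m^\star$ and either $m'' \to_b m^\star$ or $m'' \to_a^* m^\star$. In the base case $n=0$ we have $m=m''$ and can take $m^\star = m'$, using $m'' \to_b m^\star$. For the step case, write $m \to_a m_1 \to_a^{n-1} m''$, and apply hypothesis (2) to the peak $m_1 \leftarrow_a m \to_b m'$ to obtain $m_0$ with $m' \to_a^* m_0$ and either $m_1 \to_b m_0$ or $m_1 \to_a^* m_0$.

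In the first subcase ($m_1 \to_b m_0$), the induction hypothesis applies to $m_1 \to_a^{n-1} m''$ together with $m_1 \to_b m_0$, yielding some $m^\star$ with $m_0 \to_a^* m^\star$ and either $m'' \to_b m^\star$ or $m'' \to_a^* m^\star$; concatenating $m' \to_a^* m_0 \to_a^* m^\star$ finishes this branch. In the second subcase ($m_1 \to_a^* m_0$), we have two $\to_a^*$-reducts of $m_1$, namely $m''$ and $m_0$, so confluence of $\to_a$ gives a common reduct $m^\star$ with $m'' \to_a^* m^\star$ and $m_0 \to_a^* m^\star$, and again $m' \to_a^* m_0 \to_a^* m^\star$.

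The main potential pitfall is bookkeeping the disjunction in the conclusion $(\to_b \cup \to_a^*)$: one has to notice that the $\to_b$ alternative is needed only to transport a single $b$-step across one $a$-step (handled by hypothesis (2) itself), whereas once we are in the $\to_a^*$ alternative we stay there, because confluence of $\to_a$ closes $\to_a^*$-peaks with $\to_a^*$-steps. This is why the induction is clean: the $\to_b$ branch is absorbed by a recursive call on a strictly shorter $\leftarrow_a^*$-sequence, and the $\to_a^*$ branch is absorbed directly by confluence, with no further induction needed.
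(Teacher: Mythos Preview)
Your proof is correct and follows essentially the same approach as the paper's: induction on the length of the $\to_a^*$-sequence, peeling off one $\to_a$-step, applying hypothesis~(2) to the resulting local peak, then closing the $\to_b$-branch by the induction hypothesis and the $\to_a^*$-branch by confluence of $\to_a$. The structure and case analysis match the paper's proof exactly, up to renaming of the intermediate elements.
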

\begin{proof}
Assume $t \to_a^n u$ and $t \to_b v$; we have to find $w$ such that $u
\to_b \cup \to_a^* w$ and $v \to_a^* w$. We do this by induction on
$n$. For $n=0$ we choose $w = v$. For $n > 0$ write $t \to_a t'
\to_a^{n-1} u$. By (2) an element $v'$ exists such that $v \to_a^* v'$
and either $t' \to_a^* v'$ or $t' \to_b$. If $t' \to_a^* v'$ we apply
(1) yielding $w$ satisfying $u \to_a^* w$ and $v' \to_a^* w$ and we
are done. If $t' \to_b$ then we apply the induction hypothesis
yielding $u (\to_b \cup \to_a^*) w$ and $v' \to_a^* w$.
\end{proof}

\begin{thm}
\label{thmcr}
Let $\to_a$ and $\to_b$ be binary relations (recall from
Section~\ref{sec:rewriting-prelim} that we write $\to_{ba}$ for $\to_a
\cup \to_b$). Assume
\begin{enumerate}[\em(1)]
\item $\to_a$ is terminating,
\item $\to_a$ is confluent,
\item $\leftarrow_a \cdot \to_b \subse (\to_b \cup \to_a^*) \cdot \leftarrow_a^*$, and
\item every normal form with respect to $\to_a$
is also a normal form with respect to $\to_b$.
\end{enumerate}
Then $\to_{ba}$  is confluent.
\end{thm}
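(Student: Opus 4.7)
The plan is to exploit convergence of $\to_a$ (by (1) and (2)), so that every object $m$ has a unique $\to_a$-normal form $\widehat{m}$, and then to show that $\to_{ba}$-reduction preserves this normal form. Once this is established, any $\to_{ba}$-peak from $m$ can be joined at $\widehat{m}$ using only $\to_a$-steps, which gives confluence of $\to_{ba}$ since $\to_a \subseteq \to_{ba}$.

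The central technical claim will be: whenever $m \to_{ba}^* m'$, we have $m' \to_a^* \widehat{m}$. It suffices to prove this for a single $\to_{ba}$-step. The $\to_a$ case is immediate from confluence of $\to_a$ and uniqueness of $\to_a$-normal forms. For the $\to_b$ case, I would start from the peak $\widehat{m} \leftarrow_a^* m \to_b m'$ and apply Lemma~\ref{lem:extendac} (whose hypotheses are exactly (2) and (3)) to obtain some $w$ with $m' \to_a^* w$ and either $\widehat{m} \to_b w$ or $\widehat{m} \to_a^* w$. Because $\widehat{m}$ is a $\to_a$-normal form, hypothesis (4) makes it a $\to_b$-normal form, ruling out the first alternative; the second alternative then forces $w = \widehat{m}$. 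Hence $m' \to_a^* \widehat{m}$, as desired.

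Iterating the one-step claim along a $\to_{ba}^*$-reduction shows that every $\to_{ba}^*$-descendant of $m$ reduces via $\to_a^*$ to $\widehat{m}$. Given an arbitrary peak $m_1 \leftarrow_{ba}^* m \to_{ba}^* m_2$, both $m_1$ and $m_2$ therefore $\to_a^*$-reduce to $\widehat{m}$, and since $\to_a \subseteq \to_{ba}$ this is a common $\to_{ba}^*$-reduct, establishing confluence of $\to_{ba}$.

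The main obstacle will be the $\to_b$-step case, where the single-step commutation provided by (3) has to be upgraded to one between $\to_a^*$ and $\to_b$; but this is precisely the content of Lemma~\ref{lem:extendac}, so the remaining work reduces to invoking that lemma and then using (4) to eliminate the spurious $\widehat{m} \to_b w$ branch. Note that hypothesis (1) is used only to guarantee existence of $\widehat{m}$; termination of $\to_{ba}$ itself is not required and would not hold in general, so Newman-style arguments are unavailable and the normal-form-tracking approach above is needed.
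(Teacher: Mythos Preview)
Your proposal is correct and follows essentially the same route as the paper's proof: both arguments pick the $\to_a$-normal form of the peak term (existing by (1), unique by (2)), show by induction on the length of a $\to_{ba}^*$-reduction that every descendant $\to_a^*$-reduces to this normal form, invoke Lemma~\ref{lem:extendac} for the commutation needed in the inductive step, and use hypothesis (4) to eliminate the $\to_b$-branch from the normal form. The only cosmetic difference is that the paper treats the $\to_a$- and $\to_b$-step cases uniformly via the inclusion $\leftarrow_a^* \cdot \to_{ba} \subseteq (\to_b \cup \to_a^*) \cdot \leftarrow_a^*$, whereas you split them; your iteration also silently uses that $\widehat{m''}=\widehat{m}$ once $m'' \to_a^* \widehat{m}$, which is immediate from uniqueness of $\to_a$-normal forms.
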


\begin{proof} 
By Lemma~\ref{lem:extendac}, we have:
\[ (3') \;\; \leftarrow_a^* \cdot \to_b \subse (\to_b \cup \to_a^*) \cdot \leftarrow_a^*.\]
\noindent Now let $t \to_{ba}^* u$ and $t \to_{ba}^* v$; for proving
the theorem we have to prove that $w$ exists satisfying $u \to_{ba}^*
w$ and $v \to_{ba}^* w$. Choose $w$ to be a $\to_a$-normal form of
$t$, which exists due to (1).  Assume $t \to_{ba}^n u$; we will prove
that $u \to_a^* w$ by induction on $n$. For $n = 0$ this follows from
$t \to_a^* w$. For $n > 0$ let $t \to_{ba}^{n-1} u' \to_{ba} u$. From
the induction hypothesis we conclude $u' \to_a^* w$.  Combining (2)
and $(3')$ yields
\[\leftarrow_a^* \cdot \to_{ba} \subse (\to_b \cup \to_a^*) \cdot \leftarrow_a^*.\]
So since $w \leftarrow_a^* u' \to_{ba} u$ we conclude that $w'$ exists
satisfying $w  \to_b w'$ or $w \to_a^* w'$, and $u \to_a^* w'$. Since $w$ is not only a
$\to_a$-normal form, but also a $\to_b$-normal form according to (4), 
we conclude $w' = w$. Hence $u \to_a^* w' = w$,
concluding the  proof of $u \to_a^* w$. Applying the same argument on
$t \to_{ba}^* v$ we conclude $v \to_a^* w$, concluding the proof of the
theorem. 
\end{proof}

\vspace{3mm}

One may wonder whether the requirement of
termination is essential for Theorem \ref{thmcr}. It is: on the set $\{1,2,3\}$ the
relations $\to_a = \{(1,1)\}$ and $\to_b = \{(1,2),(1,3)\}$ satisfy
all requirements of Theorem \ref{thmcr}, while $\to_{ba}$ is not confluent. 

One may wonder whether in Theorem \ref{thmcr} the condition (4) on normal
forms is essential. It is, even if not only $\to_a$ is terminating and
confluent but also $\to_b$, as is shown by the following example of
relations on 10 elements, in which $\to_a$ steps are denoted by 
dashed arrows and $\to_b$ steps are denoted by solid arrows.

\vspace{4mm}

\begin{center}
\begin{pspicture}(0,0)(10,4)%
\psset{arrows=->,arrowsize=3pt 3}
\rput(1.3,3.2){\circlenode{2}{}}
\rput(0,2){\circlenode{1}{}}
\rput(5,4){\circlenode{7}{}}
\rput(6,2){\circlenode{6}{}}
%\rput(7.5,2){\circlenode{9}{}}
\rput(7.7,2.5){\circlenode{9}{}}
%\rput(2.5,2){\circlenode{10}{}}
\rput(2.3,1.5){\circlenode{10}{}}
\rput(4,2){\circlenode{5}{}}
\rput(5,0){\circlenode{8}{}}
\rput(10,2){\circlenode{4}{}}
\rput(8.7,0.8){\circlenode{3}{}}
\ncline{1}{2}
\nccurve[ncurv=1,angleB=180,angleA=90]{1}{7}
\ncline{10}{2}
%\ncline{1}{10}
\ncline{5}{10}
\ncline{7}{5}
\ncline{4}{3}
\nccurve[ncurv=1,angleB=0,angleA=270]{4}{8}
%\ncline{4}{9}
\ncline{8}{6}
\ncline{9}{3}
\ncline{6}{9}

\psset{linestyle=dashed}
\ncline{5}{8}
\ncline{6}{7}
\nccurve[ncurv=1,angleB=90,angleA=0]{7}{4}
\nccurve[ncurv=1,angleB=270,angleA=180]{8}{1}
\ncline{9}{4}
\ncline{10}{1}
%\nccurve[ncurv=1,angleB=120,angleA=60]{9}{4}
%\nccurve[ncurv=1,angleB=300,angleA=240]{10}{1}
\end{pspicture}
\end{center}

\vspace{4mm}

In this example there are two convertible normal forms, so the union is
not confluent, and both $\to_a$ and $\to_b$ are both confluent and terminating;
$\to_a$ is even deterministic. Also condition $(3)$ of Theorem
\ref{thmcr} is easily checked, even stronger:
$\leftarrow_a \cdot \to_b \subse \to_{ba} \cdot \leftarrow_a^=$.
This example was found using a SAT solver. A direct encoding of the example to be
looked for run out of resources. However, by adding a symmetry requirement, 
was observed on the first example, the SAT solver yielded a satisfying assignment
that could be interpreted as a valid example. The example given
above was obtained from this after removing some redundant arrows.
Independently, Bertram Felgenhauer found an example that could be
simplified to exactly the same example as given here. This remarkable
example was the starting point of developing the tool CARPA by which such
examples can be found fully automatically.

\section{Type Preservation and Combined Confluence for STLC}
\label{sec:presstlc}

We now prove type preservation for full $\beta$-reduction (the $\to_b$
relation of Section~\ref{sec:restlc}), based on the rewriting
formulation.  This is in contrast to the results of Kuan et al., who
obtain type preservation for the rewriting approach as a corollary of
type preservation based on a standard big-step notion of typing (and
the relation of that notion of typing with the small-step notion).  

\begin{defi}[Typability]
A mixed term $m$ is called \textbf{typable} 
if $m\to_a^* T$ for some type $T$.  
\end{defi}

\noindent If we translate our standard statement of type preservation (at the
beginning of Section~\ref{sec:basicmeta}) so that it uses abstract
reduction instead of the usual typing relation, we have the following.

\begin{thm}[Type Preservation]
\label{thm:presstlc}
Let $m,m'$ be mixed terms and $T$ be a type.
If $m \to_a^* T$ and $m\to_b m'$, then $m'\to_a^* T$.
\end{thm}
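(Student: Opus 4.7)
The plan is to derive Theorem~\ref{thm:presstlc} as a direct instance of the generic result Theorem~\ref{thmtp}, taking $S = \textit{Types}$. With this instantiation the conclusion of Theorem~\ref{thmtp} is verbatim what we want, so the work reduces to verifying the four hypotheses of that theorem.

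Three of the four I expect to dispatch quickly. Condition (1), $\to_a(\textit{Types}) = \emptyset$, is just Lemma~\ref{lem:tpnorm}. Condition (2), confluence of $\to_a$, is Theorem~\ref{lem:confl}. For condition (4), I observe that any mixed term containing a $\lambda$-abstraction, an $f$, or an $a$ admits an abstract step via $a(\lambda)$, $a(f)$, or $a(a)$ respectively (since the unrestricted evaluation contexts $E_a$ can reach anywhere), so a $\to_a$-normal form contains none of these, and hence cannot have a $b(\beta)$ or $b(f\text{-}\beta)$ redex.

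The main obstacle is condition (3), the local joinability statement $\leftarrow_a \cdot \textit{Id}_{\leftarrow_a^*(\textit{Types})} \cdot \to_b \subseteq (\to_b \cup \to_a^*) \cdot \leftarrow_a^*$. I will prove it by inspecting the relative positions of the $a$- and $b$-redexes in the peak $m' \leftarrow_a m \to_b m''$, under the assumption that $m$ is typable. If the two positions are parallel, applying each step to the other side closes the peak. If the $a$-redex lies strictly inside the $b$-redex (a $b(\beta)$ that duplicates or erases a sub-$a$-redex), every descendant of the $a$-redex can be contracted in finitely many $\to_a$ steps, giving $m' \to_b m''' \leftarrow_a^{*} m''$. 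If the $b$-redex lies strictly inside the $a$-redex, the cases $a(f)$ and $a(a)$ are vacuous (no proper subterms), and $a(\beta)$ merely passes the $b$-step through to the surviving body; a nested $a(\lambda)$ step whose body contains the $b$-redex substitutes $T$ for $x$, which preserves $b$-redex shapes, so the $b$-step commutes through.

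The genuinely delicate subcase, and where typability is essential, is when the $a$-step is $a(\lambda)$ applied to the very $\lambda$ that heads the $b(\beta)$ redex: $m = (\lambda x : T.\, m_1)\, m_2$, with $m \to_b [m_2/x] m_1$ and $m \to_a (T \To [T/x] m_1)\, m_2$. Here I use typability crucially: from $m \to_a^* T_0$ combined with termination and confluence of $\to_a$ (Theorems~\ref{thm:termabstr} and~\ref{lem:confl}), any reduction of $m$ to a type must eventually fire an $a(\beta)$ at the top, which forces $m_2 \to_a^* T$. Setting $m''' = [T/x] m_1$ then yields
\[
(T \To [T/x] m_1)\, m_2 \;\to_a^*\; (T \To [T/x] m_1)\, T \;\to_a\; [T/x] m_1
\]
on the $a$-side, and $[m_2/x] m_1 \to_a^* [T/x] m_1$ on the $b$-side by reducing each copy of $m_2$ introduced by the substitution to $T$. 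Without the typability hypothesis the final $a(\beta)$ contracting step could be blocked (one would have no guarantee that $m_2$ reduces to the matching type $T$), so this is exactly where the side-condition $\textit{Id}_{\leftarrow_a^*(S)}$ in Theorem~\ref{thmtp} earns its keep. Once this case is handled, condition (3) is established and Theorem~\ref{thmtp} delivers the conclusion.
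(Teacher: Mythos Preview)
Your approach is exactly the paper's: apply Theorem~\ref{thmtp} with $S=\textit{Types}$, dispatch conditions (1), (2), (4) just as you describe, and establish condition (3) by a case analysis on the relative positions of the $a$- and $b$-redexes (the paper packages this analysis as Lemma~\ref{lem:beta}). In particular, your handling of the delicate $a(\lambda)$/$b(\beta)$ overlap, and the use of typability to extract $m_2 \to_a^* T$ so that an $a(\beta)$ step can fire, is precisely the paper's argument.

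There is one small gap in your case analysis. In the case ``$a$-redex strictly inside the $b$-redex'' you write ``(a $b(\beta)$ that duplicates or erases a sub-$a$-redex)'' and conclude $m' \to_b m''' \leftarrow_a^{*} m''$. But the $b$-redex can also be a $b(f\text{-}\beta)$ redex $f\ a$, with the $a$-step being $a(f)$ or $a(a)$ applied to one of the two constants. In that situation $m'$ is (in context) either $(A\To A)\ a$ or $f\ A$, neither of which contains a $b$-redex, so the $\to_b$ disjunct you claim is unavailable. The paper handles this as a separate overlap case: choose $m''' = E_a[A]$ and use the $\to_a^*$ disjunct on the $m'$ side, since $m' \to_a^2 E_a[A]$ and $m'' = E_a[a] \to_a E_a[A]$. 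With this small case added, your argument is complete and coincides with the paper's.
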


\noindent The proof of this theorem is given by applying Theorem
\ref{thmtp}: we need to check its conditions (1), (2), (3) and (4). We
instantiate the set $S$ in condition (1) with the set of types $T$,
which are normal forms by Lemma~\ref{lem:tpnorm}.  Condition (2)
follows from Lemma \ref{lem:confl}. Condition (4) is immediate from
the definitions of $\to_a$ and $\to_b$: if $\to_b$ applies on a
term $t$, then $t$ either contains $f a$ via rule
$\textit{b}(\textit{f-}\beta)$ by which $\to_a$ applies via $a(f)$, or
$t$ contains $\lambda x:T.\, m]$ via rule $\textit{b}\beta)$ by which
  $\to_a$ applies via $a(\lambda)$. So it remains to check condition
  (3), which follows from the following lemma.

\begin{lem}
\label{lem:beta}
Let $m_0$ be a typable mixed term and let $m_1,m_2$ be mixed terms
such that $m_0 \to_a m_1$ and $m_0 \to_b m_2$. Then a mixed term $
m_3$ exists such that $m_2 \to_a^* m_3$ and 
either $m_1 \to_b m_3$ or $m_1 \to_a^* m_3$. Furthermore, if the step
from $m_0$ to $m_2$ is a call-by-value step, so is the step from $m_1$
to $m_3$.
\end{lem}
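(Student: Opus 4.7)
The plan is to prove Lemma~\ref{lem:beta} by a case analysis on the relative positions within $m_0$ of the $a$-redex fired by $m_0 \to_a m_1$ and the $b$-redex fired by $m_0 \to_b m_2$. If the two positions are parallel, neither reduction disturbs the other, so taking $m_3$ to be the result of performing both steps gives $m_1 \to_b m_3$ and $m_2 \to_a m_3$.

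If the $a$-redex contains the $b$-redex, the outer $a$-rule must be $a(\beta)$ or $a(\lambda)$, because the other $a$-rules have no proper subterms that could host a $b$-redex and types contain no $b$-rule patterns. Inspection shows the $b$-redex is preserved---modulo a type substitution in the $a(\lambda)$ case, which is harmless for the same reason---so $m_1$ still has a corresponding $b$-redex and the standard substitution lemma closes the diagram with one $b$-step from $m_1$ and one $a$-step from $m_2$.

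If the $b$-redex contains the $a$-redex, I would split on the $b$-rule. For $b(f\text{-}\beta)$ the $a$-step is either $a(f)$ or $a(a)$, and in either case $m_1$ reduces in two further $a$-steps to $E_a[A]$ while $m_2 = E_a[a] \to_a E_a[A]$, so I take $m_3 = E_a[A]$. For $b(\beta)$ on the redex $(\lambda x{:}T.m)\, m'$, I would subcase on where the $a$-step occurs. An $a$-step inside $m$ commutes with the $b$-step in one step on each side by the substitution lemma. An $a$-step inside $m'$ with $m' \to_a m''$ is duplicated across the $k$ free occurrences of $x$ in $m$: then $m_1 \to_b [m''/x]m$ in one $b$-step and $m_2 = [m'/x]m \to_a^k [m''/x]m$. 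The critical subcase, and the one where the typability assumption is genuinely used, is when the $a$-step is $a(\lambda)$ on the head, giving $m_1 = (T \To [T/x]m)\, m'$ and $m_2 = [m'/x]m$. Here $m_1$ is not yet an $a(\beta)$-redex unless $m'$ has reduced to $T$, so I would invoke typability: since $(\lambda x{:}T.m)\, m' \to_a^* T^\star$ for some type $T^\star$, confluence and termination of $\to_a$ (Theorems~\ref{thm:termabstr} and~\ref{lem:confl}) together with the fact that types are $a$-normal (Lemma~\ref{lem:tpnorm}) force $m' \to_a^* T$. Granted this, $m_1 \to_a^* (T \To [T/x]m)\, T \to_a [T/x]m$ and $m_2 \to_a^* [T/x]m$ by reducing each copy of $m'$ to $T$, so $m_3 = [T/x]m$ works.

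The final clause on call-by-value would be handled by verifying, in each subcase that yields a $b$-step from $m_1$, that the context around the residual $b$-redex is still an $E_c$-context and (for $b(\beta)$) that the argument is still a mixed value; both follow because mixed values are closed under $\to_a$ and $a$-steps above or beside the residual $b$-redex do not alter whether the surrounding context lies in $E_c$. The principal obstacle throughout is the $a(\lambda)/b(\beta)$ subcase: extracting $m' \to_a^* T$ from typability is the one place that genuinely uses the hypothesis of the lemma, and the duplication bookkeeping for the $a$-step-inside-$m'$ subcase is the main source of routine tedium.
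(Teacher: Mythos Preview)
Your proposal follows essentially the same approach as the paper: the same case split on the relative positions of the $a$- and $b$-redexes, the same direct commutation for parallel and non-overlapping nested cases (your finer split of the $b(\beta)$ no-overlap case into ``$a$-step inside $m$'' versus ``$a$-step inside $m'$'' is just an expository refinement of what the paper calls the single ``no overlap'' case), the same $m_3 = E_a[A]$ for the $b(f\text{-}\beta)$ overlap, and the same critical $a(\lambda)/b(\beta)$ case closed with $m_3 = E_a[[T/x]m]$ after extracting $m' \to_a^* T$ from typability. The paper phrases that extraction by tracing the fate of the $\lambda$ and the application symbol in the reduction $m_0 \to_a^* T'$ rather than by appealing to convergence abstractly, but the content is the same.
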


\begin{proof}
We distinguish the ways the redexes in $m_0$ are related.

If the redexes of $m_0 \to_a m_1$ and $m_0 \to_b m_2$ are parallel,
then $m_3$ can be chosen such that $m_1 \to_b m_3$ and $m_2 \to_a m_3$
(preserving whether or not the $b$-step is call-by-value).

If the redex of $m_0 \to_a m_1$ is above the redex of $m_0 \to_b
m_2$,
then the $\to_a$ step is either of the type $a(\beta)$ or
$a(\lambda)$, in which the $\to_b$ acts on the mixed term $m$ as it
occurs in the rule $a(\beta)$ or $a(\lambda)$. As this $m$ is not
duplicated, we get $m_3$ such that $m_1 \to_b m_3$ and
$m_1 \to_a m_3$ (and the step $m_0\to_b m_2$ is not call-by-value).

If the redex of $m_0 \to_a m_1$ is below the redex of $m_0 \to_b
m_2$, then some further case analysis is required.

If there is no overlap, then $m_3$ can be chosen such that 
$m_1 \to_b m_3$ (preserving being call-by-value) and $m_2 \to_a^* m_3$. 

If there is overlap and $m_0 \to_a m_1$ is an application of
$a(f)$ or $a(a)$, then $m_0 = E_a[f\ a]$ and $m_2 = E_a[a]$, and
$m_3$ can be chosen to be $E_a[A]$, satisfying
$m_1 \to_a^2 m_3$ and $m_2 \to_a m_3$.

The remaining case is illustrated by the following picture:

\begin{center}
\begin{tikzpicture}
\draw (0,1.5) node(t) {$m_0 = E_a[(\lambda x:T. m)\ m']$} ;
\draw (-2.5,0) node(sa) {$m_1 = E_a[(T\To [T/x]m)\ m']$};
\draw (2.5,0) node(sb) {$m_2 = E_a[[m'/x]m]$};

\draw[-latex'] (t) -- node[left=1pt,very near end]{$a$} (sa);
\draw[-latex'] (t) -- node[left=1pt,very near end]{$b$}
                      (sb);

\draw (-2.5,-1.5) node (saa) {$E_a[(T\To [T/x]m)\ T]$};
\draw (0,-3) node (b) {$E_a[[T/x]m]$};

\draw[-latex'] (sb) -- node[right=2pt] {\textit{since}\ $m'\to_a^* T$}
                       node[left=1pt,very near end]{$a$}
                       node[right=1pt,very near end]{*}
                       (b);
\draw[-latex'] (sa) -- node[left=2pt] {\textit{since}\ $m'\to_a^* T$}
                       node[left=1pt,very near end]{$a$}
                       node[right=1pt,very near end]{*}
                       (saa);
\draw[-latex'] (saa) -- node[left=1pt,very near end]{$a$} (b);

\end{tikzpicture}
\end{center}

The picture already shows that by choosing $m_3 = E_a[[T/x]m]$ we
obtain $m_1 \to_a^* m_3$ and  $m_2 \to_a^* m_3$ if we can prove 
$m'\to_a^* T$. For doing so we use the assumption that $m_0$ is
typable: there exists a type $T'$ such that 
$m_0 = E_a[(\lambda x:T. m)\ m'] \to_a^* T'$. Since $T'$ is a type 
it does not contain a $\lambda$ symbol, so somewhere in this
reduction the $\lambda$ in $\lambda x:T. m$ should be removed. By
inspecting the rules we see that this can only be done by the rule
$a(\lambda)$ by which $\lambda x:T. -$ is replaced by $T \To -$.
Next the (invisible) application symbol in $(\lambda x:T. m)\ m'$
should be removed. This can only be done by the rule $a(\beta)$.
This rule is only applicable if first $m'$ is rewritten by $\to_a$
steps to $T$, indeed proving $m'\to_a^* T$.
\end{proof}

\begin{thm}
\label{prop:confstlch}
The relation $(\textit{Id}_{\leftarrow_a^*(\textit{Types})}
\cdot \to_a)\cup(\textit{Id}_{\leftarrow_a^*(\textit{Types})}
\cdot \to_b)$ is confluent.
\end{thm}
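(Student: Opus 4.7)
The plan is to instantiate Theorem~\ref{thmcr} with the restrictions of $\to_a$ and $\to_b$ to the set $X = \leftarrow_a^*(\textit{Types})$ of typable mixed terms, since the statement to be proved is precisely the confluence of $\to_a|_X \cup \to_b|_X$. The first thing I would verify is that $X$ is closed under both reductions, so that the restricted relations form a genuine ARS on $X$. Closure under $\to_a$ follows from confluence (Theorem~\ref{lem:confl}) together with Lemma~\ref{lem:tpnorm}: from $m \in X$ with $m \to_a^* T$ and $m \to_a m'$, confluence yields a common $\to_a$-reduct of $m'$ and $T$, which must be $T$ itself since $T$ is $\to_a$-normal. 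Closure under $\to_b$ is exactly Theorem~\ref{thm:presstlc}.

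Next I would check the four hypotheses of Theorem~\ref{thmcr} for the restricted relations. Termination and confluence of $\to_a|_X$ are inherited directly from Theorems~\ref{thm:termabstr} and~\ref{lem:confl}, using $\to_a$-closure of $X$. The local commutation condition $\leftarrow_a|_X \cdot \to_b|_X \subseteq (\to_b|_X \cup \to_a|_X^*) \cdot \leftarrow_a|_X^*$ is precisely the content of Lemma~\ref{lem:beta}, whose typability hypothesis matches the definition of $X$, and whose intermediate terms automatically lie in $X$ by the closure just established. For the normal-form condition, I would argue that any $\to_a$-normal typable term must itself be a type: if $m \in X$ is in $\to_a$-normal form and $m \to_a^* T$, then the reduction has length zero and $m = T$. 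Types are built only from $A$ and $\To$, so they contain no $\lambda$-binder and no $f\ a$ subterm, hence no $\to_b$-redex, and are therefore $\to_b$-normal.

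With the four hypotheses in place, Theorem~\ref{thmcr} delivers confluence of $\to_a|_X \cup \to_b|_X$, which is exactly the claim. There is no real obstacle in this proof: the substantive combinatorial work around a typable $a$/$b$-peak has already been done in Lemma~\ref{lem:beta}, and the role of Theorem~\ref{prop:confstlch} is to package that lemma together with termination, $\to_a$-confluence, and the classification of typable normal forms, routing them through the abstract machinery of Section~\ref{sec:ars}. The only point requiring any care is the closure argument for $\to_a$, which is what licenses us to quote Lemma~\ref{lem:beta} inductively along longer reduction sequences without leaving the typable fragment.
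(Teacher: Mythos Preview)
Your proposal is correct and follows essentially the same route as the paper: both apply Theorem~\ref{thmcr} to the typability-restricted relations and discharge its four hypotheses via Theorems~\ref{thm:termabstr} and~\ref{lem:confl} and Lemma~\ref{lem:beta}. The only minor divergence is in condition~(4): the paper argues syntactically that every $\to_b$-redex already contains a $\to_a$-redex (via $a(f)$ or $a(\lambda)$), whereas you argue that a typable $\to_a$-normal form must literally be a type and hence $\to_b$-normal; both arguments are sound, and your explicit closure check for $X$ (invoking Theorem~\ref{thm:presstlc}) is a point the paper leaves implicit.
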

\begin{proof}
We will apply Theorem \ref{thmcr}.  For this, we need to check
properties (1) to (4) for the particular relations
$\textit{Id}_{\leftarrow_a^*(\textit{Types})} \cdot \to_a$ and
$\textit{Id}_{\leftarrow_a^*(\textit{Types})} \cdot \to_b$. Property
(2) follows from Theorem \ref{lem:confl} and the fact that
$\textit{Id}_{\leftarrow_a^*(\textit{Types})}$ is the identity
relation. All peaks must be of the form $m_1 \leftarrow_a m
\leftarrow_{id} m \to_{id} m \to_a m_2$, due to the composition with
$\textit{Id}_{\leftarrow_a^*(\textit{Types})}$. By Theorem
\ref{lem:confl}, if $m_1 \leftarrow_a m \to_a m_2$, then there exists
$m_3$ such that $m_1 \to_a^* m_3 \leftarrow_a m_2$. Thus, any
$\textit{Id}_{\leftarrow_a^*(\textit{Types})} \cdot \to_a$ peak $m_1
\leftarrow_a m \leftarrow_{id} m \to_{id} m \to_a m_2$ can be
completed with $m_1 \to_{id} m_1 \to_a m_3 \leftarrow_a m_2
\leftarrow_{id} m_2$. Likewise, By Theorem \ref{thm:termabstr} $\to_a$
is terminating, so $\textit{Id}_{\leftarrow_a^*(\textit{Types})} \cdot
\to_a \subse \to_a$ is also terminating, proving property (1). Property (3) follows from
Lemma \ref{lem:beta}. So it remains to prove Property (4). This is
immediate from the definitions of $\to_a$ and $\to_b$: if $\to_b$
applies on a term $t$, then $t$ either contains $f a$ via rule
$\textit{b}(\textit{f-}\beta)$ by which $\to_a$ applies via $a(f)$, or
$t$ contains $\lambda x:T.\, m$ via rule $\textit{b}(\beta)$ by which
$\to_a$ applies via $a(\lambda)$. 
\end{proof}

\begin{cor}[Confluence of Combined Reduction]
\label{prop:confstlc}
Every typable mixed term is confluent with respect to the reduction
relation $\to_{ba}$.
\end{cor}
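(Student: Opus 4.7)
The plan is to reduce this corollary to Theorem~\ref{prop:confstlch} by observing that, for a typable starting term $m_0$, every reduct is typable, so the identity-composition restrictions in that theorem trivialize and $\to_{ba}$-steps coincide with steps of the restricted relation.

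First, I would prove an auxiliary claim: if $m$ is typable and $m \to_{ba} m'$, then $m'$ is typable. There are two subcases. If $m \to_a m'$ and $m \to_a^* T$ for some type $T$, then by confluence of $\to_a$ (Theorem~\ref{lem:confl}) the terms $m'$ and $T$ have a common $\to_a$-reduct, which must be $T$ itself since types are $\to_a$-normal forms (Lemma~\ref{lem:tpnorm}); hence $m' \to_a^* T$. If instead $m \to_b m'$, then Theorem~\ref{thm:presstlc} directly yields $m' \to_a^* T$. A trivial induction on the length of a $\to_{ba}^*$ reduction then shows that every $\to_{ba}$-reduct of a typable term is typable, i.e., lies in $\leftarrow_a^*(\textit{Types})$.

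Next I would note that this preservation property causes the identity prefix $\textit{Id}_{\leftarrow_a^*(\textit{Types})}$ in Theorem~\ref{prop:confstlch} to be vacuous along reductions starting from $m_0$. Writing $R = (\textit{Id}_{\leftarrow_a^*(\textit{Types})} \cdot \to_a) \cup (\textit{Id}_{\leftarrow_a^*(\textit{Types})} \cdot \to_b)$, any step $m \to_{ba} m'$ with $m_0 \to_{ba}^* m$ satisfies $m \in \leftarrow_a^*(\textit{Types})$ by the preceding claim, and is therefore also a step of $R$. Conversely $R \subseteq \to_{ba}$, so $\to_{ba}^*$-reductions from $m_0$ coincide with $R^*$-reductions from $m_0$.

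Finally, given two $\to_{ba}^*$-reducts $u$ and $v$ of $m_0$, we view both reductions as $R^*$-reductions and apply Theorem~\ref{prop:confstlch} to obtain a common $R^*$-reduct $w$, which is then also a common $\to_{ba}^*$-reduct. This establishes that $m_0$ is confluent with respect to $\to_{ba}$, completing the proof. There is no serious obstacle here; the only thing to be careful about is verifying the preservation of typability under a single $\to_a$-step, where one must appeal to confluence and normal-form-ness of types rather than to Theorem~\ref{thm:presstlc}, which addresses only $\to_b$-steps.
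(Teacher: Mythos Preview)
Your proposal is correct and follows essentially the same route as the paper: reduce the corollary to Theorem~\ref{prop:confstlch} by identifying $\to_{ba}$-reductions from a typable term with reductions of the restricted relation $R = (\textit{Id}_{\leftarrow_a^*(\textit{Types})}\cdot\to_a)\cup(\textit{Id}_{\leftarrow_a^*(\textit{Types})}\cdot\to_b)$. The paper compresses this into the single sentence ``confluence of the set of typable mixed terms is equivalent to confluence of the relation $\textit{Id}_{\leftarrow_a^*(\textit{Types})}\cdot\to_{ba}$,'' leaving implicit the preservation-of-typability fact that you spell out explicitly via Theorem~\ref{thm:presstlc} and confluence of $\to_a$; your more detailed version is a faithful unpacking of that step rather than a different argument.
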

\begin{proof} Confluence of the set of typable mixed terms is equivalent
to confluence of  the relation $\textit{Id}_{\leftarrow_a^*(\textit{Types})} \cdot \to_{ba}$,
which is easily seen to be equal to
\[
(\textit{Id}_{\leftarrow_a^*(\textit{Types})}
\cdot \to_a)\cup(\textit{Id}_{\leftarrow_a^*(\textit{Types})}
\cdot \to_b)
\]
\noindent By Theorem~\ref{prop:confstlch}, the latter relation is confluent.
\end{proof}

A form of typability is essential, since the relation $\to_{ba}$ is not
confluent in general, as Kuan et al. note also in their setting.  For
instance, the non-typable term $(\lambda x:A.x)(\lambda x:A.x)$ has
two distinct normal forms
\[ (A \To A)(A \To A) \leftarrow_a^+ (\lambda x:A.x)(\lambda x:A.x)
\to_b \lambda x:A.x \to_a (A \To A).\]

\section{Progress and Type Safety for STLC}
\label{sec:progtpsafe}

In this section, we complete the basic meta-theory for STLC by proving
progress and type safety theorems for call-by-value reduction (the
$\to_c$ relation of Section~\ref{sec:restlc}).  Lemmas~\ref{lem:cac}
and~\ref{lem:stuck} are stated in a somewhat more general way, so that
we can also use them to show type safety for the generalized form of
typability we will consider in Section~\ref{sec:genpresstlc}.

\subsection{Quasi-Stuck Terms}

We begin by inductively defining the set of \textbf{quasi-stuck} terms
$S$, in Figure~\ref{fig:qs}.  Also, let us call a quasi-stuck term
which is not a value \textbf{stuck}.  The purpose of these definitions
is to generalize a characterization of $c$-normal standard terms to
mixed terms (Lemmas~\ref{lem:qsnf} and~\ref{lem:qsnfb}, proved next),
in such a way that we can show that the set of quasi-stuck terms is
closed under abstract reduction (Lemma~\ref{lem:redqs}, proved below).
This will allow us to prove that typable quasi-stuck terms must be
values (Lemma~\ref{lem:stuck}), from which we easily obtain the
desired main theorems of progress and type safety.

\begin{figure}[b]
\begin{iteMize}{$\bullet$}
\item Mixed values $u$ are in $S$.
\item Terms of the form $a\ s$ or $A\ s$ are in $S$ if $s\in S$.
\item Terms of the form $f\ s$ or $(A\To A)\ s$ are in $S$ if $s\in S$ and $s$ is neither $a$ nor $A$.
\item Terms of the form $(\lambda x:T.m)\ s$ or $(T\To m)\ s$ are in $S$ if $s\in S$ and $s$ is not a mixed value.
\item Terms of the form $s\ s'$ are in $S$ if $s,s'\in S$ and $s$ is not a mixed value.
\end{iteMize}
\caption{Inductive definition of the set $S$ of quasi-stuck terms}
\label{fig:qs}
\end{figure}

\begin{lem}
\label{lem:qsnf}
If $m$ is quasi-stuck, then $m\not\to_c$.
\end{lem}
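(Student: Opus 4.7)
The plan is to proceed by rule induction on the inductive definition of the set $S$ of quasi-stuck terms given in Figure~\ref{fig:qs}. For each of the five clauses I will show that the resulting term cannot be written as $E_c[r]$ for any call-by-value evaluation context $E_c$ and any $c$-redex $r$, where by the rules of Figure~\ref{fig:restlc} such an $r$ must be of the form $f\ a$ or $(\lambda x:T.m)\ u$ with $u$ a mixed value. The structural fact I will exploit throughout is that $E_c ::= *\ |\ (E_c\ m)\ |\ (u\ E_c)$ does not descend under a $\lambda$-binder or under $T\To\cdot$, and that the $(u\ E_c)$ frame forces the head of the application to be a mixed value.

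For the base clause, the term is a mixed value $\lambda x:T.m$, $T\To m$, $A$, $a$, or $f$. None of these is an application, so the only possible decomposition is $E_c = *$, and none of them matches the shape of a $c$-redex. For each of the application clauses 2--5, I will enumerate the three shapes $E_c = *$, $E_c = E_c'\ s$, and $E_c = u\ E_c'$: the first is ruled out by the side-conditions of that clause (for instance, $s\neq a$ in the $f\ s$ sub-case of clause~3 blocks the redex $f\ a$; $s$ being not a mixed value in clause~4 blocks $(\lambda x:T.m)\ u$; in clause~5 the head $s$ is not a mixed value, hence in particular not $f$ and not $\lambda x:T.m$); the second requires the head to take a $c$-step, which fails either because the head is a mixed value (cases 2--4, by the observation in the base clause) or, in clause~5, by the induction hypothesis applied to $s$; the third requires the right-hand sub-term to take a $c$-step, which is directly excluded by the induction hypothesis.

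The main obstacle is bookkeeping rather than mathematics: the five clauses of Figure~\ref{fig:qs} are tuned so that each combination of redex-candidate and side-condition gets ruled out exactly once, and one has to match every side-condition to the $c$-redex it is engineered to block. The one mildly subtle point is noting that heads such as $A\To A$ and $T\To m$ are mixed values, so that the frame $(u\ E_c)$ is available in clauses 2--4; this forces the argument to $c$-step, which is precisely what the induction hypothesis rules out.
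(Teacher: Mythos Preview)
Your proof is correct and follows essentially the same route as the paper's own proof, which is a one-line ``easy structural induction on $m$, using the definition of quasi-stuck.'' You have simply unpacked that induction explicitly, matching each clause of Figure~\ref{fig:qs} against the possible shapes of $E_c$ and the two $c$-redexes; the paper leaves all of this implicit.
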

\begin{proof} The proof is by an easy structural induction on $m$, 
using the definition of quasi-stuck. 
\end{proof}

\begin{lem}
\label{lem:qsnfb}
If standard term $t$ is closed and $t\not\to_c$, then $t$ is
quasi-stuck.
\end{lem}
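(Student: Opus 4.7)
The plan is to prove the statement by structural induction on $t$. Since $t$ is closed, the variable case is vacuous. If $t$ is $a$, $f$, or $\lambda x:T.t'$, then $t$ is itself a mixed value and so lies in $S$ by the first clause of Figure~\ref{fig:qs}.

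The substantive case is $t = t_1\ t_2$ with $t_1, t_2$ closed standard terms. First I would observe that $t_1 \not\to_c$: any reduction $t_1 \to_c t_1'$ via some evaluation context $E_c'$ lifts to $t \to_c t_1'\ t_2$ using the context $(E_c'\ t_2)$, contradicting $t \not\to_c$. The induction hypothesis then yields $t_1 \in S$. I would then split on whether $t_1$ is a mixed value.

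If $t_1$ is a mixed value, then (being a closed standard term) it is one of $a$, $f$, or $\lambda x:T.t'$, and an analogous argument using the evaluation context $(u\ E_c')$ with $u = t_1$ forces $t_2 \not\to_c$, so the IH gives $t_2 \in S$. I would then handle each sub-case: $t_1 = a$ is immediate from the $a\ s$ clause; $t_1 = f$ further forces $t_2 \neq a$ (else $t \to_c a$ by rule $\textit{c}(\textit{f-}\beta)$) and $t_2 \neq A$ (as $t_2$ is a standard term), so the $f\ s$ clause applies; $t_1 = \lambda x:T.t'$ further forces $t_2$ not to be a mixed value (else $t$ reduces by $\textit{c}(\beta)$), so the $(\lambda x:T.m)\ s$ clause applies.

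If instead $t_1$ is not a mixed value, I would conclude $t \in S$ by the final clause of the quasi-stuck definition using $t_1 \in S$. This is where I expect the principal subtlety to lie: since $t_1$ is not a mixed value, the context $t_1\ *$ is not of the form $(u\ E_c)$, so the hypothesis $t \not\to_c$ does not directly constrain $t_2$. Discharging this sub-case therefore requires a careful appeal to the side conditions on $s'$ in the $s\ s'$ clause, and I expect this to be the main technical obstacle in formalizing the argument.
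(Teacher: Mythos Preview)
Your approach matches the paper's exactly, and you are right to single out the case where $t_1$ is not a mixed value as the delicate one. Unfortunately the obstacle you anticipate there is real and cannot be discharged: the fifth clause of Figure~\ref{fig:qs} demands $s'\in S$, and nothing in the hypothesis $t\not\to_c$ forces $t_2\not\to_c$ when $t_1$ is a non-value, since $(t_1\ *)$ is not of the form $(u\ E_c)$. The paper's own proof has the same gap --- it simply writes ``by the induction hypothesis, $t_1$ and $t_2$ are both quasi-stuck,'' but the induction hypothesis does not apply to $t_2$ without first establishing $t_2\not\to_c$.

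In fact the lemma is false as stated. Take $t = (a\ a)\ ((\lambda x{:}A.\,x)\ a)$. This is a closed standard term, and it is $c$-normal: the only redex $(\lambda x{:}A.\,x)\ a$ sits in the argument of the non-value $a\ a$, which is not a call-by-value evaluation position. Yet $t$ is not quasi-stuck: only the fifth clause could apply, and it requires $(\lambda x{:}A.\,x)\ a\in S$, which fails (the fourth clause needs the argument not to be a mixed value, and $a$ is one; no other clause matches). The natural repair is to drop the requirement $s'\in S$ from the fifth clause of the definition of $S$; with that change your argument (and the paper's) goes through unchanged in this case, and one checks easily that Lemmas~\ref{lem:qsnf} and~\ref{lem:redqs} still hold under the weakened clause.
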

\begin{proof}
The proof is by structural induction on $t$.  If $t$ is
a (standard) value it is quasi-stuck, and it cannot be a variable
since $t$ is closed.  So suppose it is an application $t_1\ t_2$.
Since $t_1$ is closed, $t_1$ cannot be a variable.  We consider now
the remaining possibilities.  It could be that $t_1$ is $a$ and $t_2$
is some other $c$-normal form.  Then by the induction hypothesis,
$t_2$ is quasi-stuck, and $t$ is, too, using the second clause above
in the definition of quasi-stuck terms.  Next, we could have the
situation where $t_1$ is $f$, and $t_2$ is any $c$-normal form except
$a$.  Then by the induction hypothesis, $t_2$ is quasi-stuck, and $t$
is, too, using the third clause in the definition of quasi-stuck
terms.  Next, we could have that $t_1$ is a $\lambda$-abstraction, and
$t_2$ is any $c$-normal form except a standard value.  Then by the
induction hypothesis, $t_2$ is quasi-stuck, and it cannot be a mixed
value other than a standard value, because $t_2$ is a standard term.
So $t$ is quasi-stuck, too, using the fourth clause .  Finally, if
$t_1$ is some application, then by the induction hypothesis, $t_1$ and
$t_2$ are both quasi-stuck.  Since $t_1$ is not a value, the fifth
clause above gives us that $t$ is quasi-stuck. 
\end{proof}

\begin{lem}[Reduction of Quasi-Stuck Terms]
\label{lem:redqs}
If $m$ is quasi-stuck, and $m\to_a m'$, then $m'$ is also quasi-stuck.
Furthermore, if $m$ is a mixed value, then so is $m'$; and if $m$ is
not a mixed value, then neither is $m'$.
\end{lem}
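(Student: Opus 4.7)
The plan is to prove the three claims (quasi-stuckness, mixed-value preservation, and non-mixed-value preservation) simultaneously by induction on the derivation that $m$ is quasi-stuck, with a case analysis on the position of the $\to_a$-redex inside $m$.

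For the base case where $m = u$ is a mixed value, I would simply enumerate the five forms of mixed value and inspect each $\to_a$-reduct: $a \to_a A$, $f \to_a A \To A$, $\lambda x{:}T.\,m_0 \to_a T \To [T/x]m_0$, and reductions within $T \To m_0$ can only occur inside $m_0$ (since types are $\to_a$-normal by Lemma~\ref{lem:tpnorm}), yielding $T \To m_0'$; the constant $A$ is itself $\to_a$-normal. In every subcase the reduct is again a mixed value, hence quasi-stuck by the first clause of Figure~\ref{fig:qs}, which simultaneously establishes the first ``furthermore'' claim.

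For the four inductive clauses, each of which gives an application $m = m_1\,m_2$, I would split on whether the redex lies inside $m_1$, inside $m_2$, or at the root. Internal reductions are dispatched by the IH, which delivers a quasi-stuck reduct of the correct mixed-value status, allowing me to reapply the same defining clause to $m'$ with only one component changed. Head reductions of the function component ($a \to_a A$ in clause~2, $f \to_a A\To A$ in clause~3, or $a(\lambda)$ in clause~4) leave the application form intact with the argument unchanged, so the result stays within the same or an adjacent clause. The genuine obstacle is ruling out a top-level $a(\beta)$ step, which would turn the application into a non-application; I would dispose of this clause by clause from the argument restrictions: clauses~2 and~5 have a head not of the form $T \To m$; clause~3 has head $A\To A$ but excludes argument $A$; and clause~4 has head $T\To m_0$ but insists the argument is not a mixed value, whereas $T$ is. A secondary delicate point arises in clause~3, where I must verify that an $a$-reduct $s'$ of the argument $s$ is itself neither $a$ nor $A$: inspection of the rules shows that nothing $a$-reduces to $a$, and the only way to produce $A$ in a single step is via $a(a)$ at the root, forcing $s = a$, which is excluded. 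Since top-level $a(\beta)$ is the only reduction that could turn an application into something that is not an application, its impossibility also yields the second ``furthermore'' claim.
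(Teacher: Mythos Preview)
Your proof takes essentially the same route as the paper's---structural induction on the quasi-stuck derivation with a case split on where the $a$-redex lies---and is in fact more explicit than the paper in ruling out a top-level $a(\beta)$ step in each application clause.

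One small slip: in clause~3 you assert that ``the only way to produce $A$ in a single step is via $a(a)$ at the root,'' but $a(\beta)$ also yields $A$ from any term of the form $(T\To A)\,T$. Your desired conclusion still holds, because such an $a(\beta)$ redex is never quasi-stuck (its argument $T$ is a mixed value, excluding clause~4; its head is a mixed value, excluding clause~5; and clause~3 with $T=A$ would require the argument to differ from $A$). The paper sidesteps this detour by invoking the very value/non-value dichotomy you are simultaneously proving: if $s$ is a non-value the IH makes $s'$ a non-value, hence not $a$ or $A$; if $s$ is a value other than $a$, a direct check of the possible value reducts shows none are $a$ or $A$.
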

\begin{proof}
The proof is by structural induction on
$m$.  Suppose $m$ is a mixed value.  Then it is easy to see by
inspection of the reduction rules that $m'$ must be, too.  So suppose
$m$ is of the form $a\ s$ or $A\ s$ with $s\in S$.  Then either the
assumed reduction is of the form $a\ s\to_a A\ s$, or else of the form
$a\ s\to_a a\ m''$ or $A\ s\to_a A\ m''$.  In the former case, the
resulting term is a quasi-stuck non-value.  In the latter, we may
apply the induction hypothesis to conclude that $m''$ is quasi-stuck,
and hence $a\ m''$ (or $A\ m''$) is a quasi-stuck non-value.

If $m$ is of the form $f\ s$ or $(A\To A)\ s$, where $s\in S$ and $s$
is not $a$ or $A$, then either the assumed reduction is of the form
$f\ s\to_a (A\To A)\ s$ or else $f\ s\to_a f\ m''$ or $(A\To
A)\ s\to_a (A\To A)\ m''$.  In the former case, the resulting term is
a quasi-stuck non-value, by the third clause of the definition of
quasi-stuck terms above.  In the latter, if $s$ is not a value, we
again use our induction hypothesis to conclude that $m''$ is a
quasi-stuck non-value, and hence not $a$ or $A$.  So $m'$ is a
quasi-stuck non-value, too.  If $s$ is a value, then so is $m''$, and
reduction cannot turn a value other than $a$ into $a$ or $A$.  So
again, $m''$ has the required form to be a quasi-stuck non-value.

Suppose $m$ is of the form $(\lambda x:T.m'')\ s$ or $(T\To m'')\ s$,
with $s\in S$ and $s$ not a mixed value.  Then either the assumed
reduction is of the form $(\lambda x:T.m'')\ s\to_a
(T\To[T/x]m'')\ s$; or else of the form $(\lambda x:T. m'')\ s\to_a
(\lambda x:T. m''')\ s$ or $(T\To m'')\ s\to_a (T\To m''')\ s$; or
else of the form $(\lambda x:T. m'')\ s\to_a (\lambda x:T. m'')\ m'''$
or $(T\To m'')\ s\to_a (T\To m'')\ m'''$.  In the first two cases, the
resulting term still has the required form to be a quasi-stuck
non-value.  In the third case, we know $s$ is not a value by the
definition of quasi-stuck terms, so we may use our induction
hypothesis to conclude that $m'''$ is a quasi-stuck non-value, which
is sufficient to conclude that the resulting term is again stuck.

Finally, suppose $m$ is of the form $m_1\ m_2$, where $m_1$ is not a
mixed value.  Then the assumed reduction must be of the form either
$m_1\ m_2\to_a m_1'\ m_2$ or else $m_1\ m_2\to_a m_1\ m_2'$, for some
$m_1'$ with $m_1\to_a m_1'$, or else some $m_2'$ with $m_2 \to_a
m_2'$.  This is because, by inspection of the reduction rules, $m$
itself cannot be a redex if $m_1$ is not a mixed value.  In the former
case, we may apply the induction hypothesis to conclude that $m_1'$
is a quasi-stuck non-value, and hence so is $m'$.  In the latter,
we may apply the induction hypothesis to conclude that $m_2'$ is quasi-stuck,
and hence so is $m'$. 
\end{proof}

\begin{lem}
\label{lem:cac}
If $m$ is quasi-stuck (including the case where $m$ is a
closed mixed value), and $m\to_{ca}^* T$, then $m \to_a^* T$.
\end{lem}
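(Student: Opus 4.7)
The plan is to proceed by a straightforward induction on the length of the reduction sequence $m \to_{ca}^* T$, using the two facts about quasi-stuck terms that have just been established: Lemma~\ref{lem:qsnf}, which says quasi-stuck terms are $c$-normal forms, and Lemma~\ref{lem:redqs}, which says the set of quasi-stuck terms is closed under $\to_a$.

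For the base case (length zero), we have $m = T$ and there is nothing to show. For the inductive step, I would write the sequence as $m \to_{ca} m' \to_{ca}^* T$ and argue that the first step cannot be a $c$-step: since $m$ is quasi-stuck, Lemma~\ref{lem:qsnf} gives $m \not\to_c$, so we must have $m \to_a m'$. Then Lemma~\ref{lem:redqs} tells us that $m'$ is again quasi-stuck, so the induction hypothesis applied to $m' \to_{ca}^* T$ yields $m' \to_a^* T$. Prepending the first step gives $m \to_a m' \to_a^* T$, which is the required conclusion.

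There is essentially no obstacle here; the real content of the lemma has already been discharged in Lemmas~\ref{lem:qsnf} and~\ref{lem:redqs}. The only thing to be slightly careful about is making sure the parenthetical remark in the lemma statement (that $m$ may be a closed mixed value) is genuinely covered: mixed values are quasi-stuck by the first clause of the definition in Figure~\ref{fig:qs}, so they fall under the induction without any separate treatment. The target $T$ being a type plays no role in the argument at all, beyond guaranteeing that $m \to_{ca}^* T$ is a well-formed premise; the same proof would work with an arbitrary endpoint.
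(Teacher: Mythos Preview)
Your proof is correct and essentially identical to the paper's own argument: both proceed by induction on the length of the $\to_{ca}^*$ sequence, use Lemma~\ref{lem:qsnf} to rule out a $c$-step at the head, and Lemma~\ref{lem:redqs} to re-establish quasi-stuckness before invoking the induction hypothesis. Your additional remarks about the parenthetical case and the irrelevance of $T$ being a type are accurate but not needed for the proof.
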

\begin{proof}  The proof is by induction on the length of
the reduction sequence from $m$ to $T$.  If this length is $0$, the
result obviously holds.  So suppose we have $m \to_{ca} m' \to_{ca}^*
T$.  Since $m$ is quasi-stuck, we have $m\not\to_c$ by
Lemma~\ref{lem:qsnf}.  So it must be the case that $m\to_a m'$.  Since
$m'$ is quasi-stuck by Lemma~\ref{lem:redqs}, we may apply our
induction hypothesis to conclude $m'\to_a^* T$, and hence $m\to_a^*
T$.
\end{proof}

\begin{lem}
\label{lem:stuck}
Suppose $m$ is a closed quasi-stuck term.  Suppose further that $m \to_{ca}^* T$.
Then $m$ is a mixed value.
\end{lem}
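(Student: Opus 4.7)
The plan is to combine Lemma~\ref{lem:cac} with Lemma~\ref{lem:redqs} to rule out every non-value case. First I would apply Lemma~\ref{lem:cac}: since $m$ is quasi-stuck and $m \to_{ca}^* T$, it follows that $m \to_a^* T$. This reduces the problem to a purely abstract reduction chain, which is precisely what Lemma~\ref{lem:redqs} is designed to analyze.

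Next I would observe that every type is itself a mixed value. Looking at the grammar in Figure~\ref{fig:synrestlc}, the base type $A$ is listed explicitly among the mixed values $u$, and every arrow type $T_1 \To T_2$ matches the mixed-value pattern $T \To m$ (a type is in particular a mixed term). So the endpoint $T$ of the reduction lies in the set of mixed values.

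Now I would argue by induction on the length of $m \to_a^* T$, using the second part of Lemma~\ref{lem:redqs}. That lemma tells us that a single $\to_a$ step preserves both the property of being a mixed value and the property of not being one; iterating along the chain, the same holds for $\to_a^*$. So if $m$ were not a mixed value, then $T$ would also fail to be a mixed value, contradicting the previous paragraph. Hence $m$ is itself a mixed value, as required.

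I expect no real obstacle here: the genuine combinatorial work has already been carried out in the structural case analyses of Lemmas~\ref{lem:redqs} and~\ref{lem:cac}, so this result is essentially a one-line consequence of those lemmas plus the observation that types belong to the $u$-category. The closedness hypothesis does not seem to be used in the argument itself; it is presumably stated only so that the lemma applies directly in the progress and type-safety theorems to follow, where the terms in question will indeed be closed.
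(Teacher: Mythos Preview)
Your proposal is correct and follows essentially the same approach as the paper: both arguments use Lemma~\ref{lem:redqs} to propagate the ``quasi-stuck non-value'' property along the reduction chain and then observe that the endpoint $T$ is a mixed value, yielding a contradiction. The only organizational difference is that you invoke Lemma~\ref{lem:cac} up front to replace the $\to_{ca}^*$ chain by a pure $\to_a^*$ chain, whereas the paper inlines this step inside its induction (at each stage noting $m\not\to_c$ by Lemma~\ref{lem:qsnf}, so the first step must be $\to_a$); the content is the same, and your observation that closedness is not actually used in the argument is accurate.
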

\begin{proof} The proof is similar to the previous one,
and proceeds by induction on the length of the reduction sequence from
$m$ to $T$.  If this length is $0$, the result holds, since types are
mixed values.  So suppose we have $m \to_{ca} m' \to_{ca}^* T$.  Since
$m$ is quasi-stuck, we have $m\not\to_c$ by Lemma~\ref{lem:qsnf}.  So
it must be the case that $m\to_a m'$.  We now consider cases on the
form of $m$.  If $m$ is a mixed value the result holds.  So suppose it
is a non-value.  Then by Lemma~\ref{lem:redqs}, $m'$ must also be a
quasi-stuck non-value, and we may apply the induction hypothesis to
derive a contradiction. \end{proof}

\subsection{Concluding Progress and Type Safety}

Armed with the concept of quasi-stuck terms and its associated lemmas,
we can now obtain the main results of this section.

\begin{thm}[Progress]
\label{thm:progstlc}
If standard term $t$ is closed, $t\to_a^* T$, and $t\not\to_c$, then
$t$ is a (standard) value.
\end{thm}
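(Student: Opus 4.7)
The plan is to chain together the three main lemmas about quasi-stuck terms that were just established, specializing them from the mixed-term setting back to the standard-term setting of the theorem. Since the hypotheses of Progress give us a closed standard term $t$ with $t \not\to_c$ and $t \to_a^* T$, each ingredient is already available; the only work is to line them up.

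First, I would invoke Lemma~\ref{lem:qsnfb} on $t$: it is closed and $c$-normal, so it is quasi-stuck. Second, since $\to_a \subseteq \to_{ca}$, the hypothesis $t \to_a^* T$ gives $t \to_{ca}^* T$, so Lemma~\ref{lem:stuck} applies and tells us that $t$ is a mixed value. Finally, I would observe that $t$ is a standard term by assumption, and among the mixed values $u ::= \lambda x{:}T.m \mid T \To m \mid A \mid a \mid f$ the only ones that are standard terms are $\lambda x{:}T.t'$, $a$, and $f$ (the type-like forms $T \To m$ and $A$ are excluded from the standard term grammar). These are exactly the standard values, so $t$ is a standard value, as required.

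There is no real obstacle here; the heavy lifting was done in Lemmas~\ref{lem:qsnfb}, \ref{lem:redqs}, and~\ref{lem:stuck}, which set up quasi-stuckness precisely so that Progress becomes a direct corollary. The only subtlety worth spelling out in the final step is that the grammar of standard terms does not contain the type constructors $A$ and $T \To m$, so intersecting the mixed-value forms with the standard terms collapses exactly to the standard-value forms; this is an immediate syntactic check rather than an induction.
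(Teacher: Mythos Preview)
Your proof is correct and follows exactly the same route as the paper: apply Lemma~\ref{lem:qsnfb} to get quasi-stuckness, note $t\to_a^*T$ implies $t\to_{ca}^*T$, apply Lemma~\ref{lem:stuck} to get that $t$ is a mixed value, and then observe that a standard mixed value is a standard value. The only difference is that you spell out the final syntactic check more explicitly than the paper does.
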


\begin{proof} By Lemma~\ref{lem:qsnfb} and the assumption $t\not\to_c$, we know $t$ is quasi-stuck.
Now since our assumption that $t\to_a^* T$ implies $t \to_{ca}^* T$,
we can apply Lemma~\ref{lem:stuck} to conclude that $t$ is a mixed
value (and hence a standard value, since $t$ is a standard term).  
\end{proof}

\begin{thm}[Type Safety]
\label{thm:safety}
If standard term $t$ is closed, $t\to_a^* T$, and $t\to_c^*
m\not\to_c$, then $m$ is a standard value.
\end{thm}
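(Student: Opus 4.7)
The plan is to prove this by combining Type Preservation (Theorem~\ref{thm:presstlc}) with Progress (Theorem~\ref{thm:progstlc}) in the standard manner, via induction on the length of the call-by-value reduction sequence $t \to_c^* m$.

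First I would observe that $\to_c \,\subseteq\, \to_b$: the call-by-value evaluation contexts $E_c$ form a subclass of the unrestricted evaluation contexts $E_a$, and every $c(\beta)$ redex $(\lambda x{:}T.\,m)\,u$ is an instance of the more general $b(\beta)$ redex $(\lambda x{:}T.\,m)\,m'$ (since a mixed value $u$ is a mixed term). I would likewise note that concrete reduction on a closed standard term yields a closed standard term: substituting a standard term for a variable in a standard term produces a standard term (no $A$ or $T \To m$ constructors can appear), and neither $c$-rule can introduce free variables.

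Next, by induction on the number of steps in $t \to_c^* m$, using $\to_c \,\subseteq\, \to_b$ together with Theorem~\ref{thm:presstlc} at each step, I would conclude $m \to_a^* T$. Combined with the preservation observations above, $m$ is then a closed standard term satisfying both $m \to_a^* T$ and $m \not\to_c$, so Theorem~\ref{thm:progstlc} (Progress) applies directly and yields that $m$ is a standard value, completing the proof.

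The main obstacle, such as it is, lies entirely in the routine bookkeeping of the first paragraph, namely verifying that $\to_c$ really is a subrelation of $\to_b$ (rather than merely intersecting it) and that standardness and closedness of terms are preserved by $\to_c$. None of this is substantive, since the difficult ingredients---type preservation for full $\beta$-reduction, and progress in the rewriting formulation---have already been established as Theorems~\ref{thm:presstlc} and~\ref{thm:progstlc}.
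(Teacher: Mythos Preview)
Your proposal is correct and follows essentially the same route as the paper: induction on the length of the $\to_c$ sequence, invoking Theorem~\ref{thm:presstlc} (using $\to_c \subseteq \to_b$) to propagate typability, noting that $\to_c$ preserves standardness and closedness, and then applying Theorem~\ref{thm:progstlc} once the hypotheses of Progress are in hand. The only cosmetic difference is that the paper threads the induction hypothesis through the sequence (base case Progress, step case Preservation then IH), whereas you first iterate Preservation to get $m \to_a^* T$ and then apply Progress once; these are the same argument.
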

\begin{proof} The proof is by induction on the length of the reduction
sequence from $t$ to $m$.  In the base case, we apply
Theorems~\ref{thm:progstlc}, since we have $m = t\not\to_c$ in that
case.  For the step case, suppose we have $t\to_c m'\to_c^*
m\not\to_c$.  In this case, we can apply Theorem~\ref{thm:presstlc} to
conclude $m'\to_a^*T$.  It is easily proved by induction on the
structure of call-by-value evaluation contexts $E_c$ that if we have
$t\to_c m'$, then $m'$ is a standard term $t'$.  We may now apply the
induction hypothesis, since we have $t'\to_a^*T$ and $t'\to_c
m\not\to_c$. \end{proof}

\section{Applying Automated Analysis Tools to Type Preservation}
\label{sec:unistc}

In this section, we show how automated tools for analyzing
term-rewriting systems can be applied to automate part of the proof of
type preservation.  We will consider a language, which we call
Uniform-STC, that does not distinguish terms and types syntactically.
Advanced type systems like Pure Type Systems must often rely solely on
the typing rules to distinguish terms and types (and kinds,
superkinds, etc.)~\cite{B92}.  In Uniform-STC, we explore issues that
arise in applying the rewriting approach to more advanced type
systems.  We must now implement kinding (i.e., type checking of types)
as part of the abstract reduction relation.  We adopt a combinatory
formulation so that the abstract reduction relation can be described
by a first-order term-rewriting system.

\begin{figure}[t]
  \centering
\[
\begin{array}{lll}
  \textit{mixed terms}\ t\!&\!\!::=\!\!&\!\!S\langle
  t_1,t_2,t_3\rangle\ |\ K\langle t_1,t_2\rangle\ |\ t_1\ t_2
  |\ t_1\To t_2\ |\ A\ |\ \textit{kind}(t_1,t_2)\\
  \textit{mixed values}\ u\!&\!\!::=\!\!&\!\!S\langle t_1,t_2,t_3 \rangle\ |\ K\langle t_1,t_2\rangle\ |\ A\ |\ t_1\To t_2  \\
  \textit{concrete evaluation contexts}\ E_c\!&\!\!::=\!\!&\!\!*\ |\ E_c\ t\ |\ u\ E_c
\end{array}
\]
  \caption{Uniform-STLC language syntax and evaluation contexts}
  \label{fig:unified-syntax}
\end{figure}

\begin{figure}[t]
\[
  \begin{array}{ll}
  \textit{c}(\beta\textit{-S}). \!\!&\!\! \infer{E_c[S\langle t_1,t_2,t_3\rangle\ u\ u'\ u''] \to_c E_c[u\ u''\ (u'\ u'')]}{\ }
\\ \\
  \textit{c}(\beta\textit{-K}). \!\!&\!\!  \infer{E_c[K\langle t_1,t_2\rangle\ u\ u'] \to_c E_c[u]}{\ }
\\ \\
  \textit{a}(S). \!\!&\!\!  S\langle t_1,t_2,t_3\rangle \to_a \textit{kind}(t_1,\textit{kind}(t_2,\textit{kind}(t_3, (t_1\To t_2 \To t_3) \To (t_1\To t_2) \To (t_1 \To t_3))))
  \\
  \textit{a}(K). \!\!&\!\!  K\langle t_1,t_2\rangle \to_a \textit{kind}(t_1,\textit{kind}(t_2,(t_1\To t_2 \To t_1)))
  \\
  \textit{a}(\beta). \!\!&\!\!  (t_1\To t_2)\ t_1 \to_a \textit{kind}(t_1,t_2)
  \\
  \textit{a}(\textit{k-}\To). \!\!&\!\!  \textit{kind}((t_1 \To t_2), t) \to_a \textit{kind}(t_1,\textit{kind}(t_2,t))
  \\
  \textit{a}(\textit{k-A}). & \textit{kind}(A,t) \to_a t
  \end{array}
\]
  \caption{Concrete and abstract reduction rules}
  \label{fig:uni-rules}
\end{figure}

Figure~\ref{fig:unified-syntax} shows the syntax for the Uniform-STC
language.  There is a single syntactic category $t$ for mixed terms
and types, which include a base type $A$ and simple function types.
$S\langle t_1,t_2,t_3\rangle$ and $K\langle t_1,t_2\rangle$ are the
usual combinators, indexed by terms which determine their simple
types.  The \textit{kind} construct for terms is used to implement
kinding.  The rules for concrete and abstract reduction are given in
Figure~\ref{fig:uni-rules}.  The concrete rules are just the standard
ones for call-by-value reduction of combinator terms.  For abstraction
reduction, we are using first-order term-rewriting rules (unlike for
previous systems).  

%This has the advantage that we can apply automated
%analysis tools for first-order rewriting, to the abstract reduction
%rules.  We could have also done this for the systems above, but
%abstract reduction there was so easily analyzed there was no need.
%Here, the situation is a little more complicated.

For STLC (Section~\ref{sec:presstlc}), abstract $\beta$-redexes have the
form $(T\To t)\ T$.  For Uniform-STC, since there is no syntactic
distinction between terms and types, abstract $\beta$-redexes take the
form $(t_1\To t_2)\ t_1$, and we must use kinding to ensure that $t_1$
is a type.  This is why the $a(\beta)$ rule introduces a
\textit{kind}-term.  We also enforce kinding when abstracting simply
typed combinators $S\langle t_1,t_2,t_3\rangle$ and $K\langle
t_1,t_2\rangle$ to their types.  The rules for \textit{kind}-terms
($a(\textit{k-}\To)$ and $a(\textit{k-A})$) make sure that the first
term is a type, and then reduce to the second term.

Here, we define typability by value $u$ to mean abstract reduction to
$u$ where $u$ is \emph{kindable}, which we define as
$\textit{kind}(u,A)\to_a^* A$.  This definition avoids the need to
define types syntactically.

Following the methodology embodied in Theorem~\ref{thmtp}, we must
first prove the abstract reduction is confluent.  In fact, it is
convergent, and we can apply analysis tools to determine this, as
shown in the next two theorems.

\begin{thm}
  \label{uni-term}
  The term rewriting system $\to_a$ is terminating.
\end{thm}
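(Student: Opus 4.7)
The plan is to invoke one of the standard automated termination provers for first-order term rewriting systems (such as AProVE or TTT2), since the rules of Figure~\ref{fig:uni-rules} form a finite TRS over the fixed signature consisting of $A$, $\To$, $\textit{kind}$, application, $S\langle\cdot,\cdot,\cdot\rangle$, and $K\langle\cdot,\cdot\rangle$. This is exactly the sort of input those tools consume, and it matches the stated goal of this section: to demonstrate that automated analysis can discharge the meta-theoretic obligations arising in the rewriting view of typing.

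To have a human-checkable witness of what such a tool will produce, I would exhibit a monotone polynomial interpretation into $\mathbb{N}$ that strictly decreases under every rule. The main obstacle is that $a(S)$ and $a(K)$ are strongly duplicating: on the right-hand side of $a(S)$ the variable $t_1$ occurs four times and $t_2$, $t_3$ each occur three times, and $a(K)$ similarly increases the number of occurrences of each argument. Consequently, no simple symbol-counting measure and no purely syntactic path ordering without an appropriate status will suffice; the interpretation must assign the combinator symbols coefficients strictly larger than the per-variable occurrence counts on their right-hand sides.

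A workable choice sets $[A]$ to a small positive constant, $[t_1\To t_2]=t_1+t_2+1$, $[\textit{kind}(t_1,t_2)]=t_1+t_2$, $[t_1\ t_2]=2t_1+t_2+1$, and takes $[S\langle t_1,t_2,t_3\rangle]$ and $[K\langle t_1,t_2\rangle]$ to be linear polynomials whose coefficients dominate the above occurrence counts, with an additive constant large enough to absorb the extra constants accumulated by the nested $\To$ and $\textit{kind}$ applications on the right-hand sides. Monotonicity of every function symbol is then immediate from positivity of all coefficients, and strict decrease under each of the five rules reduces to comparing two linear expressions in the rule variables.

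The verification of these five inequalities is mechanical linear arithmetic; the honest route, and the one in keeping with the theme of this section, is to encode the TRS and hand it to a termination prover, which will discover a suitable interpretation (polynomial, matrix, or otherwise) and certify the result. I expect no conceptual difficulty beyond the choice of interpretation; the duplicating $a(S)$ and $a(K)$ rules are the only place where a naive proof would founder.
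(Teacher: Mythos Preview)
Your proposal is correct and matches the paper's approach: the paper simply hands the system to \textsc{AProVE}, which certifies termination using a recursive path ordering. One small correction to your aside: the duplicating rules $a(S)$ and $a(K)$ are no obstacle to a syntactic path ordering---placing $S$ and $K$ above $\textit{app}$, $\textit{kind}$, $\To$, and $A$ in the precedence handles them immediately---so a plain LPO already suffices, and your polynomial-interpretation backup, while perfectly sound, is not the witness the tool actually produces.
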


\begin{proof}
The automated termination checker \textsc{Aprove} reports that the
rewrite system for $\to_a$ is terminating, using a recursive path
ordering~\cite{aprove}.
\end{proof}

\begin{thm}
\label{uni-confluent}
  The term rewriting system  $\to_a$ is confluent.
\end{thm}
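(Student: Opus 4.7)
The plan is to reduce confluence to local confluence via Newman's lemma, exploiting the termination established in Theorem~\ref{uni-term}. For a first-order TRS, the critical pair lemma says that local confluence is equivalent to joinability of all critical pairs, so it suffices to enumerate the critical pairs arising among the five rules of Figure~\ref{fig:uni-rules} and check that each one closes.

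The key step is the critical pair analysis. I would walk through each pair of rules (including each rule with a renamed copy of itself) and, for each non-variable position in one left-hand side, try to unify the subterm at that position with the other left-hand side. The relevant non-variable proper subterms are $t_1\To t_2$ inside the LHSs of $a(\beta)$ and $a(\textit{k-}\To)$, and the constant $A$ inside the LHS of $a(\textit{k-A})$; all other arguments are variables. No rule has $\To$ at its root, so nothing can be rewritten at the $t_1\To t_2$ position of either LHS. The constant $A$ similarly fails to unify with the head of any LHS. At the root, the only potential clash between distinct rules is between $a(\textit{k-}\To)$ and $a(\textit{k-A})$, both rooted at $\textit{kind}$, but this requires unifying $A$ with $t_1\To t_2$, which fails. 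Root self-overlaps only produce trivial pairs. Hence there are no non-trivial critical pairs, local confluence follows vacuously, and Newman's lemma converts this into confluence. Consistent with the theme of this section, one could equivalently invoke an automated confluence checker such as CSI as a cross-verification.

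The main subtlety I would want to flag is that rule $a(\beta)$ is non-left-linear, since $t_1$ appears twice in $(t_1\To t_2)\ t_1$. Non-left-linearity often causes trouble for confluence of non-terminating systems, but it does not affect the critical pair lemma in the terminating case, so the Newman-plus-critical-pairs argument goes through unchanged. The only real work is the short case analysis above, which I do not expect to present any genuine obstacle.
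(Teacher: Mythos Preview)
Your proposal is correct and follows essentially the same route as the paper: both reduce confluence to local confluence via termination (Theorem~\ref{uni-term}) and Newman's lemma, and both point to an automated confluence checker as an alternative. The only presentational difference is that you invoke the critical pair lemma explicitly and observe that there are no non-trivial critical pairs, whereas the paper argues local confluence more directly by noting that all $aa$-peaks join, with ``additional balancing steps'' needed when the non-left-linear rule $a(\beta)$ is involved; those balancing steps are precisely the variable-position overlaps that the critical pair lemma already handles, so your formulation is slightly crisper but not materially different.
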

\begin{proof}
Abstract reduction for Uniform-STC does not have the diamond property
due to the non-left-linear rule $a(\beta)$, where there could indeed
be redexes in the expressions matching the repeated variable $t_1$.
By Theorem~\ref{uni-term}, however, we can apply Newman's Lemma to
conclude confluence from local confluence.  Local confluence follows
because all the $aa$-peaks can be joined using either one $a$-step on
either side as for STLC, or else using additional balancing steps if
one of the rules applied is $a(\beta)$.  

But even easier than this reasoning is applying an automated
confluence checker: the ACP tool immediately reports that the abstract
reduction relation is confluent~\cite{aoto+09}.
\end{proof}

The proofs of Theorems~\ref{uni-term} and~\ref{uni-confluent}
demonstrate how the rewriting approach to typing benefits from recent
advances in analysis tools for term rewriting: we can use termination
and confluence checkers to analyze the abstract reduction relation
$\to_a$ corresponding to typing.  We expect this situation to recur
for more advanced type systems, although some may provide new
challenges for automated analysis tools (we give an example below).

\begin{lem}
\label{uni-complete}
  $\leftarrow_a \cdot \textit{Id}_{\leftarrow_a^*(S)} \cdot \to_c \subse (\to_c \cup \to_a^*) \cdot \leftarrow_a^*$. 
\end{lem}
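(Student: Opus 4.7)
The plan is to apply Theorem~\ref{thmtp} to deduce type preservation for $\to_c$ in Uniform-STC, instantiating its parameter $S$ with the set of $\to_a$-normal ``types'' built from $A$ and $\To$; this lemma is precisely that theorem's condition~(3). Following the template of Lemma~\ref{lem:beta}, I would proceed by case analysis on the relative positions, inside $m$, of the $a$-redex yielding $m \to_a m_1$ and the $c$-redex yielding $m \to_c m_2$, and in each case exhibit a common reduct $m_3$ satisfying $m_2 \to_a^* m_3$ together with $m_1 \to_c m_3$ or $m_1 \to_a^* m_3$.

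A useful preliminary observation is that the call-by-value evaluation contexts $E_c$ descend only through applications with a value on the left; consequently no $c$-redex can lie inside the $\langle\cdot\rangle$-indices of $S\langle\ldots\rangle$ or $K\langle\ldots\rangle$, inside either argument of $\textit{kind}(-,-)$, or to the left of a $\To$. This substantially restricts how an $a$-redex can sit strictly above a $c$-redex. For parallel positions the diagram closes by swapping the two steps, with the twist that $c(\beta\text{-S})$ duplicates its second argument $u'$, so an $a$-redex inside $u'$ forces $m_2 \to_a^2 m_3$ against $m_1 \to_c m_3$. When the $c$-redex is strictly above the $a$-redex the analogous duplication argument closes the diagram. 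The remaining interesting overlaps occur at the head: if $a(S)$ (resp.\ $a(K)$) fires at the head of a $c(\beta\text{-S})$ (resp.\ $c(\beta\text{-K})$) redex, the $a$-step in $m_1$ installs a $\textit{kind}$-tower which can be further $a$-reduced, using that the value arguments must themselves $a$-reduce to types by the typability hypothesis, to reach a common $m_3$ with $m_2$.

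The main obstacle will be the non-left-linear rule $a(\beta)$: if a $c$-redex sits inside the right $t_1$ argument of $(t_1 \To t_2)\ t_1$, a single $c$-step desynchronizes the two copies of $t_1$, so $m_2$ no longer exposes $a(\beta)$ at that position, while in $m_1 = C[\textit{kind}(t_1, t_2)]$ the stranded $c$-redex sits inside a $\textit{kind}$ which $\to_c$ cannot enter. Here the typability hypothesis $m \in \leftarrow_a^*(S)$ is essential: in order for $m$ to reach a type, the subterm $t_1$ itself must $a$-reduce to a type $T$, which lets us $a$-normalize both $m_1$ and $m_2$ past this position (in $m_1$, by reducing $t_1$ under $\textit{kind}$ and then firing $a(k\text{-}\To)$ or $a(k\text{-A})$; in $m_2$, by $a$-normalizing both copies of $t_1$ to the common $T$ and then firing $a(\beta)$), absorbing the stray $c$-redex and yielding the common reduct $m_3$.
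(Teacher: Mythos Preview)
There is a genuine gap in your case ``$c$-redex strictly above the $a$-redex, closed by the duplication argument''. If the $a$-step is $a(K)$ or $a(S)$ fired at the root of one of the value \emph{arguments}---say $u = K\langle a,b\rangle$ in $K\langle t_1,t_2\rangle\ u\ u'$---then the $a$-reduct $\textit{kind}(\ldots)$ is no longer a value, so in $m_1 = E_c[K\langle t_1,t_2\rangle\ \textit{kind}(\ldots)\ u']$ the $c$-redex has been destroyed and there is no $c$-step to swap with. The paper's proof avoids this by never trying to re-fire $\to_c$ on the $m_1$ side: whenever the two steps are not parallel it $a$-reduces \emph{both} $m_1$ and $m_2$ all the way to a common $E_c[t_1]$ (resp.\ $E_c[t_3]$ for $S$), using typability and non-erasingness of $\to_a$ to force each value argument to $a$-reduce to the corresponding type index. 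You need this same device here, not merely duplication bookkeeping.

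Your $a(\beta)$ case---which you are right to single out, and which the paper's two-case split does not visibly address---has a subtler problem. After $m\to_c m_2$ the right-hand argument is $t_1'$ with $t_1\to_c t_1'$, so ``$a$-normalizing both copies of $t_1$ to the common $T$'' tacitly assumes $t_1'\to_a^* T$, which is type preservation for the subterm $t_1$ and hence circular as stated. The circularity can be broken: from typability of the $c$-redex $r$ sitting inside $t_1$ one checks directly, exactly as in your head-overlap analysis, that $r$ and its $c$-contractum share the same $a$-normal form, whence $t_1$ and $t_1'$ do too. But this step must be spelled out; ``absorbing the stray $c$-redex'' does not discharge it.
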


\begin{proof}
We distinguish the peaks originating at typable terms $t$. 

If $\leftarrow_a$ and $\to_c$ steps are parallel --  $E'_c[t]
\leftarrow_a E_c[t] \leftarrow_{id} E_c[t] \to_{id} E_c[t] \to_c E_c[t']$ -- the peak can be completed
directly $E'_c[t] \to_{id} E'_c[t] \to_c E'_c[t'] \leftarrow_a E_c[t']
\leftarrow_{id} E_c[t']$.

If the $\leftarrow_a$ and $\to_c$ steps overlap, there are two cases,
corresponding to $\textit{c}(\beta\textit{-K})$ and
$\textit{c}(\beta\textit{-S})$ reduction steps. We show the completion
for $\textit{c}(\beta\textit{-K})$ peaks (omitting the $\to_{id}$
steps to simplify the presentation); the argument for
$\textit{c}(\beta\textit{-S})$ peaks is similar.

\[
\begin{array}{ll}
P. & E_c[u[(\hat{t}\ t\ t')]] \leftarrow_a E_c[(K\langle t_1,t_2\rangle\ t\ t')] \to_a E_c[t]\\
L. & E_c[u[(\hat{t}\ t\ t')]] \to_a^*  E_c[u[(\hat{t}\ t_1\ t'')]] \to_a E_c[u[((t_2\To t_1)\ t'')]] \to_a^*\\
\ & E_c[u[((t_2\To t_1)\ t_2)]] \to_a E_c[\textit{kind}(t_1,\textit{kind}(t_2,t_1))] \to_a^* E_c[t_1] \\
R. & E_c[t] \to_a^*  E_c[t_1]
\end{array}
\]

\noindent The $\to_a^*$-steps are justified because the peak term
(shown on line (P)) is typable by composition with $\textit{Id}_{\leftarrow_a^*(S)}$.  By confluence of
abstract reduction, this implies that the sources of all the left
steps are also typable.  For each $\to_a^*$-step, since abstract
reduction cannot drop redexes (as all rules are non-erasing), we argue
as for STLC that a descendant of the appropriate displayed
\textit{kind}-term or application must eventually be contracted, as
otherwise, a stuck descendant of such would remain in the final term.
Kindable terms cannot contain stuck applications or stuck
\textit{kind}-terms, because our abstract reduction rules are
non-erasing.  And contraction of those displayed \textit{kind}-terms
or applications requires the reductions used for the $\to_a^*$-steps,
which are sufficient to complete the peak.
\end{proof}

\begin{lem}
  \label{uni-normal}
  Every normal form with respect to $\to_a$ is also a normal form with respect to $\to_b$.
\end{lem}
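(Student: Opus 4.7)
The plan is to prove the contrapositive: I will show that whenever a mixed term $m$ admits a $\to_b$-step, it also admits a $\to_a$-step. Here I read $\to_b$ as the concrete reduction relation of Uniform-STC playing the role of the generic $\to_b$ from condition~(4) of Theorem~\ref{thmtp}, which in this section is generated by the two rules $\textit{c}(\beta\textit{-S})$ and $\textit{c}(\beta\textit{-K})$ of Figure~\ref{fig:uni-rules} (with or without the $E_c$ context restriction, since the argument below is position-agnostic).

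Suppose $m \to_b m'$. Then by inspection of the two left-hand sides in Figure~\ref{fig:uni-rules}, some subterm of $m$ has the form
\[
S\langle t_1,t_2,t_3\rangle\ u\ u'\ u'' \quad\text{or}\quad K\langle t_1,t_2\rangle\ u\ u'.
\]
In either case, the head of that application --- namely $S\langle t_1,t_2,t_3\rangle$ or $K\langle t_1,t_2\rangle$ itself --- occurs as a subterm of $m$ at a definite position. This is the only syntactic fact I need.

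Now I appeal to the abstract rules $\textit{a}(S)$ and $\textit{a}(K)$, whose left-hand sides are precisely $S\langle t_1,t_2,t_3\rangle$ and $K\langle t_1,t_2\rangle$. Because the abstract reduction relation for Uniform-STC is presented in Figure~\ref{fig:uni-rules} as an ordinary first-order term-rewriting system, with no evaluation-context restriction, these rules apply at an arbitrary subterm position. Hence I may fire $\textit{a}(S)$ or $\textit{a}(K)$ at the position identified above, producing some $m''$ with $m \to_a m''$. Therefore $m$ is not a $\to_a$-normal form, establishing the contrapositive.

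I do not expect a real obstacle here; the argument is entirely syntactic and is essentially the Uniform-STC analogue of the remark used at the corresponding step of Theorem~\ref{thm:presstlc} and Theorem~\ref{prop:confstlch}, where every $\beta$- or $f$-$\beta$-redex already contained an $\textit{a}(\lambda)$- or $\textit{a}(f)$-redex. The only tiny point to be careful about is confirming that a $\to_b$-step really forces such a head-combinator to sit as a genuine subterm of $m$ (not merely to appear after some $\to_a$-reductions), but this is immediate from the way left-hand sides of the concrete rules in Figure~\ref{fig:uni-rules} are structured.
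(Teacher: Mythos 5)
Your proof is correct, and it takes a genuinely different route from the paper's. The paper argues in the forward direction: it (partially) characterizes the shapes of $\to_a$-normal forms --- $A$, $t_1 \To t_2$ with $t_1,t_2$ in $a$-normal form, mismatched applications $(t_1 \To t_2)\ t_1'$ with $t_1 \neq t_1'$, and stuck $\textit{kind}$-terms --- and then checks by inspection that such terms admit no concrete step. You instead prove the contrapositive: any concrete redex $S\langle t_1,t_2,t_3\rangle\ u\ u'\ u''$ or $K\langle t_1,t_2\rangle\ u\ u'$ contains the head combinator as a subterm, and since $\textit{a}(S)$ and $\textit{a}(K)$ are unrestricted first-order rewrite rules whose left-hand sides are exactly $S\langle t_1,t_2,t_3\rangle$ and $K\langle t_1,t_2\rangle$, the term already admits an $\to_a$-step. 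Your version is arguably tighter: it avoids any enumeration of normal forms (the paper's list is hedged with ``include'' and its final inspection only explicitly treats the $A$ and $t_1 \To t_2$ cases, leaving the stuck-application and stuck-$\textit{kind}$ cases implicit), whereas the paper's approach yields as a by-product a useful description of what $a$-stuck terms look like, which is in the spirit of the quasi-stuck analysis of Section~\ref{sec:progtpsafe}. Note also that your contrapositive argument is precisely the pattern the paper itself uses to verify condition (4) inline for STLC in Theorems~\ref{thm:presstlc} and~\ref{prop:confstlch}, so you have in effect transported that argument to Uniform-STC; and you were right to read the lemma's $\to_b$ as the concrete relation $\to_c$ of this section, since that is how Theorem~\ref{thm:presuni} invokes it.
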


The normal forms of $\to_a$ include $A$, $t_1 \To t_2$ where $t_1$
and $t_2$ are $a$-normal forms, $(t_1 \To t_2)\ t'_1$ where $t_1 \not= t'_1$, and
$kind(t_1,t)$ where $t_1$ is not generated by the grammar $T ::= A | T
\To T$. By inspection, $E_c[A] \not\to_c$ and $E_c[t_1 \To t_2] \not\to_c$.

\begin{thm}[Type Preservation]
\label{thm:presuni}
Let $m,m'$ be mixed terms and $T$ be a term such that $kind(T,t) \to_a t$. 
If $m \to_a^* T$ and $m\to_c m'$, then $m'\to_a^* T$.

\begin{proof}
  By application of Theorem~\ref{thmtp}. Condition (1) is satisfied by
  instantiating $S$ by the set  of terms $\{ t | kind(t,t') \to_a
  t\}$. Condition (2) follows by
  Theorem~\ref{uni-confluent}. Condition (3) by
  Lemma~\ref{uni-complete}, condition (4) by Lemma~\ref{uni-normal}.
\end{proof}
\end{thm}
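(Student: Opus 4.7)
The plan is to apply Theorem~\ref{thmtp} directly: essentially all of the hard work has already been done in Theorem~\ref{uni-confluent} and in Lemmas~\ref{uni-complete} and~\ref{uni-normal}. I will take the abstract and concrete relations of Uniform-STC in place of the generic $\to_a$ and $\to_b$ of Theorem~\ref{thmtp}, and instantiate the set $S$ with the class of kindable terms $\{T \mid \textit{kind}(T,t) \to_a^* t\}$ that the hypothesis on $T$ picks out.

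The four conditions of Theorem~\ref{thmtp} then line up cleanly. For condition (1) I need to argue that every kindable term is already in $\to_a$-normal form. This follows by a short structural analysis of $T$: inspection of the $a$-rules shows that the only way for $\textit{kind}(T,t)$ to reduce to $t$ is to peel off $\To$-constructors by $\textit{a}(\textit{k-}\To)$ and finally bottom out with $\textit{a}(\textit{k-A})$, which forces $T$ to be built only from $A$ and $\To$ and hence to contain none of the redex shapes $S\langle\cdot\rangle$, $K\langle\cdot\rangle$, $(t'\To t'')\,t'$, or $\textit{kind}(\cdot,\cdot)$. Condition (2) is immediate from Theorem~\ref{uni-confluent}, and condition (4) is Lemma~\ref{uni-normal}.

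The heart of the matter is condition (3), the local joinability property $\leftarrow_a \cdot \textit{Id}_{\leftarrow_a^*(S)} \cdot \to_c \subse (\to_c \cup \to_a^*) \cdot \leftarrow_a^*$, but this is precisely the content of Lemma~\ref{uni-complete}. Once all four conditions are assembled, Theorem~\ref{thmtp} yields the desired conclusion: whenever $T \in S$, $m \to_a^* T$, and $m \to_c m'$, we have $m' \to_a^* T$.

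The genuinely hard step, now behind us, was Lemma~\ref{uni-complete}: the non-left-linearity of $\textit{a}(\beta)$ prevents the overlapping $\textit{c}(\beta\textit{-K})$ and $\textit{c}(\beta\textit{-S})$ peaks from being closed locally, and one must use typability together with confluence of $\to_a$ to force descendants of the displayed $\textit{kind}$-terms and applications to eventually be contracted on the abstract side. With that lemma in hand, the present theorem is just a clean instantiation of the generic preservation result from Section~\ref{sec:ars}.
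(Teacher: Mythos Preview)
Your proof is essentially the paper's own: apply Theorem~\ref{thmtp} with $S$ the kindable terms, and discharge the four hypotheses via Theorem~\ref{uni-confluent}, Lemma~\ref{uni-complete}, and Lemma~\ref{uni-normal}.

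One small wrinkle in your justification of condition~(1): with $S=\{T\mid \textit{kind}(T,t)\to_a^* t\}$ you do \emph{not} get only $\to_a$-normal forms, because your structural argument overlooks that reduction may happen \emph{inside} $T$ before the outer $\textit{kind}$ is peeled. For instance $S\langle A,A,A\rangle$ satisfies $\textit{kind}(S\langle A,A,A\rangle,A)\to_a^* A$ yet is itself an $a(S)$-redex. The clean fix (implicit in both your analysis and the paper's intent) is to take $S$ to be the terms generated by $T::=A\mid T\To T$; these are exactly the kindable $\to_a$-normal forms, and for them your ``peel off $\To$, bottom out at $A$'' description is correct. The paper's own proof is no more careful on this point, so this is a shared looseness rather than a divergence in approach.
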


\begin{thm}
  \label{uni-combined-confluent}
  Every mixed typable term is confluent with respect to the reduction
  relation $\to_{ac}$.

  \begin{proof}
For proving that $\to_{ba}$ is confluent for typable mixed terms we 
need to check properties (1) to (4) of Theorem \ref{thmcr} for the 
particular relations $\textit{Id}_{\leftarrow_a^*(S)} \cdot \to_a$
and $\textit{Id}_{\leftarrow_c^*(T)} \cdot \to_c$. The composition
of $\to_a$ and $\to_b$ with $\textit{Id}_{\leftarrow_a^*(S)}$ serves
to ensure that we are only considering typable terms.

Property (2) follows from Theorem \ref{uni-confluent} and the fact that
$\textit{Id}_{\leftarrow_a^*(\textit{Types})}$ is the identity relation. All 1-step
peaks of must be of the form $m \leftarrow m \to m$, due to the
composition with $\textit{Id}_{\leftarrow_a^*(\textit{Types})}$. By Theorem
\ref{uni-confluent}, if $m_1 \leftarrow_a m \to_a m_2$, then there exists
$m_3$ such that $m_1 \to_a^* m_3 \leftarrow_a m_2$. Thus, any
$\textit{Id}_{\leftarrow_a^*(\textit{Types})} \cdot \to_a$ peak $m_1 \leftarrow_a m
\leftarrow_{id} m \to_{id} m \to_a m_2$ can be completed with $m_1
\to_{id} m_1 \to_a m_3 \leftarrow_a m_2 \leftarrow_{id}
m_2$. By Theorem \ref{uni-term} $\to_a$ is terminating,
so $\textit{Id}_{\leftarrow_a^*(\textit{Types})} \cdot \to_a \subse \to_a$ is also
terminating, proving property (1). Property (3) follows from Lemma
\ref{uni-complete}. Property (4) follows from Lemma~\ref{uni-normal}.
  \end{proof}
  
\end{thm}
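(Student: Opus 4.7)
The plan is to follow the same strategy used for STLC in Theorem \ref{prop:confstlch} and Corollary \ref{prop:confstlc}, now instantiated for the Uniform-STC setting. I will apply the generic confluence theorem, Theorem \ref{thmcr}, to the two relations $\textit{Id}_{\leftarrow_a^*(S)} \cdot \to_a$ and $\textit{Id}_{\leftarrow_a^*(S)} \cdot \to_c$, where $S$ denotes the set of kindable values, i.e., those $u$ for which $\textit{kind}(u,A) \to_a^* A$. Composing with $\textit{Id}_{\leftarrow_a^*(S)}$ on the left restricts attention to typable mixed terms, which is needed since Lemma \ref{uni-complete} assumes typability at the source of each peak, and since the theorem being proved explicitly quantifies over typable terms.

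Verifying the four hypotheses of Theorem \ref{thmcr} will be mostly bookkeeping. Termination (condition (1)) follows from Theorem \ref{uni-term}, since the identity-restricted version is still a subrelation of $\to_a$. Confluence of $\to_a$ (condition (2)) follows from Theorem \ref{uni-confluent}, with the same bookkeeping observation that was used for Theorem \ref{prop:confstlch}: a peak of $\textit{Id}_{\leftarrow_a^*(S)} \cdot \to_a$ is of the form $m_1 \leftarrow_a m \leftarrow_{id} m \to_{id} m \to_a m_2$ with $m$ typable, and the common $\to_a^*$-reduct supplied by Theorem \ref{uni-confluent} can be prefixed by identity steps on both sides. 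Condition (3), the commutation $\leftarrow_a \cdot \textit{Id}_{\leftarrow_a^*(S)} \cdot \to_c \subseteq (\to_c \cup \to_a^*) \cdot \leftarrow_a^*$, is exactly Lemma \ref{uni-complete}. Condition (4), that every $a$-normal form is a $c$-normal form, is Lemma \ref{uni-normal}.

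With the four conditions verified, Theorem \ref{thmcr} delivers confluence of
\[
\bigl(\textit{Id}_{\leftarrow_a^*(S)} \cdot \to_a\bigr) \cup \bigl(\textit{Id}_{\leftarrow_a^*(S)} \cdot \to_c\bigr).
\]
Exactly as in the proof of Corollary \ref{prop:confstlc}, this union equals $\textit{Id}_{\leftarrow_a^*(S)} \cdot \to_{ac}$, and confluence of this restricted relation is by definition the statement that every typable mixed term is confluent with respect to $\to_{ac}$.

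The main obstacle I expect is making sure that typability is actually preserved along the peaks and joining sequences that appear inside the application of Theorem \ref{thmcr}. For STLC, types are manifestly $a$-normal, so typability of reducts was immediate; here, kindability of a reduct relies on Theorem \ref{thm:presuni} (type preservation for Uniform-STC) together with confluence of $\to_a$ (Theorem \ref{uni-confluent}). I will need to carefully check that whenever the abstract machinery of Theorem \ref{thmcr} constructs a new joining term, that term is still in $\leftarrow_a^*(S)$, so the identity prefix $\textit{Id}_{\leftarrow_a^*(S)}$ can be reapplied at the next step of the argument.
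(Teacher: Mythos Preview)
Your proposal is correct and follows essentially the same approach as the paper: apply Theorem~\ref{thmcr} to the identity-restricted relations, discharging conditions (1)--(4) via Theorems~\ref{uni-term} and~\ref{uni-confluent} and Lemmas~\ref{uni-complete} and~\ref{uni-normal}, then conclude as in Corollary~\ref{prop:confstlc}. The closure concern you raise in your final paragraph is legitimate and is resolved exactly as you indicate (via confluence of $\to_a$ and Theorem~\ref{thm:presuni}); the paper's own proof glosses over this same point.
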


As an aside, note that a natural
modification of this problem is out of the range of ACP, version 0.20.
Suppose we are trying to group kind-checking terms so that we can
avoid duplicate kind checks for the same term.  For this, we may wish
to permute \textit{kind}-terms, and pull them out of other term
constructs.  The following rules implement this idea, and can be
neither proved confluent nor disproved by ACP, version 0.20.  Just the
first seven rules are also unsolvable by ACP. {\small
\begin{verbatim}
(VAR a b c A B C D)
(RULES
  S(A,B,C) -> kind(A,kind(B,kind(C,
              arrow(arrow(arrow(A,arrow(B,C)),arrow(A,B)),arrow(A,C)))))
  K(A,B) -> kind(A,kind(B,arrow(A,arrow(B,A))))
  app(arrow(A,b),A) -> kind(A,b)
  kind(base,a) -> a
  kind(arrow(A,B),a) -> kind(A, kind(B, a))
  kind(A,kind(A,a)) -> kind(A,a)
  kind(A,kind(B,a)) -> kind(B,kind(A,a))
  app(kind(A,b),c) -> kind(A,app(b,c))
  app(c,kind(A,b)) -> kind(A,app(c,b))
  arrow(kind(A,b),c) -> kind(A,arrow(b,c))
  arrow(c,kind(A,b)) -> kind(A,arrow(c,b))
  kind(kind(a,b),c) -> kind(a,kind(b,c))
)
\end{verbatim}
}

\section{Generalizing Nuprl's Direct Computation Rules}
\label{sec:genpresstlc}

Martin-L\"of's Intuitionistic Type Theory (ITT), as formulated
in~\cite{martinloef+84}, is a system of four judgments presented with
a rigorous but informal semantics.  A typing judgment of the form
$a\in A$ ``means that $a$ has a canonical object of the canonical type
denoted by A as value''~\cite[page 174]{martinloef+84}.  Here,
Martin-L\"of is making use of the concept of a term (of ITT) having a
value, a concept he defines earlier in the paper.  The authors of the
Nuprl system realized that this semantics justifies more permissive
typing rules than allowed by Martin-L\"of's own formal
systems~\cite{constable+86} (see also Section 2.2 of~\cite{allen+06}
for a historical perspective).  In particular, it justifies so-called
\emph{direct computation} rules, which turned out to be useful for
formal development with Nuprl:
\[
\infer{t\in T}{t\to^*t' & t'\in T}
\]
\noindent Applying Theorem~\ref{thm:relatetyp}, we can view this rule
from a rewriting perspective.  We will use call-by-value reduction, as
full $\beta$-reduction would require additional technicalities that
would not be illuminating (we would have to use parallel reduction and
incorporate a proof of confluence of $\beta$-reduction, in order to
get preservation of generalized typing).
\[
\infer{t\to_a^* T}{t\to_c^*t' & t'\to_a^* T}
\]
\noindent In this section, we will take the idea of Nuprl's direct
computation rules one step further, by adopting the following
definition.

\begin{defi}[Generalized Typability]
A mixed term $m$ is called \textbf{generalized typable}
if $m\to_{ca}^* T$ for some type $T$.
\end{defi}

\noindent This allows us to view (call-by-value versions of) Nuprl's
direct computation rules as embodying a special case of generalized
typability, namely $\to_c^*\cdot\to_a^*$.  We will see in this section
that we can prove type preservation directly for generalized typing,
using the rewriting approach.  Note that generalized typability is not
obviously decidable, since $\to_{ca}$ is not terminating

A simple example of generalized typability is given by the
term $(\lambda x:A.\ \lambda y:A.y)\ \lambda x:A. x\ x$.  Note that
the argument term $\lambda x:A. x\ x$ is not simply typable.  This
term has several $ca$-reduction sequences, including the following
one:
\[
\begin{array}{l}
(\lambda x:A.\ \lambda y:A.y)\ \lambda x:A.x\ x\ \to_a\\
(\lambda x:A.\ (A \To A))\ \lambda x:A.x\ x\ \to_a \\
(\lambda x:A.\ (A \To A))\ (A\To (A\ A))\ \to_c \\
A\To A
\end{array}
\]
\noindent Because this term $ca$-reduces to a type, the generalized
type-safety property we will obtain in this section tells us that the
$c$-normal form of this term, if such exists, is a value.  This can,
of course, be confirmed for this case, where the $c$-normal form is
just $\lambda y:A.y$.  Notice that this example also shows that
$\to_{ca}$ is not confluent, as we can also reduce it to a stuck term
in this way:
\[
\begin{array}{l}
(\lambda x:A.\ \lambda y:A.y)\ \lambda x:A.x\ x\ \to_a\\
(\lambda x:A.\ (A \To A))\ \lambda x:A.x\ x\ \to_a \\
(\lambda x:A.\ (A \To A))\ (A\To (A\ A))\ \to_a \\
(A\To (A \To A))\ (A\To (A\ A))\ \not\to_{ca}
\end{array}
\]

\begin{thm}[Generalized Type Preservation for Call-By-Value Reduction]
\label{thm:genpresstlc}
If $m \to_{ca}^* T$ and $m\to_c m'$, then $m'\to_{ca}^* T$.
\end{thm}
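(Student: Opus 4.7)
The plan is to proceed by strong induction on the length $n$ of the reduction $m \to_{ca}^n T$. The base case $n = 0$ is vacuous: types contain neither $\lambda$-abstractions nor subterms of the form $f\ a$, so $T \not\to_c$, contradicting the hypothesis $m \to_c m'$. For the inductive step, I consider the first step of the reduction, $m \to_{ca} m_1 \to_{ca}^{n-1} T$. Because $\to_c$ is deterministic---the grammar of $E_c$ yields a unique decomposition of any term admitting a $\to_c$-redex---if this first step is a $\to_c$-step it immediately forces $m_1 = m'$, so $m' \to_{ca}^{n-1} T$ and we are done.

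Otherwise the first step is $m \to_a m_1$, and I must close the peak $m_1 \leftarrow_a m \to_c m'$. Following the schema of the proof of Lemma~\ref{lem:beta}, I split on the relative positions of the two redexes. The subcase in which the $a$-redex lies strictly above the $c$-redex is impossible: the only interior positions inside the left-hand side of an $a$-rule are inside a $\lambda$-body or to the right of a $T \To$, neither of which is reachable by an $E_c$ context. In the parallel subcase, and in the subcases where the $a$-redex is strictly below the $c$-redex but does not destroy its root structure---the $a$-step lies inside $s$ or inside $u$ of a $c$-redex $(\lambda x:T.s)\ u$---I can construct a common reduct $m_3$ with $m_1 \to_c m_3$ and $m' \to_a^* m_3$. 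Then the induction hypothesis applied to $m_1 \to_c m_3$ (using $m_1 \to_{ca}^{n-1} T$) gives $m_3 \to_{ca}^* T$, and so $m' \to_a^* m_3 \to_{ca}^* T$.

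The hard case is when the $a$-step destroys the $c$-redex: either $a(\lambda)$ reduces the $\lambda$ of $(\lambda x:T.s)\ u$, or $a(f)/a(a)$ reduces a component of $f\ a$. In each of these subcases the peak can only be closed by $\to_a^*$ on both sides. For the representative $a(\lambda)$ instance, $m_1 = E_c[(T \To [T/x]s)\ u]$ and $m' = E_c[[u/x]s]$, and the natural common reduct $m_3 = E_c[[T/x]s]$ requires $u \to_a^* T$. I extract this fact from the generalized typability of $m_1$: in $m_1$ the displayed application is not a $c$-redex, and the mixed value $u$ is $c$-normal, so any reduction of $m_1$ to a type must eventually contract the descendant of this application by $a(\beta)$, which in turn forces $u \to_a^* T$. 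The main obstacle is then to argue $m_3 \to_{ca}^* T$ from $m_1 \to_a^* m_3$ and $m_1 \to_{ca}^{n-1} T$, since the $c$-only induction hypothesis does not apply to the intermediate $a$-steps. I plan to address this by strengthening the inductive statement so that it covers arbitrary $\to_{ca}$-steps, handling the resulting $a$-$a$ peaks via confluence of $\to_a$ (Theorem~\ref{lem:confl}) and exploiting termination of $\to_a$ (Theorem~\ref{thm:termabstr}) to iterate the stronger hypothesis along the $\to_a^*$-bridge from $m_1$ to $m_3$.
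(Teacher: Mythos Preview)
Your overall structure mirrors the paper's direct induction on the length of the $ca$-reduction, and your handling of the determinism-of-$\to_c$ case, the parallel case, and the non-destructive sub-cases is sound and essentially matches the paper. The gap is in the hard case, where your proposed remedy does not work.

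The strengthened statement you plan to prove --- ``if $m \to_{ca}^n T$ and $m \to_{ca} m''$, then $m'' \to_{ca}^* T$'' --- is simply false. The paper's own discussion just before the theorem supplies the counterexample: take $m = (\lambda x{:}A.\ \lambda y{:}A.\ y)\ (\lambda x{:}A.\ x\ x)$. Then $m \to_{ca}^* A \To A$, but the single $a$-step applying $a(\lambda)$ at the \emph{outer} $\lambda$ yields $m'' = (A \To \lambda y{:}A.\ y)\ (\lambda x{:}A.\ x\ x)$, from which no type is reachable: the outer application can never fire by $c(\beta)$ (the function part is no longer a $\lambda$) nor by $a(\beta)$ (the argument $a$-normalizes to $A \To (A\ A)$, not $A$). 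So no induction scheme --- with or without a secondary well-founded order from termination of $\to_a$ --- can establish the strengthened claim; confluence of $\to_a$ cannot rescue it because the failure comes from the interaction with $c$-steps, and $\to_{ca}$ is not confluent.

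What the paper does instead is avoid manufacturing an off-path join point $m_3$ altogether. The decisive observation is that after the destructive $a(\lambda)$ step, the term $m_1 = E_c[(T' \To [T'/x]s)\ u]$ is itself $c$-normal: the blocking application sits at the $c$-evaluation position but is neither a $c$-redex nor a value. Hence the \emph{given} reduction $m_1 \to_{ca}^{n-1} T$ must begin with an $a$-only prefix, and that prefix (possibly interleaving $a$-steps in $E_c$ and in $[T'/x]s$) is forced to reduce $u$ to $T'$ and fire $a(\beta)$, arriving at some $E_c'[s']$ that actually lies on the reduction to $T$. The very same $a$-steps, replayed on $m' = E_c[[u/x]s]$, bring it to $E_c'[s']$ as well, and the tail of the given reduction finishes the job. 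The $f\ a$ destructive sub-cases are handled by the same blocking-position argument. In short: rather than trying to drag an arbitrary $a$-reduct of $m_1$ back onto the path to $T$ --- which your counterexample shows is impossible in general --- you should exploit that the specific $a$-steps you need are already \emph{forced} to occur as a prefix of the given $ca$-reduction.
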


\begin{proof}
We cannot conveniently apply Theorem~\ref{thmtp}, because the natural
instantiation would be to take $\to_{ca}$ for the relation $\to_a$ in
the theorem -- but then we would have to prove confluence of
$\to_{ca}$, which does not hold (as shown just above).  So instead we
give a direct proof, by induction on the length of the assumed
$ac$-sequence from $m$ to $T$.  The sequence cannot be of length $0$,
since $m$ cannot be a type (since it $c$-reduces, as no type can).

For the step case: suppose the assumed $ca$-reduction is of the form
$m \to_a m''\to_{ca}^* T$.  We now consider cases for the form of
overlap of the step $m\to_a m''$ and $m\to_c m'$.  Suppose the
$c$-step is $E_c[f\ a]\to_c E_c[a]$.  If the $a$-step is in $E_c$,
that means $m'' = E_c'[f\ a]$, where the hole in $E_c$ is at the same
position as in $E_c'$.  We can just permute these steps, to obtain
$E_c[a]\to_a E_c'[a]$ and $E_c'[f\ a]\to_c E_c'[a]$.  Now the
induction hypothesis can be applied with $E_c'[f\ a]$ (i.e., $m''$) as
the peak term, and $E_c'[a]$ as the term to which it $c$-steps.  

So suppose the $a$-step is in the displayed $f\ a$ of $E_c[f\ a]$.
Then before the reduction sequence from $m''$ to $T$ can perform a
$c$-step, it must first reduce the residual of $f\ a$ to $A$, since
that residual occurs in a $c$-reduction position.  So the reduction
sequence from $m''$ to $T$ must look like the following, where the
hole in $E_c$ and in $E_c'$ are at the same position:
\[
m'' \to_a^* E_c'[A] \to_{ca}^* T
\]
\noindent By performing the $a$-reductions which transformed $E_c$ to
$E_c'$, we can reduce $E_c[a]$ to $E_c'[A]$, and then we are done,
since we then have $m'\to_a^* E_c'[A]\to_{ca}^* T$.

We now must consider the case where the $c$-step is $E_c[(\lambda
  x:T'.m_1)\ u]\to_c E_c[[u/x]m_1]$.  Again, if the $a$-step is in
$E_c$, we can permute steps and apply the induction hypothesis.  If
the $a$-step is in $m_1$ or in $u$, we can also permute the steps,
though if the reduction is in $u$ (say $u\to_a u'$), we will in
general have $E_c[[u/x]m_1]\to_a^* E_c[[u'/x]m_1]$, since $x$ need not
appear exactly once in $m_1$.  Nevertheless, we can still apply the
induction hypothesis with $m''$ as the peak term, since we will only
ever produce one $c$-step from $m''$ by permuting steps.  Finally,
suppose the $a$-step is $E_c[(\lambda x:T'.m_1)\ u]\to_a E_c[(T'\To
  [T'/x]m_1)\ u]$.  By similar reasoning as in the previous case, the
$ca$-reduction sequence from $E_c[(T'\To [T'/x]m_1)\ u]$ to $T$ may
contain $a$-steps transforming $E_c$ to some $E_c'$, but it cannot
take a $c$-step until it has reduced the displayed $(T'\To
[T'/x]m_1)\ u$ to $[T'/x]m_1'$, with $u\to_a^* T'$ and $m_1\to_a^*
m_1'$.  This is because that displayed term is in $c$-reduction
position and neither a value nor a redex.  We can then duplicate any
$a$-steps taken in $E_c$ to $a$-reduce $E_c[[u/x]m_1]$ (i.e., $m'$) to
$E_c'[[T'/x]m_1']$.  This term then $ac$-reduces to $T$, and we are
done.
\end{proof}

\begin{thm}[Generalized Progress]
\label{thm:genprogstlc}
If standard term $t$ is closed, $t\to_{ca}^* T$, and $t\not\to_c$,
then $t$ is a (standard) value.
\end{thm}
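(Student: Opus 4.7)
The plan is to observe that this generalized progress statement is essentially proved by the same sequence of lemmas used for ordinary progress (Theorem~\ref{thm:progstlc}), only slightly easier, because the hypothesis $t\to_{ca}^*T$ is already in exactly the form required by Lemma~\ref{lem:stuck}. In the original progress proof, one had the stronger hypothesis $t\to_a^*T$ and had to first weaken it to $t\to_{ca}^*T$; here that weakening step is unnecessary.

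Concretely, I would proceed in three steps. First, since $t$ is a closed standard term with $t\not\to_c$, invoke Lemma~\ref{lem:qsnfb} to conclude that $t$ is quasi-stuck. Second, apply Lemma~\ref{lem:stuck} to the closed quasi-stuck term $t$ together with the hypothesis $t\to_{ca}^*T$ (which is exactly what the lemma requires), obtaining that $t$ is a mixed value. Third, observe that a mixed value which is simultaneously a standard term must be a standard value, since the extra forms $T\To m$ and $A$ that distinguish mixed values from standard values are not allowed in standard terms.

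There is no real obstacle: the design of Lemmas~\ref{lem:qsnf}, \ref{lem:qsnfb}, \ref{lem:redqs}, and especially \ref{lem:stuck} in Section~\ref{sec:progtpsafe} was explicitly made general enough (by phrasing everything in terms of $\to_{ca}^*$ rather than $\to_a^*$) so that they cover exactly this generalized setting. The authors even note this in the remark preceding those lemmas. Hence the proof of the generalized progress theorem reduces to a two-line application of Lemmas~\ref{lem:qsnfb} and~\ref{lem:stuck}, followed by the syntactic observation about standard terms.
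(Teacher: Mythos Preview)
Your proposal is correct and matches the paper's own proof essentially verbatim: the paper simply says ``As for Theorem~\ref{thm:progstlc}, we obtain this result by applying Lemmas~\ref{lem:qsnfb} and~\ref{lem:stuck},'' which is exactly the two-step argument you describe, including the final observation that a mixed value which is a standard term must be a standard value. Your remark that the weakening from $\to_a^*$ to $\to_{ca}^*$ is no longer needed is also accurate and consistent with how the paper structured these lemmas.
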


\begin{proof} As for Theorem~\ref{thm:progstlc}, we obtain
this result by applying Lemmas~\ref{lem:qsnfb} and~\ref{lem:stuck}. 
\end{proof}

\begin{thm}[Generalized Type Safety]
\label{thm:gensafety}
If standard term $t$ is closed, $t\to_{ca}^* T$, and $t\to_c^*
t'\not\to_c$, then $t'$ is a (standard) value.
\end{thm}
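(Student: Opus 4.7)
The plan is to mirror the proof of ordinary type safety (Theorem~\ref{thm:safety}) almost verbatim, substituting the generalized preservation and progress theorems for their standard counterparts. I would proceed by induction on the length $n$ of the call-by-value reduction sequence $t \to_c^* t'$.

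For the base case $n = 0$, we have $t = t'$, and the assumption $t \not\to_c$ together with $t \to_{ca}^* T$ lets me invoke Theorem~\ref{thm:genprogstlc} (Generalized Progress) directly to conclude that $t'$ is a standard value. For the step case, write $t \to_c t'' \to_c^* t' \not\to_c$. Applying Theorem~\ref{thm:genpresstlc} (Generalized Type Preservation) to $t \to_{ca}^* T$ and $t \to_c t''$ yields $t'' \to_{ca}^* T$. To apply the induction hypothesis at $t''$, I also need $t''$ to be a standard term. This follows from a straightforward induction on the structure of call-by-value evaluation contexts $E_c$: $c$-reduction of a standard term produces a standard term, since both the $c(\beta)$ and $c(f\text{-}\beta)$ contractions stay within the standard-term fragment (no $A$ or $T \To m$ is introduced), and $E_c$ built from standard terms stays standard. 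This is the same observation already used in the proof of Theorem~\ref{thm:safety}, so I would simply cite it.

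Having established that $t''$ is closed (reduction does not introduce free variables), standard, $c$-reduces in $n-1$ steps to $t' \not\to_c$, and satisfies $t'' \to_{ca}^* T$, the induction hypothesis applies and delivers that $t'$ is a standard value, completing the proof.

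There is no real obstacle here: all the content lies in Theorems~\ref{thm:genpresstlc} and~\ref{thm:genprogstlc}, and the present argument is a cosmetic rearrangement of the proof of Theorem~\ref{thm:safety}. The only step that warrants mentioning at all is the preservation of ``standardness'' of the term under a single $c$-step, and even that is routine. The interesting observation is simply that the statement and proof of type safety are completely unchanged by passing from plain typability to generalized typability; only the underlying preservation lemma has to be reproved.
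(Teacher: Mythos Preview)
Your proposal is correct and matches the paper's approach: the paper simply states that the result is a direct corollary of Theorems~\ref{thm:genpresstlc} and~\ref{thm:genprogstlc}, which unpacks to precisely the induction on the length of the $\to_c$-sequence that you have written out (mirroring the proof of Theorem~\ref{thm:safety}). The only additional detail you supply, that a single $\to_c$-step keeps the term standard and closed, is routine and already noted in the proof of Theorem~\ref{thm:safety}.
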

\begin{proof} This is a direct corollary of
Theorems~\ref{thm:genpresstlc} and~\ref{thm:genprogstlc}. \end{proof}

\section{A Rewriting Approach to Normalization for STLC}
\label{sec:normstlc}

In this Section, we will see how the rewriting approach to typing
impacts a standard approach to proving that every typable (closed)
standard term of the simply typed lambda calculus has a $b$-normal
form.  We will work with a slightly different presentation of STLC
than we saw in Section~\ref{sec:restlc}, in particular dispensing with
the term constants $a$ and $f$.  We assume a non-empty set of type
constants $A$.  The syntax we are using in this section is:

\[
\begin{array}{lll}
\textit{types}\ T & ::= & A\ |\ T_1\To T_2\\
\textit{mixed terms}\ m & ::= & x\ |\ \lambda x:T.\,m\ |\ m\ m'\ | \ A\ |\ T\To m \\
\textit{standard terms}\ t & ::= & x\ |\ \lambda x:T.\,t\ |\ t\ t'
\end{array}
\]

\noindent The abstract and concrete reduction relations are then
defined as follows, where we use mixed terms $m$ as contexts
(sometimes using meta-variable $\hat{m}$ in this case), writing
$m[m']$ to denote the replacement of the unique occurrence of a
special variable $*$ in $m$ by $m'$.  

\[
\begin{array}{lll}
\infer[\textit{b}(\beta)]{\hat{m}[(\lambda x:T.\,m)\ m']\ \to_b\ \hat{m}[[m'/x]m]}{\ }
\\\\
\infer[\textit{a}(\beta)]{\hat{m}[(T \To m)\ T]\ \to_a\ \hat{m}[m]}{\ }
\\\\
\infer[\textit{a}(\lambda)]{\hat{m}[\lambda x:T.\, m]\ \to_a\ \hat{m}[T\To [T/x]m]}{\ }
\end{array}
\]

\subsection{Interpretation of Mixed Terms}
\label{sec:interp}

The proof in this section is based on ideas from standard proofs, such
as Girard's proof in the book \emph{Proofs and
  Types}~\cite{girard-proofs-types}.  The technical details evolve
differently, however, since we are using the rewriting approach to
typing.  Similarly to Girard's proof, we are going to define an
interpretation of open types as sets of standard terms.  Here, we need
to generalize this to give interpretations $\interp{m}_\phi$ of mixed
terms $m$, where (as standard) $\phi$ assigns interpretations to the
free variables of $m$.  The most enlightening observation that will
come from this is Theorem~\ref{thm:abstr} (Abstraction Theorem), which
says that interpretation is monotonic with respect to abstract
reduction: if $m\to_a m'$, then
$\interp{m}_\phi\subseteq\interp{m'}_\phi$.  If one views a set as
abstracting its elements, and if one considers a mixed term as a code
for the set of terms which is its interpretation, then the Abstraction
Theorem shows that more abstract codes have more abstract
interpretations.  This is an elegant perspective that arises -- from
the standard Tait-Girard method -- only by taking a small-step view of
typing; existing proofs for normalization in the literature do not
have any theorem which corresponds (in any obvious way) to the
Abstraction Theorem.

So now to begin the development, let WN be the set of standard terms
which are weakly normalizing with respect to $\to_b$ (that is, terms
$t$ such that there exists some $t'$ such that $t \to_b^* t'
\not\to_b$).  Also, if $\to$ is any binary relation on standard terms
and $R$ any set of standard terms, we will write $\to(R)$ for the
image of $R$ under $\to$ (that is, $\{ t'\ |\ \exists t\in R.\ t\to
t'\}$). 

\

\noindent We first define $\mathcal{R}$ to be the set of all sets $R$
of standard terms satisfying the following conditions:
\begin{enumerate}[(1)]
\item $\ot_b^*(R)\ \subseteq\ R$
\item $R \neq \emptyset$
\item $R\subseteq WN$
\end{enumerate}
\noindent The first condition ensures that $t'\to_b^* t$ and $t\in R$
imply $t'\in R$.  An assumption like this is often made about such
sets of terms.  We will call elements of $\mathcal{R}$
\emph{reducibility sets}.  Much work has been devoted to comparing
different conditions for families of sets in the context of the
interpretation of types (see, e.g.,~\cite{riba07,gallier90}).  Our
focus here is not so much on the specific conditions on the
interpretations of mixed terms, as on how interpretations of terms in
the abstract reduction relation are related.  The conditions we adopt
here are simple and sufficient for weak normalization of closed terms
(cf. also Chapter 12 of~\cite{pierce02}).

We will use $\phi$ as a meta-variable for \emph{assignments}, which are
functions from \textit{Var} to $\mathcal{R}$.  We write $\phi[R/x]$ to
mean the function $\phi$ updated to map variable $x$ to
$R\in\mathcal{R}$.  Now for any $m$ and $\phi$ with
$\textit{FV}(m)\subseteq\textit{dom}(\phi)$, we define the
interpretation $\interp{m}_\phi$ of $m$ with respect to $\phi$ in
Figure~\ref{fig:interp}.  To ensure that interpretations of types
satisfy the first property above of reducibility sets, we need
to close under $\ot_b^*$ in the last two clauses of the definition
(in Figure~\ref{fig:interp}).  Since we are proving normalization, we
take the set of normalizing terms as the interpretation of $A$,
similarly to what is standardly done for atomic types (e.g., in
Girard's proof).

\begin{figure}
\[
\begin{array}{lll}
\interp{T \To m}_\phi & = & \{ t\ |\ \forall t'\in\interp{T}_\phi.\ t\ t'\in\interp{m}_\phi \} \\
\interp{x}_\phi & = & \phi(x) \\
\interp{A}_\phi & = & \textnormal{WN} \\
\interp{\lambda x : T.m}_\phi & = & \ot_b^* (\{ \lambda x : T. t\ |\ \forall t'\in\interp{T}_\phi.\ [t'/x]t\in\interp{m}_{\phi[\interp{T}_\phi/x]} \}) \\
\interp{m_1\ m_2}_\phi & = & \ot_b^*(\{ t_1\ t_2\ |\ t_1\in\interp{m_1}_\phi\ \wedge\ t_2\in\interp{m_2}_\phi \}
\end{array}
\]
\caption{The interpretation of mixed terms}
\label{fig:interp}
\end{figure}

\subsection{Interpretations of Types are Reducibility Sets}

In this section, we prove that for all types $T$ and $\phi$ with
$\textit{FV}(T)\subseteq\textit{dom}(\phi)$, we have
$\interp{T}_\phi\in\mathcal{R}$.  We will elide this condition
relating $T$ (or instead $m$) and $\phi$ below.  We prove the three
properties of reducibility sets given in the previous section.
The properties must be proved in order, as later properties depend on
earlier ones.  The first property is needed in a more general form,
for any mixed term $m$, and not just types $T$.  The second two
properties are only needed for types.  The proofs in this section are
similar to those used for the standard definition of typing, except
that there, they are usually proved by mutual induction.  Here we can
prove them independently, though in sequence, due to the simpler form
of the second property.  While the development in this section is
similar to the usual one, in the next section we will see something
significantly different.

\begin{lem}
\label{lem1}
$\ot_b^*\interp{m}_\phi\subseteq \interp{m}_\phi$
\end{lem}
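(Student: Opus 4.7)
The plan is to prove this by straightforward induction on the structure of the mixed term $m$, case-splitting according to the five clauses of the definition in Figure~\ref{fig:interp}. The claim is essentially an invariant which the definition of $\interp{\cdot}_\phi$ was designed to maintain; three of the five cases are either immediate from the definition of $\mathcal{R}$ or built into $\interp{\cdot}_\phi$ via an explicit $\ot_b^*$ closure, and only one case requires any real work.

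First I would handle the base cases. For $m = x$, $\interp{x}_\phi = \phi(x)\in\mathcal{R}$, so the desired inclusion is just property (1) of reducibility sets. For $m = A$, $\interp{A}_\phi = \textnormal{WN}$, and closure of WN under $\ot_b^*$ follows because if $t'\to_b^* t$ and $t\to_b^* t''\not\to_b$, then $t'\to_b^* t''$, hence $t'\in\textnormal{WN}$. Next I would dispatch the application and abstraction cases, $m_1\ m_2$ and $\lambda x:T.m'$, where the right-hand sides of Figure~\ref{fig:interp} are already of the form $\ot_b^*(X)$; here the claim reduces to transitivity of $\to_b^*$, which gives $\ot_b^*(\ot_b^*(X)) \subseteq \ot_b^*(X)$.

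The interesting case is $m = T\To m'$, since its defining clause has no explicit $\ot_b^*$ closure. Here I would argue directly: suppose $t''\to_b^* t$ and $t\in\interp{T\To m'}_\phi$, and take any $t'\in\interp{T}_\phi$. By assumption $t\,t'\in\interp{m'}_\phi$, and since $t''\to_b^* t$ we have $t''\,t'\to_b^* t\,t'$, so $t\,t'\ot_b^* t''\,t'$. By the induction hypothesis applied to $m'$, $\interp{m'}_\phi$ is closed under $\ot_b^*$, yielding $t''\,t'\in\interp{m'}_\phi$. Since $t'$ was arbitrary, $t''\in\interp{T\To m'}_\phi$.

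The main obstacle, if any, is simply making sure the induction hypothesis is available in the $T\To m'$ case for the \emph{sub}term $m'$ rather than for the type $T$ itself, and noting that we never need any property of $\interp{T}_\phi$ beyond its being a set of standard terms. No closure property of the family $\mathcal{R}$ beyond condition (1) is required, and the proof does not depend on the as-yet-unproven facts that $\interp{T}_\phi$ is nonempty or contained in WN, so it can cleanly precede those later lemmas in the development.
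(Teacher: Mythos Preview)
Your proof is correct and follows essentially the same approach as the paper's own proof: a structural induction on $m$, with the variable case handled by $\phi(x)\in\mathcal{R}$, the $A$ case by closure of $\textnormal{WN}$ under $\ot_b^*$, the $\lambda$ and application cases by idempotence of $\ot_b^*$, and the $T\To m'$ case by the induction hypothesis on $m'$. Your additional remarks about which hypotheses are actually needed and the ordering of lemmas are accurate and not in the paper, but the argument itself is the same.
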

\begin{proof} The proof is by structural induction on $m$.  If $m$ is a $\lambda$-abstraction,
or application, the desired property follows by idempotence of
$\ot_b^*$ as an operator on sets of terms.  If $m$ is a variable $x$,
then the property follows by the same property for $\phi(x)$, since we
stipulated assignments map variables to elements of $\mathcal{R}$.  If
$\phi = A$, then we must prove
\[
\ot_b^*(\textit{WN})\subseteq \textnormal{WN}
\]
\noindent But this just amounts to the obvious fact that if $t'\to_b^*
t$ and $t$ is weakly normalizing, then $t'$ is also weakly
normalizing.

\

\noindent Finally, suppose $m$ is $T\To m'$ for some $m'$.  Assume an
arbitrary $t\in\interp{T\To m'}_\phi$, and arbitrary $t'$ with
$t'\to_b^* t$.  We must show $t'\in\interp{T\To m'}_\phi$.  To do
this, by the definition of the interpretation of $\To$-terms, it
suffices to consider arbitrary $t''\in\interp{T}_\phi$, and show
$t'\ t''\in\interp{m'}_\phi$.  We have $t\ t''\in\interp{m'}_\phi$ by
the definition of the interpretation of $\To$-terms.  Then we get the
desired conclusion by the induction hypothesis on $m'$, since
$t\ t''\to_b^* t'\ t''$.
\end{proof}

\begin{lem}
\label{lem2}
$\interp{T}_\phi \neq \emptyset$
\end{lem}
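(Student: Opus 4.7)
The plan is to prove $\interp{T}_\phi \neq \emptyset$ by structural induction on the type $T$, using the already-established Lemma~\ref{lem1} (closure of interpretations under $\leftarrow_b^*$) in the step case.

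For the base case $T = A$, we have $\interp{A}_\phi = \textnormal{WN}$, and any variable $x$ is a $b$-normal form and thus trivially belongs to WN, so the interpretation is non-empty.

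For the inductive step, let $T = T_1 \To T_2$. By the induction hypothesis applied to $T_2$, there is some standard term $s \in \interp{T_2}_\phi$ (every clause of the definition of $\interp{\cdot}_\phi$ produces a set of standard terms, so $s$ is indeed standard). Pick a variable $y$ not occurring free in $s$, and consider the standard term $t := \lambda y{:}T_1.\,s$. For any $t' \in \interp{T_1}_\phi$ we have
\[
t\ t' \;=\; (\lambda y{:}T_1.\,s)\ t' \;\to_b\; [t'/y]s \;=\; s,
\]
so $t\ t'$ lies in $\leftarrow_b^{*}(\interp{T_2}_\phi)$. By Lemma~\ref{lem1}, this set is contained in $\interp{T_2}_\phi$, so $t\ t' \in \interp{T_2}_\phi$. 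Unfolding the definition of $\interp{T_1 \To T_2}_\phi$, this shows $t \in \interp{T_1 \To T_2}_\phi$, completing the induction. Note that the argument does not require $\interp{T_1}_\phi$ itself to be non-empty, since an empty quantification would only make the condition vacuously true.

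There is no real obstacle here: the only care needed is to verify that the witness produced is a \emph{standard} term (so that it actually belongs to the candidate interpretations, which consist of standard terms) and to use Lemma~\ref{lem1} in the correct direction, namely as closure of interpretations under $\to_b^*$-predecessors. The fact that the third reducibility condition ($\interp{T}_\phi \subseteq \textnormal{WN}$) has not yet been established is not needed for this argument, consistent with the paper's stated ordering of the three properties.
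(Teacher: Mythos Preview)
Your proof is correct and follows essentially the same approach as the paper: structural induction on $T$, with a variable witnessing the base case $\interp{A}_\phi=\textnormal{WN}$, and $\lambda y{:}T_1.\,s$ (for $s\in\interp{T_2}_\phi$ given by the induction hypothesis) witnessing the arrow case via Lemma~\ref{lem1}. Your additional remarks about the witness being standard and the irrelevance of non-emptiness of $\interp{T_1}_\phi$ are accurate but not needed for the argument.
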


\begin{proof} The proof is by structural induction on $T$.  If $T$ is $A$, then the
desired property holds immediately, since $x$ is in
$\textnormal{WN} = \interp{A}_\phi$.  So suppose $T\equiv T_1\To T_2$, for some $T_1$
and $T_2$.  We must exhibit some $t\in\interp{T_1\To T_2}_\phi$.  By
the induction hypothesis applied to $T_2$, there exists some
$t'\in\interp{T_2}_\phi$.  Now take $\lambda x:T_1.t'$ for the
required term $t$, where we assume $x\not\in\textit{FV}(t')$.  We just have to confirm that $\lambda
x:T_1.t'\in\interp{T_1\To T_2}_\phi$.  So assume arbitrary
$t''\in\interp{T_1}_\phi$, and show $(\lambda
x:T_1.t')\ t''\in\interp{T_2}_\phi$.  By Lemma~\ref{lem1}, it suffices
to prove $t'\in\interp{T_2}_\phi$, since $(\lambda
x:T_1.t')\ t''\to_b^* t'$.  But we are assuming
$t'\in\interp{T_2}_\phi$.
\end{proof}

\begin{lem}
\label{lem3}
$\interp{T}_\phi \subseteq \textnormal{WN}$
\end{lem}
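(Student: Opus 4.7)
The plan is structural induction on $T$. The base case $T = A$ is immediate since $\interp{A}_\phi = \textnormal{WN}$ by definition.

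For the inductive case $T = T_1 \To T_2$, let $t \in \interp{T_1 \To T_2}_\phi$. By the defining clause of the interpretation, $t\ t' \in \interp{T_2}_\phi$ for every $t' \in \interp{T_1}_\phi$, so the induction hypothesis applied to $T_2$ gives $t\ t' \in \textnormal{WN}$.

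I would now instantiate $t'$ with a variable $x$ chosen fresh for $t$, and argue that $t\ x \in \textnormal{WN}$ forces $t \in \textnormal{WN}$. Fix a normalizing reduction $t\ x \to_b^* u$. If the outermost application is never contracted during this reduction, then $u = t^*\ x$ for some $b$-normal form $t^*$ with $t \to_b^* t^*$, so $t \in \textnormal{WN}$. Otherwise, at some point $t$ reduces to $\lambda y.s$ and the remaining reduction normalizes $[x/y]s$; since $x$ is fresh, $[x/y]s$ is a bijective renaming of $s$, so $s \in \textnormal{WN}$, whence $\lambda y.s \in \textnormal{WN}$ and again $t \in \textnormal{WN}$.

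The delicate point is justifying that the fresh variable $x$ does lie in $\interp{T_1}_\phi$. The three conditions defining $\mathcal{R}$ do not include a neutral-term clause analogous to Girard's CR3, so variable-membership is not automatic. The plan is to prove, jointly with the current lemma by induction on $T$, the auxiliary claim that every standard term of the form $x\ t_1\ \cdots\ t_n$ with $x$ a variable and each $t_i$ in $\textnormal{WN}$ belongs to $\interp{T}_\phi$. The base $T = A$ follows because such neutral terms are weakly normalizing (one normalizes each argument and obtains the normal form $x\ t_1^*\ \cdots\ t_n^*$). The step $T = T_a \To T_b$ takes an arbitrary $s \in \interp{T_a}_\phi$, observes $s \in \textnormal{WN}$ by the induction hypothesis for $T_a$, and appeals to the auxiliary claim for $T_b$ on the extended neutral term $x\ t_1\ \cdots\ t_n\ s$. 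Threading this joint induction correctly is the main technical obstacle.
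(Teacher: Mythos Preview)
Your argument is correct, and the mutual induction you worry about threads without difficulty: at the arrow step, part~(a) uses part~(b) for $T_1$ together with part~(a) for $T_2$, while part~(b) uses part~(a) for $T_1$ together with part~(b) for $T_2$, all on strict subterms of $T$.  The paper, however, takes a shorter route that avoids your auxiliary neutral-term lemma entirely.  Rather than showing that a fresh variable lies in $\interp{T_1}_\phi$, it simply invokes Lemma~\ref{lem2} (non-emptiness) to obtain \emph{some} $t'\in\interp{T_1}_\phi$, concludes $t\,t'\in\interp{T_2}_\phi\subseteq\textnormal{WN}$ by the induction hypothesis for $T_2$, and then asserts in one line that $t\,t'\in\textnormal{WN}$ implies $t\in\textnormal{WN}$.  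What your approach buys is a watertight justification of precisely that last step: for an \emph{arbitrary} $t'$ the implication $t\,t'\in\textnormal{WN}\Rightarrow t\in\textnormal{WN}$ actually fails (take $t=\lambda z{:}(A{\To}A).\,z\,\Omega$ with $\Omega=(\lambda x{:}A.\,x\,x)(\lambda x{:}A.\,x\,x)$ and $t'=\lambda w{:}A.\,y$; then $t\,t'\to_b^* y$ while $t$ has no normal form), whereas for a fresh \emph{variable} the implication holds for exactly the reason you sketch.  So the paper's proof is more economical in that it reuses Lemma~\ref{lem2} instead of introducing a CR3-style clause, but it leaves the closing inference unjustified as stated; your longer detour through neutral terms is the honest way to close that gap.
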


\begin{proof} The proof is again by structural induction on $T$, and is trivial when
$T$ is $A$.  So suppose $T\equiv T_1\To T_2$, and assume arbitrary
$t\in\interp{T_1\To T_2}_\phi$.  We must show $t\in\textnormal{WN}$.
By Lemma~\ref{lem2}, we know there exists some term
$t'\in\interp{T_1}_\phi$.  Then by the definition of the
interpretation of $\To$-terms, $t\ t'\in\interp{T_2}_\phi$.  By the
induction hypothesis applied to $T_2$, we then have
$t\ t'\in\textnormal{WN}$.  But this implies $t\in\textnormal{WN}$, as
required.
\end{proof}

\begin{cor}
$\interp{T}_\phi \in \mathcal{R}$
\end{cor}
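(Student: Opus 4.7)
The plan is to observe that this corollary is essentially an immediate combination of the three lemmas just proved. Recall that $\mathcal{R}$ was defined to be the set of all sets $R$ of standard terms satisfying (1) $\ot_b^*(R)\subseteq R$, (2) $R\neq\emptyset$, and (3) $R\subseteq\textnormal{WN}$. So to show $\interp{T}_\phi\in\mathcal{R}$ for an arbitrary type $T$ and assignment $\phi$ with $\textit{FV}(T)\subseteq\textit{dom}(\phi)$, I simply need to verify these three closure properties for $\interp{T}_\phi$.

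First I would note that property (1), namely $\ot_b^*(\interp{T}_\phi)\subseteq \interp{T}_\phi$, follows directly from Lemma~\ref{lem1} by instantiating the mixed term $m$ there with the type $T$ (which is a special case of a mixed term). Next, property (2), that $\interp{T}_\phi\neq\emptyset$, is exactly the statement of Lemma~\ref{lem2}. Finally, property (3), that $\interp{T}_\phi\subseteq\textnormal{WN}$, is exactly the statement of Lemma~\ref{lem3}. Putting these three facts together, $\interp{T}_\phi$ satisfies all defining conditions of $\mathcal{R}$, so $\interp{T}_\phi\in\mathcal{R}$.

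There is really no obstacle here: the hard work has already been done in Lemmas~\ref{lem1}, \ref{lem2}, and \ref{lem3}, and the corollary is just a packaging statement recording that the three properties required for membership in $\mathcal{R}$ are exactly the three properties established in those lemmas. The only thing worth remarking on is that Lemma~\ref{lem1} was proved in the more general form for arbitrary mixed terms $m$, so one needs to observe that types $T$ are a special case of mixed terms in the grammar, which justifies the instantiation.
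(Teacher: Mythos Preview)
Your proposal is correct and matches the paper's own justification essentially verbatim: the paper simply notes that the three preceding lemmas establish the three defining properties of $\mathcal{R}$ for $\interp{T}_\phi$. Your additional remark that Lemma~\ref{lem1} must be instantiated at the special case where the mixed term $m$ is a type $T$ is accurate and harmless.
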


\noindent The above lemmas have proved that $\interp{T}_\phi$
satisfies the three properties for membership in $\mathcal{R}$.  In
the next section, we will also need the following lemma, whose proof
is routine and omitted:

\begin{lem}[Semantic Substitution]
\label{lem:semsubst}
$\interp{[T/x]m}_\phi\ =\ \interp{m}_{\phi[\interp{T}_\phi/x]}$
\end{lem}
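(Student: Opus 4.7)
The plan is to proceed by structural induction on the mixed term $m$. The key simplifying observation for this fragment is that types $T$ are generated by $T ::= A \mid T_1 \To T_2$ and therefore contain no term variables; consequently $[T/x]T' = T'$ for every type $T'$, and $\interp{T}_\phi$ is independent of $\phi$. I will use both facts silently in the cases below.

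The base cases are immediate. For $m = A$, both sides are just $\textnormal{WN}$. For $m = y$ a variable, I split on whether $y = x$: if so both sides equal $\interp{T}_\phi$, otherwise both equal $\phi(y)$. The application case $m = m_1\,m_2$ is a direct invocation of the induction hypothesis on each subterm, since substitution distributes over application and the definition of $\interp{m_1\,m_2}_\phi$ depends only on $\interp{m_1}_\phi$ and $\interp{m_2}_\phi$. For $m = T' \To m'$, the only ingredient of the set comprehension that depends on $\phi$ is $\interp{m'}_\phi$ (the quantifier ranges over $\interp{T'}_\phi$, which does not depend on the assignment), so the induction hypothesis on $m'$ suffices.

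The main case is the $\lambda$-abstraction $m = \lambda y:T'.m'$. By $\alpha$-conversion I may assume $y \neq x$; then $[T/x](\lambda y:T'.m') = \lambda y:T'.[T/x]m'$. Unfolding the definition and applying the induction hypothesis to $m'$ under the extended assignment $\phi[\interp{T'}_\phi/y]$, the inner interpretation $\interp{[T/x]m'}_{\phi[\interp{T'}_\phi/y]}$ rewrites to $\interp{m'}_{\phi[\interp{T'}_\phi/y][\interp{T}_\phi/x]}$, where I am using that $\interp{T}$ does not depend on the ambient assignment. Since $x \neq y$ the two updates commute, and since $T'$ is closed we also have $\interp{T'}_\phi = \interp{T'}_{\phi[\interp{T}_\phi/x]}$; combining these identities recasts the inner set as exactly the one appearing in the definition of $\interp{\lambda y:T'.m'}_{\phi[\interp{T}_\phi/x]}$, and the outermost $\ot_b^*$-closure matches on both sides.

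I do not expect a serious obstacle. The only delicate point is the bookkeeping of assignment updates in the $\lambda$-case, and the restriction that types contain no term variables renders the commutation of updates completely trivial. Were types allowed to mention term variables, one would first prove a routine auxiliary lemma stating that $\interp{m}_\phi$ depends only on the values of $\phi$ at $\textit{FV}(m)$ and use it to discharge the commutation, but as the fragment is set up the argument is genuinely routine, as the paper indicates.
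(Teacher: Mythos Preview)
Your proposal is correct and follows the natural structural-induction argument that the paper has in mind when it calls the proof ``routine and omitted.'' The only point worth noting is that your observation that $\interp{T}_\phi$ is independent of $\phi$ (because types in this fragment contain no term variables) is exactly the reason the commutation of assignment updates in the $\lambda$-case goes through without any auxiliary lemma, and this is precisely why the paper can dismiss the proof as routine.
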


\subsection{The Abstraction Theorem}
\label{sec:abstr}

In this section, we prove a remarkable theorem, from which the
normalization property for typable terms will follow as a corollary.
For any mixed terms $m$ and $m'$, and any $\phi$ with
$\textit{FV}(m)\subseteq\textit{dom}(\phi)$, we have:

\begin{thm}[Abstraction Theorem]
\label{thm:abstr}
$m\to_a m'\ \Longrightarrow\ \interp{m}_\phi\subseteq\interp{m'}_\phi$
\end{thm}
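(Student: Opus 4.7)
The plan is to prove the Abstraction Theorem by structural induction on $m$, handling the two root reductions as base cases and propagating through each term constructor via straightforward monotonicity (congruence) arguments. A derivation $m \to_a m'$ always consists of a context $\hat{m}$ surrounding either an $a(\beta)$- or $a(\lambda)$-redex, so it suffices to verify the two root-reduction inclusions and then show that interpretation is monotone in each term-valued argument position of every constructor.

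For the root $a(\beta)$ case, where $m = (T \To m_0)\,T$ and $m' = m_0$, I take any $t \in \interp{(T\To m_0)\,T}_\phi$. By the application clause of Figure~\ref{fig:interp}, $t \to_b^* t_1\,t_2$ with $t_1 \in \interp{T\To m_0}_\phi$ and $t_2 \in \interp{T}_\phi$, so the defining property of $\interp{T\To m_0}_\phi$ gives $t_1\,t_2 \in \interp{m_0}_\phi$. Lemma~\ref{lem1} then pulls $t$ itself into $\interp{m_0}_\phi$. For the root $a(\lambda)$ case, $m = \lambda x{:}T.m_0$ and $m' = T \To [T/x]m_0$. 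Take $t \in \interp{\lambda x{:}T.m_0}_\phi$, so $t \to_b^* \lambda x{:}T.t_0$ with $[t'/x]t_0 \in \interp{m_0}_{\phi[\interp{T}_\phi/x]}$ for all $t' \in \interp{T}_\phi$. To verify $t \in \interp{T \To [T/x]m_0}_\phi$, I must show $t\,t' \in \interp{[T/x]m_0}_\phi$ for every such $t'$; by Semantic Substitution (Lemma~\ref{lem:semsubst}) the target equals $\interp{m_0}_{\phi[\interp{T}_\phi/x]}$, and the witness $t\,t' \to_b^* (\lambda x{:}T.t_0)\,t' \to_b [t'/x]t_0$ combined with Lemma~\ref{lem1} closes the case.

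For the inductive (congruence) cases, I would prove once and for all that each constructor is monotone in each of its mixed-term arguments: if $\interp{m_1}_{\phi'} \subseteq \interp{m_2}_{\phi'}$ for every suitable $\phi'$, then $\interp{\lambda y{:}T.m_1}_\phi \subseteq \interp{\lambda y{:}T.m_2}_\phi$, $\interp{m_1\,m}_\phi \subseteq \interp{m_2\,m}_\phi$, $\interp{m\,m_1}_\phi \subseteq \interp{m\,m_2}_\phi$, and $\interp{T \To m_1}_\phi \subseteq \interp{T \To m_2}_\phi$. Each of these is immediate from the corresponding clause of Figure~\ref{fig:interp}, using that $\ot_b^*$ is monotone as an operator on sets of terms. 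An important simplification is that since types are $a$-normal forms by Lemma~\ref{lem:tpnorm}, an $a$-step can never occur inside the $T$-component of $\lambda y{:}T.m$ or $T \To m$, so I never have to reason contravariantly about a $T$ position.

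The main obstacle is the $a(\lambda)$ base case: the quantifier over $\interp{T}_\phi$ sits \emph{inside} $\ot_b^*$ in the definition of $\interp{\lambda x{:}T.m_0}_\phi$ but \emph{outside} $\ot_b^*$ in the definition of $\interp{T\To [T/x]m_0}_\phi$, and one must thread an additional $\to_b$-step (the $\beta$-contraction of $(\lambda x{:}T.t_0)\,t'$) before invoking Lemma~\ref{lem1}, while simultaneously appealing to Lemma~\ref{lem:semsubst} to reconcile the two assignments. The $a(\beta)$ case and all the congruence cases are essentially bookkeeping by comparison.
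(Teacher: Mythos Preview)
Your proposal is correct and follows essentially the same approach as the paper: induction on the context surrounding the redex, with the two root-reduction cases handled exactly as in the paper and the congruence cases by monotonicity of each constructor clause. Your presentation is in fact slightly more complete, since you explicitly treat the $T\To m$ congruence case and note that Lemma~\ref{lem:tpnorm} rules out reductions in type positions, points the paper's proof leaves implicit.
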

\noindent Note that well-definedness of $\interp{m'}_\phi$ in the
statement of the theorem follows from the assumption about $\phi$ and
the observation that abstract reduction cannot introduce new
variables.

\

\noindent This theorem is remarkable because it reflects the essence of
abstraction: the gathering of different concrete entities under the
same abstract one.  The Abstraction Theorem shows that abstract
reduction is increasing the set of concrete terms which are collected
under a mixed (and so partially abstract) term.  In the next section,
we will see how to conclude normalization from this theorem.

\

\begin{proof}[Proof of Theorem~\ref{thm:abstr}]
It suffices to prove by structural induction on $\hat{m}$ that for all
$\phi$ and for all $m$ and $m'$ where $m$ is a redex and $m'$ its
contractum:
\[
\hat{m}[m] \to_a \hat{m}[m']\longrightarrow \interp{\hat{m}[m]}_\phi\subseteq\interp{\hat{m}[m']}_\phi
\]
\noindent \textbf{Case:} $\hat{m}\equiv m_1\ m_2$, where the hole is
in $m_1$.  The case where the hole is in $m_2$ is similar, so we omit
it.  To show the required
$\interp{m_1[m]\ m_2}_\phi\subseteq\interp{m_1[m']\ m_2}_\phi$,
consider arbitrary $t\in\interp{m_1[m]\ m_2}_\phi$.  By the definition
of the interpretation of applications, we must have
$t_1\in\interp{m_1[m]}_\phi$ and $t_2\in\interp{m_2}_\phi$ with
$t\to_b^* t_1\ t_2$.  Now by the induction hypothesis applied to
$m_1$ we have:
\[
\interp{m_1[m]}_\phi\subseteq\interp{m_1[m']}_\phi
\]
\noindent This implies $t_1\ t_2\in\interp{m_1[m']\ m_2}_\phi$.  From
this, we obtain the desired $t\in\interp{m_1[m']\ m_2}_\phi$ by the
definition of the interpretation of applications.

\

\noindent \textbf{Case:} $\hat{m}\equiv \lambda x:T.m_1$, for some
$x$, $T$, and $m_1$, with the hole in $m_1$.  Consider an arbitrary
$t\in\interp{\lambda x:T.m_1[m]}_\phi$.  By the definition of the
interpretation of $\lambda$-abstractions, this implies that there
exists a term $t_1$ such that $t\to_b^*\lambda x:T.t_1$ and for all
$t''\in\interp{T}_\phi$, we have
$[t''/x]t_1\in\interp{m_1[m]}_{\phi[\interp{T}_\phi/x]}$.  We must
show $t\in\interp{\lambda x:T.m_1[m']}_\phi$.  By the definition of
the interpretation of $\lambda$-terms and Lemma~\ref{lem1}, it suffices to
prove $(\lambda
x:T.t_1)\ t''\in\interp{m_1[m']}_{\phi[\interp{T}_\phi/x]}$ for
arbitrary $t''\in\interp{T}_\phi$.  Again applying Lemma~\ref{lem1},
we can see it suffices to prove
$[t''/x]t_1\in\interp{m_1[m']}_{\phi[\interp{T}_\phi/x]}$.  This now
follows by the induction hypothesis applied to context $m_1$.

\

\noindent \textbf{Case:} $\hat{m}=*$.  Now we must distinguish the
two cases for an abstract reduction.

\

\noindent \textbf{Case 1.} Suppose that we have
\[
\lambda x:T.m\ \to_a\ T\To[T/x]m
\]
\noindent We must prove $\interp{\lambda
  x:T.m}_\phi\subseteq\ \interp{T\To[T/x]m}_\phi$.  So assume
arbitrary $t\in\interp{\lambda x:T.m}_\phi$, and show
$t\in\interp{T\To[T/x]m}_\phi$.  To show that, it suffices to consider
arbitrary $t''\in\interp{T}_\phi$, and prove
$t\ t''\in\interp{[T/x]m}_\phi$.  By the definition of the
interpretation of $\lambda$-abstractions, we have $t\to_b^*\lambda
x:T.t'$, for some $t'$, with
$[t''/x]t'\in\interp{m}_{\phi[\interp{T}_\phi/x]}$ for all
$t''\in\interp{T}_\phi$.  Since $t\ t''\to_b^*[t''/x]t'$, it suffices
by Lemma~\ref{lem1} just to prove $[t''/x]t'\in\interp{[T/x]m}_\phi$.
This follows from the fact just derived, applying also Lemma~\ref{lem:semsubst}.

\

\noindent \textbf{Case 2.} Suppose that we have
\[
(T\To m)\ T\ \to_a\ m
\]
\noindent Assume an arbitrary $t\in\interp{(T\To m)\ T}_\phi$.  By the
definition of the interpretation of applications, we then have that
there exists $t_1\in\interp{T\To m}_\phi$ and $t_2\in\interp{T}_\phi$
such that $t\to_b^* t_1\ t_2$.  We must show $t\in\interp{m}_\phi$.
By the definition of the interpretation of $\To$-terms, we obtain
$t_1\ t_2\in\interp{m}_\phi$.  By Lemma~\ref{lem1}, this suffices
to  establish $t\in\interp{m}_\phi$, since $t\to_b^* t_1\ t_2$.
\end{proof}

\subsection{Concluding Normalization}

Using the Abstraction Theorem, we can obtain the main result that
typable terms are normalizing.  First, we need this helper lemma
stating that standard terms are in their own interpretations:

\begin{lem}
\label{lem:inown}
Consider an arbitrary standard term $t$ and assignment $\phi$, as well
as function $\sigma$ from variables to standard terms.  Suppose also
that for all $x\in\textit{FV}(t)$, we have $\sigma(x)\in\phi(x)$.  Then
we have $\sigma t\in\interp{t}_\phi$.
\end{lem}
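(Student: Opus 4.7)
The plan is to prove Lemma~\ref{lem:inown} by structural induction on the standard term $t$. There are three cases: variables, applications, and $\lambda$-abstractions. In each case, I will unfold the definition of $\interp{\cdot}_\phi$ from Figure~\ref{fig:interp} and use the induction hypothesis, together with Lemma~\ref{lem1} to close under $\ot_b^*$ whenever needed.

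For the variable case $t = x$, the claim is immediate: $\sigma x = \sigma(x)$ and by hypothesis $\sigma(x) \in \phi(x) = \interp{x}_\phi$. For the application case $t = t_1\ t_2$, the induction hypothesis gives $\sigma t_1 \in \interp{t_1}_\phi$ and $\sigma t_2 \in \interp{t_2}_\phi$ (the free-variable side condition on $\sigma$ transfers to the subterms), and then $\sigma(t_1\ t_2) = (\sigma t_1)(\sigma t_2)$ lies in $\interp{t_1\ t_2}_\phi$ directly from the application clause of Figure~\ref{fig:interp}, using that $\ot_b^*$ is reflexive.

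The $\lambda$-case is where I expect the only real work. Given $t = \lambda x : T. t_1$, renaming so that $x \notin \textit{dom}(\sigma)$ and $x$ does not occur free in any $\sigma(y)$, we have $\sigma t = \lambda x : T.\sigma t_1$. By the $\lambda$-clause of the interpretation and reflexivity of $\ot_b^*$, it suffices to show that for every $t'' \in \interp{T}_\phi$, the term $[t''/x](\sigma t_1)$ belongs to $\interp{t_1}_{\phi[\interp{T}_\phi/x]}$. Let $\sigma' = \sigma[t''/x]$; then $[t''/x](\sigma t_1) = \sigma' t_1$. To apply the induction hypothesis to $t_1$ with assignment $\phi' = \phi[\interp{T}_\phi/x]$ and substitution $\sigma'$, I must check that $\sigma'(y) \in \phi'(y)$ for every $y \in \textit{FV}(t_1)$: for $y = x$ this is $t'' \in \interp{T}_\phi$, which holds by choice of $t''$; for $y \neq x$ we have $y \in \textit{FV}(t)$, so $\sigma'(y) = \sigma(y) \in \phi(y) = \phi'(y)$ by the hypothesis of the lemma.

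The main obstacle, and the only point requiring care, is handling the interaction between the object-level substitution $[t''/x]$ and the meta-level substitution $\sigma$ in this $\lambda$-case, together with the corresponding update to the assignment $\phi$. The standard Barendregt-style variable convention makes $[t''/x](\sigma t_1) = (\sigma[t''/x])t_1$ valid, after which the induction hypothesis closes the case cleanly.
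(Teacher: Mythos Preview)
Your proof is correct and follows essentially the same approach as the paper: structural induction on $t$, with the variable case immediate from the hypothesis on $\sigma$, the application case by the induction hypothesis and the definition of $\interp{t_1\ t_2}_\phi$, and the $\lambda$-case handled by updating both $\sigma$ and $\phi$ at $x$ and invoking the induction hypothesis on the body. Your treatment is in fact slightly more explicit than the paper's, spelling out the variable convention and the verification that $\sigma'(y)\in\phi'(y)$ for each $y\in\textit{FV}(t_1)$.
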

\begin{proof}
The proof is by structural induction on $t$.  If $t$ is a
variable $x$, then we have $\sigma x\in\phi(x)$ by assumption.  If $t$
is of the form $\lambda x:T.t_1$, then the definition of the
interpretation of mixed terms tells us:
\[
\interp{\lambda x:T.t_1}_\phi = \ot_b^*(\{ \lambda x:T.t'\ |\ \forall t''\in\interp{T}_\phi.\ [t''/x]t'\in\interp{t_1}_\phi \})
\]
\noindent To show that $\sigma\lambda x:T.t_1$ is itself a member of the set
on the right-hand side of this equation, it suffices to consider an
arbitrary $t''\in\interp{T}_\phi$, and show $[t''/x](\sigma
t_1)\in\interp{t_1}_{\phi[\interp{T}_\phi/x]}$.  Here we can apply the
induction hypothesis for $t_1$, with $\sigma[t''/x]$ and
$\phi[\interp{T}_\phi/x]$.  The two substitutions still satisfy the
required properties.  Finally, if $t$ is of the form $t_1\ t_2$, the
result easily follows from the induction hypothesis applied to $t_1$
and also to $t_2$, and the definition of the interpretation of
applications.
\end{proof}

\begin{thm}[Normalization for Typable Terms]
For all closed standard terms $t$ and types $T$, if $t\to_a^* T$, then $t\in\textnormal{WN}$.
\end{thm}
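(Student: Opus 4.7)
The plan is to use the Abstraction Theorem (Theorem~\ref{thm:abstr}) as the main engine, together with Lemma~\ref{lem:inown} (every standard term lies in its own interpretation) and Lemma~\ref{lem3} ($\interp{T}_\phi \subseteq \textnormal{WN}$).

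First, I would lift the Abstraction Theorem from a single $\to_a$ step to the reflexive-transitive closure $\to_a^*$: by straightforward induction on the length of the reduction, $m \to_a^* m'$ implies $\interp{m}_\phi \subseteq \interp{m'}_\phi$ for every $\phi$ satisfying the free-variable side condition. The base case is trivial, and the step case chains one application of Theorem~\ref{thm:abstr} with the induction hypothesis using transitivity of set inclusion. Observe that $\to_a$ never introduces new free variables, so a single $\phi$ with $\textit{FV}(t) \subseteq \textit{dom}(\phi)$ is simultaneously admissible for every term in the reduction sequence.

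Next, I would pick any fixed assignment $\phi$ (for instance, the constant assignment sending each variable to $\textnormal{WN}$, which lies in $\mathcal{R}$ since $x \in \textnormal{WN}$ and $\textnormal{WN}$ is closed under inverse $\to_b$-reduction). Since $t$ is a closed standard term, the hypothesis of Lemma~\ref{lem:inown} is satisfied vacuously for the identity substitution, giving $t \in \interp{t}_\phi$. Applying the lifted Abstraction Theorem to the assumed reduction $t \to_a^* T$ yields $\interp{t}_\phi \subseteq \interp{T}_\phi$, so $t \in \interp{T}_\phi$. Finally, Lemma~\ref{lem3} gives $\interp{T}_\phi \subseteq \textnormal{WN}$, and we conclude $t \in \textnormal{WN}$.

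There is no genuine obstacle: the conceptual work has already been done in proving the Abstraction Theorem. The only minor bookkeeping is verifying well-definedness of $\interp{\cdot}_\phi$ at each point in the reduction sequence, which is immediate because abstract reduction does not introduce fresh free variables, and choosing a witness assignment $\phi$, which is unconstrained because $t$ is closed.
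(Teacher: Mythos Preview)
Your proposal is correct and follows essentially the same route as the paper: invoke Lemma~\ref{lem:inown} to get $t\in\interp{t}_\phi$, iterate Theorem~\ref{thm:abstr} along the reduction $t\to_a^* T$ to obtain $\interp{t}_\phi\subseteq\interp{T}_\phi$, and finish with Lemma~\ref{lem3}. The only cosmetic difference is that the paper takes $\phi=\emptyset$ (legitimate since $t$ is closed) where you instead pick the constant assignment $x\mapsto\textnormal{WN}$; both choices work for the same reason you identify.
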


\proof By Lemma~\ref{lem:inown}, we have
$t\in\interp{t}_\emptyset$.  Then by iterated application of
Theorem~\ref{thm:abstr}, we know that
$\interp{t}_\emptyset\subseteq\interp{T}_\emptyset$.  By
Lemma~\ref{lem3}, $\interp{T}_\emptyset\subseteq\textnormal{WN}$.
Putting these facts together, we get this chain of relationships,
which suffices:
\[
t \in\ \interp{t}_\emptyset\ \subseteq \ \interp{T}_\emptyset\ \subseteq\ \textnormal{WN}\eqno{\qEd}
\]

\subsection{Summary of The Standard Proof}

\newcommand{\redd}[1]{\textit{Red}_{#1}}
\newcommand{\nxt}[0]{\textit{next}}
\newcommand{\sn}[0]{\textit{SN}}

Here, we summarize Girard's proof of strong normalization, for
purposes of comparison~\cite{girard-proofs-types}.  This proof is
based on the usual judgment $\Gamma \vdash t : T$ for STLC.  One first
defines an interpretation of types:
\[
\begin{array}{lll}
t \in \redd{b} & \Leftrightarrow & t \in \sn \\
t \in \redd{T \to T'} & \Leftrightarrow & \forall t' \in \redd{T}.\ (t\ t') \in \redd{T'}
\end{array}
\]
\noindent This does not require use of a function $\phi$ as above
(though the standard proof for System F does).  For this
interpretation of types, one then proves these three properties, by
mutual structural induction on the type $T$ mentioned in all three
properties:

\begin{enumerate}[(1)]
\item $\redd{T}(t) \ \To\ \sn(t)$.
\item $\redd{T}(t)\ \To\ \redd{T}(\nxt(t))$.
\item If $t$ is neutral, then $\redd{T}(\nxt(t)) \ \To \  \redd{T}(t)$.
\end{enumerate}

\noindent A term is neutral iff it is not a $\lambda$-abstraction.
The third property implies that all the variables are in $\redd{T}$ for
every $T$.  Finally, one derives the following different theorem in
place of the Abstraction Theorem:

\begin{thm}[Reducibility]
\label{thm:red}
Suppose $\{ x_1 : T_1, \ldots, x_n : T_n \} \vdash t : T$, and
consider arbitrary $t_i\in\redd{T_i}$, for all $i\in\{1,\ldots,n\}$.
Then $[t_1/x_1, \ldots, t_n/x_n]\, t\in\redd{T}$.
\end{thm}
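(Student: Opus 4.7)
The plan is to proceed by structural induction on the typing derivation $\{x_1:T_1,\ldots,x_n:T_n\}\vdash t:T$, equivalently on $t$, using the three reducibility properties stated just before the theorem. Write $\sigma$ for the simultaneous substitution $[t_1/x_1,\ldots,t_n/x_n]$, and assume bound variables have been renamed fresh so that $\sigma$ commutes with binders. The variable and application cases are routine: if $t=x_i$ then $T=T_i$ and $\sigma t = t_i\in\redd{T_i}$ by hypothesis, and if $t = t'\,t''$ with $\Gamma\vdash t':T_2\To T$ and $\Gamma\vdash t'':T_2$, the induction hypothesis yields $\sigma t'\in\redd{T_2\To T}$ and $\sigma t''\in\redd{T_2}$, whence the definition of $\redd{T_2\To T}$ delivers $\sigma(t'\,t'')\in\redd{T}$.

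The heart of the proof is the abstraction case. Let $t = \lambda x:T_1.\,t_0$ with $T = T_1\To T_2$ and $\Gamma, x:T_1 \vdash t_0 : T_2$. We must show that for every $t'\in\redd{T_1}$, the redex $(\lambda x:T_1.\,\sigma t_0)\,t'$ lies in $\redd{T_2}$. The main induction hypothesis applied to $t_0$ under the augmented substitution $\sigma[t'/x]$ (legal because $t'\in\redd{T_1}$) handles the $\beta$-contractum $[t'/x](\sigma t_0)=\sigma[t'/x]\,t_0$. To lift this to the unreduced application I would use property (3): since the application is neutral, it suffices to check that every one-step reduct lies in $\redd{T_2}$. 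The non-$\beta$ reducts contract a redex inside $\sigma t_0$ or inside $t'$, and I would dispatch these by a subsidiary induction on the sum of $\sn$-reduction lengths of $\sigma t_0$ and $t'$; both are finite because each $t_i$ and $t'$ lies in the appropriate $\redd{T_i}\subseteq\sn$ by property (1), and property (2) keeps the reducts inside the appropriate reducibility sets so that the subsidiary induction hypothesis applies.

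The main obstacle is this abstraction case. The difficulty is that $\redd{T_1\To T_2}$ is defined in terms of behaviour under \emph{all} arguments in $\redd{T_1}$, so a direct structural induction does not close; the three properties (1)--(3) must be interleaved, with property (3) doing the delicate promotion of the neutral redex and property (1) providing the well-founded measure for the subsidiary induction. Once the abstraction case is in hand the theorem follows at once, and strong normalization for every typable closed term is a corollary obtained by instantiating $\sigma$ to map each $x_i$ to itself, which is legal because every variable lies in $\redd{T}$ for every $T$ as an easy consequence of property (3) applied to the neutral term $x_i$ with no reducts.
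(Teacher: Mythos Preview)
Your proposal is the standard Girard argument and is correct. Note, however, that the paper does not actually supply a proof of this theorem: it appears in the subsection summarizing Girard's proof from \emph{Proofs and Types} purely for comparison with the paper's own rewriting-based development, and the paper simply states the theorem and moves on to the corollary. So there is no paper proof to compare against beyond the citation to Girard, and your outline is precisely that cited argument---structural induction on the typing derivation, with the abstraction case handled by a subsidiary well-founded induction on the combined $\sn$-height of body and argument, using property~(3) to promote the neutral redex and property~(2) to keep the inner reducts reducible.
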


\noindent Now we can obtain as a corollary that $\Gamma \vdash t : T$
implies $t\in\sn$, since $\redd{b}\subseteq \sn$ by the first property
above, and a substitution $\sigma$ replacing $x$ by $x$ satisfies the
required condition, since all variables are included in all sets
$\redd{T}$.

\subsection{Discussion}

The main difference in the rewriting-based development and the
standard one is in deriving the Abstraction Theorem.  The form of the
theorem is completely different from Theorem~\ref{thm:red}.  One nice
technical feature is that for the proof of the Abstraction Theorem, we
did not need to apply a substitution to terms inhabiting
interpretations of types, as we did for Theorem~\ref{thm:red}.  We
still needed to use the idea of such a substitution, but it appeared
only in a simple helper lemma, namely Lemma~\ref{lem:inown}.  This is
an advantage of the rewriting-based version, since the substitution
does not clutter up the proof of the central result.  One disadvantage
of the rewriting-based version is that we needed the function $\phi$
and Lemma~\ref{lem:semsubst} -- but this is not such a significant
disadvantage, since those devices are needed when we move to System F
in the standard development anyway.

\section{Conclusion}
\label{sec:conclusion}

We have seen how rewriting techniques can be used to develop the
meta-theory of simple types.  Typing is treated as a small-step
abstract reduction relation, and type safety, based on type
preservation and progress theorems, can be established by analysis of
the interactions between abstract and concrete reduction steps.  A
crucial ingredient of our approach to type preservation, as defined by
Theorem~\ref{thmtp}, is to have a confluent abstract reduction
relation.  For simply typed lambda calculus, this was a trivial
matter, but we saw a more complex example, where applying automated
confluence-checking tools developed in the term-rewriting community
was able to automate this part of the type preservation proof.
Confluence of the combination of abstract and concrete reduction for
typable terms is an easy corollary of type preservation
(Theorem~\ref{thmcr}).  We have also seen how to adapt a standard
proof of normalization for simply typed terms, for the rewriting
approach to typing.  For this proof, mixed terms are interpreted as
sets of standard terms, and the crucial insight is embodied in the
Abstraction Theorem, which shows that those sets are enlarged by
reduction of the corresponding mixed terms.

There are many avenues for future work.  First, the rewriting approach
should be applied to more advanced type systems, including ones with
impredicative polymorphism.  Dependent type systems pose a particular
challenge, because from the point of view of abstract reduction,
$\Pi$-bound variables must play a dual role.  When computing a
dependent function type $\Pi x:T.\ T'$ from an abstraction $\lambda x
: T. t$, we may need to abstract $x$ to $T$, as for STLC; but we may
also need to leave it unabstracted, since with dependent types, $x$ is
allowed to appear in the range type $T'$.  It would also be
interesting to see if there are consequences of the rewriting approach
to typing when applied to proofs via the Curry-Howard isomorphism.
Theorem~\ref{thm:abstr} (Abstraction) shows how the set of proofs in
the meaning of a mixed proof term (part proof and part formula)
increases as the term is abstracted.  Certainly, the present methods
yield the syntactic capability to incrementally transform a proof to
the theorem it proves.  This could already be valuable in practice for
efficient proof checking, for example of large proofs produced by SAT
or SMT solvers (cf.~\cite{stump+12}).

It would be interesting to go further in automating proofs of type
preservation based on the rewriting approach.  While the Programming
Languages community has invested substantial effort in recent years on
computer-checked proofs of properties like type safety for programming
languages (initiated particularly by the POPLmark
Challenge~\cite{poplmark}), there is relatively little work on fully
automatic proofs of type preservation (an example
is~\cite{schurmann+98}).  The rewriting approach could contribute to
filling that gap, since the methods we used above for analyzing
interactions of abstract and concrete steps to prove type preservation
are similar to those used for proving confluence of combined
reduction.

Our longer term goal is to use this approach to design and analyze
type systems for symbolic simulation.  In program verification tools
like \textsc{Pex} and \textsc{KeY}, symbolic simulation is a central
component~\cite{KeyBook2007,Tillmann+2005}.  But these systems do not
seek to prove that their symbolic-simulation algorithms are correct.
Indeed, the authors of the \textsc{KeY} system argue against expending
the effort to do this~\cite{Beckert+06}.  The rewriting approach
promises to make it easier to relate symbolic simulation, viewed as an
abstract reduction relation, with the small-step operational
semantics.

\textbf{Acknowledgments.} We thank the anonymous LMCS reviewers for
their very detailed comments, and a number of technical suggestions
which have greatly improved this paper; and also participants of the
RTA 2011 conference for their helpful feedback and suggestions about
this work.

%\nocite{AGM92}
%\bibliographystyle{plain}
%\bibliography{paper}

\end{document}